\newtheorem{theorem}{Theorem}[section]
\newtheorem{lemma}[theorem]{Lemma}
\newtheorem{claim}[theorem]{Claim}
\newtheorem{corollary}[theorem]{Corollary} 
\newtheorem{proposition}[theorem]{Proposition}
\theoremstyle{definition}
\newcommand{\FO}{\mbox{FO}}
\newcommand{\bbP}{\ensuremath{\mathbb{P}}}  % Mathblackboard P (set of unary prediates)
\newcommand{\bP}{\ensuremath{\mathbf{P}}}  % Mathbold P (set of unary prediates)
\newcommand{\bR}{\ensuremath{\mathbf{R}}}  % Mathbold R (set of binary prediates)
\newcommand{\bC}{\ensuremath{\mathbf{C}}}  % Mathbold C (set of c-terms
\newcommand{\cF}{\ensuremath{\mathcal{F}}}  % Caligraphic F (fragments of FOL)
\newcommand{\cS}{\ensuremath{\mathcal{S}}}  % Caligraphic S (syllogistic)
\newcommand{\cR}{\ensuremath{\mathcal{R}}}  % Caligraphic R (rel syllogistic)
\newcommand{\fA}{\ensuremath{\mathfrak{A}}}  % Fraktur A (structure)
\newcommand{\fB}{\ensuremath{\mathfrak{B}}}  % Fraktur B (structure)
\newcommand{\fC}{\ensuremath{\mathfrak{C}}}  % Fraktur C (structure)
\newcommand{\fM}{\ensuremath{\mathfrak{M}}}  % Fraktur C (structure)
\newcommand{\semantics}[1]{#1^{\fA}}
\renewcommand{\phi}{\varphi}
\newcommand{\X}{{\sf X}}%{{\cal X}}
\renewcommand{\R}{{\sf R}} %{{\cal X}}
\renewcommand{\S}{{\sf S}} %{{\cal X}}
\newcommand{\dbar}{\bar{d}}
\newcommand{\mbar}{\bar{m}}
\newcommand{\nbar}{\bar{n}}
\newcommand{\proves}{\vdash}
\newcommand{\genProves}{\mid \hspace{-0.06cm}\sim}
\newcommand{\rbar}{\bar{r}}
\newcommand{\obar}{\bar{o}}
\newcommand{\pbar}{\bar{p}}
\newcommand{\rem}[1]{\relax}
\newcommand{\set}[1]{\{ #1 \}}
\newcommand{\pair}[1]{\langle #1 \rangle}
\newcommand{\NLOGSPACE}{\textsc{NLogSpace}}
\newcommand{\NEXPTIME}{\textsc{NExpTime}}
\newcommand{\EXPTIME}{\textsc{ExpTime}}
\newcommand{\PTIME}{\textsc{PTime}}
\newcommand{\NPTIME}{\textsc{NPTime}}
\title{Logics for  the Relational Syllogistic} \author{
Ian Pratt-Hartmann \and Lawrence S.~Moss} \date{}
\begin{document}
\maketitle
\begin{abstract}
\noindent
The Aristotelian syllogistic cannot account for the validity of many
inferences involving relational facts.  In this paper, we investigate
the prospects for providing a relational syllogistic.  We identify
several fragments based on (a) whether negation is permitted on all
nouns, including those in the subject of a sentence; and (b) whether
the subject noun phrase may contain a relative clause.  The logics we
present are extensions of the classical syllogistic, and
we pay special attention to the question of whether \emph{reductio ad
  absurdum} is needed.  Thus our main goal is to derive results on the
existence (or non-existence) of syllogistic proof systems for
relational fragments.  We also determine the computational complexity
of all our fragments.
\end{abstract}

\tableofcontents

\section{Introduction}
\label{sec:introduction}
Augustus de Morgan famously observed that the Aristotelian syllogistic
cannot account for the validity of even the most elementary inferences
involving relational facts, for example (de
Morgan~\cite{logic:deMorgan47}, p.~114):
\begin{equation}
\mbox{
\begin{minipage}{10cm}
\begin{tabbing}
\underline{\sf Every man is an animal \hspace{1.5cm}}\\
{\sf He who kills a man kills an animal}.
\end{tabbing}
\end{minipage}
}
\label{eq:arg1}
\end{equation}
De Morgan was certainly not the first to notice the problem of
relational inference: for example, it is given prominent treatment in
the Port Royal Logic (Arnauld~\cite{logic:arnauld62}, Ch.~III). But
whereas the Port-Royalists took such inferences to demonstrate at most
the occasional need for ingenious reformulation, de Morgan saw in them
clear evidence that binary relations must be governed by logical
principles lying outside the scope of the traditional syllogistic, and
made one of the first serious attempts to extend syllogism-like
principles to relational judgments (see de
Morgan~\cite{deMorgan60}). Ultimately, of course, de Morgan's attempt
was to be without issue: the methods he employed were rendered
obsolete at the end of the nineteenth century by the rise of predicate
logic, which provided an expressive yet easily-understood apparatus
for formalizing relational facts.  (For a modern reconstruction of de
Morgan's work, see Merrill~\cite{logic:merrill90}.)

From a computational point of view, however, expressive power is a
double-edged sword: roughly speaking, the more expressive a language
is, the harder it is to compute with. In the last decade, this
trade-off has led to renewed interest in {\em inexpressive} logics, in
which the problem of determining entailments is algorithmically
decidable with (in ideal cases) low complexity. The logical fragments
subjected to this sort of complexity-theoretic analysis have naturally
enough tended to be those which owe their salience to the syntax of
first-order logic, for example: the two-variable fragment, the guarded
fragment, and various quantifier-prefix fragments. But of course it is
equally reasonable to consider instead logics defined in terms of the
syntax of {\em natural languages}. Perhaps, therefore, it is time to
return to where de Morgan and his contemporaries left off, but with a
computational agenda in mind. As we shall see, this leads to a rather
different view of the syllogistic than that defended by certain
latter-day advocates of term-logic.

This paper investigates the following six logics: (i) $\cS$, which
corresponds to the traditional syllogistic; (ii) $\cS^\dagger$, which
extends $\cS$ with negated nouns such as {\sf non-man} or {\sf
non-animal}; (iii) $\cR$, which extends $\cS$ with transitive verbs
such as {\sf kills}; (iv) $\cR^\dagger$, which extends $\cS$ with both
these constructions; (v) $\cR^*$, which extends $\cR$ by allowing
subject noun phrases to contain relative clauses as in the conclusion
of Argument~\eqref{eq:arg1}; and (vi) $\cR^{*\dagger}$, which extends
$\cR^*$ with negated nouns. We derive results on the existence (or
non-existence) of syllogistic proof-systems for these logics, paying
particular regard to the need for the rule of {\em reductio ad
absurdum}.  Specifically, we present sound and complete syllogistic
systems for both $\cS$ and $\cS^\dagger$ that do not employ {\em
reductio} (we call such systems {\em direct syllogistic systems}).  We
then show that, by contrast, there is no sound and complete direct
syllogistic system for the fragment $\cR$; however, we do present a
direct syllogistic system for this fragment that is sound and {\em
refutation-complete} (i.e.~becomes complete if {\em reductio ad
absurdum} is allowed as a single, final step).  We then consider {\em
indirect syllogistic systems}---i.e.~those in which the rule of {\em
reductio ad absurdum} may be employed at any point during the
derivation. We show that---unless \PTIME$=$\NPTIME---there is no
direct syllogistic system that is sound and refutation-complete for
the fragment $\cR^*$; however, we do provide an indirect syllogistic
system that is sound and complete for this fragment.  Finally, we show
that neither $\cR^\dagger$ nor $\cR^{*\dagger}$ has even an indirect
syllogistic system that is sound and complete.  We also obtain results
on the computational complexity of determining validity in the above
fragments, and show how these results are related to the existence of
sound and complete proof-systems (of various kinds) for the fragments
in question. Specifically, we show that the problem of determining the
validity of sequents in any of the fragments $\cS$, $\cS^\dagger$ or
$\cR$ is \NLOGSPACE-complete; the problem of determining the validity
of sequents in the fragment $\cR^*$ is co-\NPTIME-complete; and the
problem of determining the validity of sequents in either of the
fragments $\cR^\dagger$ or $\cR^{*\dagger}$ is \EXPTIME-complete.
Thus, although we are heartened to learn of interest in relational
syllogisms going back to de Morgan, our contribution should not be
read as a historical reconstruction. It is a product of our own time, 
different from de Morgan's in both motivation and outcome.

\section{Preliminaries}
\label{sec:preliminaries}
\subsection{Some syllogistic fragments}
\label{section-fragments}
\paragraph{The fragments $\cS$ and $\cS^\dagger$}
Fix a countably infinite set $\bP$. We may assume $\bP$ to contain
various English common count-nouns such as {\sf man}, {\sf animal}
{\em etc}.  A {\em unary atom} is an element of $\bP$; a {\em unary
  literal} is an expression of either of the forms $p$ or $\bar{p}$,
where $p$ is a unary atom. A unary literal is called {\em positive}
if it is a unary atom; otherwise, {\em negative}. We use the (possibly
decorated) variables $o$, $p$, $q$ to range over unary atoms, and $l$,
$m$, $n$ to range over unary literals. With these conventions, an
$\cS$-{\em formula} is an expression of any of the forms
\begin{equation}
\exists(p,l), \hspace{1cm}
\exists(l,p), \hspace{1cm}
\forall(p,l), \hspace{1cm}
\forall(l,\bar{p}).
\label{eq:syntaxS}
\end{equation}
We provide English glosses for $\cS$-formulas as follows:
\begin{equation}
\mbox{
\begin{minipage}{10cm}
\begin{tabbing}
$\forall(p,q)$ \hspace{0.5cm} \= {\sf Every $p$ is a $q$}
  \hspace{1cm} \= 
$\forall(p,\bar{q})$  \hspace{0.5cm} \= {\sf No $p$ is a $q$} \\
â$\exists(p,q)$ \> {\sf Some $p$ is a $q$} \> 
$\exists(p,\bar{q})$ \> {\sf Some $p$ is not a $q$}.
\end{tabbing}
\end{minipage}
}
\label{eq:gloss}
\end{equation}
Note that $\cS$ contains the formulas $\exists(\bar{p}, q)$ and
$\forall(\bar{p}, \bar{q})$, which are not glossed
in~\eqref{eq:gloss}; however, the semantics given below ensure that
these formulas are logically equivalent to $\exists(q,\bar{p})$ and
$\forall(q,p)$, respectively.  We may regard $\cS$ as the language of
the traditional syllogistic.

The syntax for $\cS$ given above suggests a natural generalization.
An $\cS^\dagger$-{\em formula} is an expression of either of the forms
\begin{equation}
ˆ\exists(l,m), \hspace{1cm}
\forall(l,m).
\label{eq:syntaxS+}
\end{equation}
We provide English glosses for $\cS^\dagger$-formulas in the same way
as for $\cS$, except that we sometimes require {\em negated} subjects:
\begin{center}
\begin{minipage}{11cm}
$\exists(\bar{p},\bar{q})$ \hspace{0.5cm} {\sf Some non-$p$ is not a $q$}
\hspace{.7cm}
$\forall(\bar{p},q)$ \hspace{0.5cm} {\sf Every non-$p$ is a $q$}.
\end{minipage}
\end{center}
We may regard $\cS^\dagger$ the extension of the traditional
syllogistic with noun-level negation.

\paragraph{The fragments $\cR$ and $\cR^\dagger$}
Now fix a countably infinite set $\bR$, disjoint from $\bP$.  We may
assume $\bR$ to contain various English transitive verbs such as {\sf
  kill}, {\sf admire} {\em etc}.  A {\em binary atom} is an element of
$\bR$; a {\em binary literal} is an expression of either of the forms
$r$ or $\bar{r}$, where $r$ is a binary atom.  A binary literal is
called {\em positive} if it is a binary atom; otherwise, {\em
  negative}.  We use the (possibly decorated) variables $r$, $s$ to
range over binary atoms, and $t$ to range over binary literals.  With
these conventions, a {\em c-term} is an expression of any of the forms
\begin{equation}
l, \hspace{1cm} \exists(p,t), \hspace{1cm} \forall(p,t).
\label{eq:syntaxC}
\end{equation}
Thus, all literals are, by definition, c-terms.  We use the variables
$c$, $d$ to range over c-terms.  With this convention, an $\cR$-{\em
  formula} is an expression of any of the forms
\begin{equation}
\exists(p,c), \hspace{1cm}
\exists(c,p), \hspace{1cm}
\forall(p,c), \hspace{1cm}
\forall(c,\bar{p});
\label{eq:syntaxR}
\end{equation}
Thus, all $\cS$-formulas are, by definition, $\cR$-formulas.

We gloss (non-literal) c-terms using complex noun phrases,
as follows:
\begin{align}
& \exists(q,r) \hspace{0.0cm} & & \mbox{\sf thing which $r$s some $q$} 
\label{eq:conceptGlossPos}
& & & \forall(q,r) & & & \mbox{\sf thing which $r$s every $q$}\\
& \exists(q,\bar{r}) & & \mbox{\sf thing which does not $r$ every $q$}
& & & \forall(q,\rbar) & & & \mbox{\sf thing which $r$s no $q$}.
\label{eq:conceptGlossNeg}
\end{align}
And we gloss $\cR$-formulas involving such c-terms accordingly, thus:
\begin{center}
\begin{minipage}{10cm}
\begin{tabbing}
$\forall(p,\exists(q,r))$ \hspace{0.0cm} \= {\sf Every $p$ $r$s some $q$}
  \hspace{0.5cm} \=
$\forall(p,\exists(q,\bar{r}))$  \hspace{0.0cm} \= {\sf No $p$ $r$s every $q$} \\
$\exists(p,\exists(q,r))$ \> {\sf Some $p$ $r$s some $q$} \>
$\exists(p,\exists(q,\bar{r}))$  \> {\sf Some $p$ does
    not $r$ every $q$} \\
$\forall(p,\forall(q,r))$  \> {\sf Every $p$ $r$s every $q$} \> 
$\forall(p,\forall(q,\bar{r}))$  \> {\sf No $p$ $r$s any $q$} \\
$\exists(p,\forall(q,r))$ \> {\sf Some $p$ $r$s every $q$} \>
$\exists(p,\forall(q,\bar{r}))$   \> {\sf Some $p$ $r$s no $q$}.
\end{tabbing}
\end{minipage}
\end{center}
In these glosses, quantifiers in subjects are assumed to have wide
scope; those in objects, narrow scope. Also, the {\sf not} in {\sf
  Some $p$ does not $r$ every $q$} is assumed to scope over the direct
object.  Again, formulas of the forms $\exists(c,p)$ and
$\forall(c,\bar{p})$, which are not glossed here, will turn out to be
equivalent to formulas which are. We may regard $\cR$ as the language
of the {\em relational syllogistic}.

Let us now extend $\cR$ with noun-level negation just as we did with
$\cS$.  An {\em e-term} is an expression of any of the forms
\begin{equation}
l, \hspace{1cm} \exists(l,t), \hspace{1cm} \forall(l,t).
\label{eq:syntaxP}
\end{equation}
We gloss non-literal $e$-terms along the
lines of~\eqref{eq:conceptGlossPos} and~\eqref{eq:conceptGlossNeg}, but
with (possibly) negated direct objects, for example:
\begin{center}
\begin{minipage}{11cm}
$\forall(\bar{q},r)$ \hspace{0.25cm} {\sf thing which $r$s every
    non-$q$}
\hspace{0.5cm}
$\forall(\bar{q},\bar{r})$ \hspace{0.25cm} {\sf thing which $r$s 
    no non-$q$}.
\end{minipage}
\end{center}
(Another way to gloss the latter term would be 
{\sf thing which only $r$s $q$s}.)
Thus, all c-terms are, by definition, e-terms.  We use the variables
$e$, $f$ to range over e-terms. With this convention, an
$\cR^\dagger$-{\em formula} is an expression of any of the forms
\begin{equation}
\exists(l,e), \hspace{1cm}
\exists(e,l), \hspace{1cm}
\forall(l,e), \hspace{1cm}
\forall(e,l),
\label{eq:syntaxR+}
\end{equation}
with the obvious English glosses. Note however that noun-level
negation may be required, both in subjects and direct-objects:
\begin{center}
\begin{minipage}{7cm}
$\exists(\bar{p}, \forall(\bar{q},r))$ \hspace{0.5cm} 
{\sf Some non-$p$ $r$s every non-$q$}.
\end{minipage}
\end{center}
We may regard $\cR^\dagger$ as the extension of the relational
syllogistic with noun-level negation.

\paragraph{The fragments $\cR^*$ and $\cR^{*\dagger}$}
The reader may have noticed that neither $\cR$ nor $\cR^\dagger$ can express
the conclusion of Argument~\eqref{eq:arg1}. We now rectify this
matter by introducing two additional fragments which can.

It will be convenient to extend the `bar'-notation (as in $\bar{p}$
and $\bar{r}$) to all e-terms.  If $l = \bar{p}$ is a negative unary
literal, then we take $\bar{l}$ to denote $p$; and similarly for
binary literals.  If $e$ is an e-term of the form $\forall(l,t)$, we
denote by $\bar{e}$ the corresponding e-term $\exists(l,\bar{t})$; if
$e$ is an e-term of the form $\exists(l,t)$, we denote by $\bar{e}$
the corresponding e-term $\forall(l,\bar{t})$. Thus, for
all e-terms $e$, we have $\bar{\bar{e}} = e$.

Recalling that a c-term $c$ is an expression of any of the forms $l$,
$\exists(p,t)$ or $\forall(p,t)$, we call $c$ {\em positive} if the
literal $l$ or $t$ is positive; otherwise, negative. 
Evidently, if $c$ is a positive c-term, then $\bar{c}$ is a negative
c-term, and vice-versa.  We now define an $\cR^*$-{\em formula} to be
an expression of any of the forms
\begin{equation}
\exists(c^+,d), \hspace{1cm}
\exists(d,c^+), \hspace{1cm}
\forall(c^+,d), \hspace{1cm}
\forall(d,\overline{c^+}),
\label{eq:syntaxR*}
\end{equation}
where $c^+$ ranges over positive c-terms and $d$ over all c-terms
(positive or negative).  These forms may then be glossed by combining
the glosses~\eqref{eq:gloss} and~\eqref{eq:conceptGlossPos} in the
obvious way.  Thus, for example, the conclusion of
Argument~\eqref{eq:arg1} may be formalized in $\cR^*$ as
\begin{center}
\begin{minipage}{12cm}
\begin{tabbing}
$\forall(\exists(\mbox{man},\mbox{kill}),
         \exists(\mbox{animal},\mbox{kill}))$\\
{\sf Everything which kills a man (is a thing which) kills an animal}.
\end{tabbing}
\end{minipage}
\end{center}
It is straightforward to check that, in providing English glosses for
$\cR^*$-formulas, noun-level negation (i.e.~expressions such as {\sf
  non-man}) is not required.  The study of $\cR^*$ is motivated in
part by its close connection to the system described in
McAllester and Givan~\cite{logic:mcA+G92}, and in part by the fact
that, from both a proof-theoretic and complexity-theoretic point of
view, it occupies an intermediate position, in a sense that will be
made precise below.

We come finally to the most general of the fragments studied
here. Recalling that the variables $e$ and $f$ range over e-terms, an
$\cR^{*\dagger}$-{\em formula} is an expression of either of the forms
\begin{equation}
\exists(e,f), \hspace{1cm}
\forall(e,f).
\label{eq:syntaxR*+}
\end{equation}
These formulas receive the obvious glosses, for example:
\begin{center}
\begin{minipage}{12cm}
\begin{tabbing}
$\forall(\exists(\overline{\mbox{animal}},\mbox{kill}),
         \exists(\overline{\mbox{man}},\mbox{kill}))$\\
{\sf Everything which kills a non-animal (is a thing which) kills a non-man}.
\end{tabbing}
\end{minipage}
\end{center}
We may regard $\cR^{*\dagger}$ as the result of extending $\cR^*$ with
noun-level negation.

We denote the set of $\cS$-formulas simply by $\cS$, and similarly for
$\cS^\dagger$, $\cR$, $\cR^\dagger$, $\cR^*$ and $\cR^{*\dagger}$.  Many of the results
established below apply to all of these fragments. Accordingly, we say
that a {\em syllogistic fragment} is any of the fragments $\cS$,
$\cS^\dagger$, $\cR$, $\cR^\dagger$, $\cR^*$ or $\cR^{*\dagger}$, and we use the
variable $\cF$ to range over the syllogistic fragments.
Fig.~\ref{fig:overview} gives an overview of the syllogistic
fragments, and Fig.~\ref{fig:syntax} a quick reference guide to their
syntax.

It will be convenient to extend the bar-notation to formulas.  If
$\phi$ is an $\cR^{*\dagger}$-formula of the form $\forall(e,f)$, we
denote by $\bar{\phi}$ the $\cR^{*\dagger}$-formula
$\exists(e,\bar{f})$; if $\phi$ is an $\cR^{*\dagger}$-formula of the
form $\exists(e,f)$, we denote by $\bar{\phi}$ the
$\cR^{*\dagger}$-formula $\forall(e,\bar{f})$. It is easy to verify
that, $\bar{\bar{\phi}} = \phi$, and that, for any syllogistic
fragment $\cF$, $\phi \in \cF$ implies $\bar{\phi} \in \cF$.
\begin{figure}
\begin{center}
\begin{minipage}{11cm}
\begin{minipage}{3cm}
\begin{picture}(60,70)
\put(40,0){$\cS$}
\put(20,20){$\cR$}
\put(0,40){$\cR^*$}
\put(60,20){$\cS^\dagger$}
\put(40,40){$\cR^\dagger$}
\put(20,60){$\cR^{*\dagger}$}

\put(48,8){\rotatebox{45}{$\subseteq$}}
\put(28,28){\rotatebox{45}{$\subseteq$}}
\put(8,48){\rotatebox{45}{$\subseteq$}}

\put(28,12){\rotatebox{315}{$\supseteq$}}
\put(8,32){\rotatebox{315}{$\supseteq$}}

\put(48,32){\rotatebox{315}{$\supseteq$}}
\put(28,52){\rotatebox{315}{$\supseteq$}}
\end{picture}
\end{minipage}
\begin{minipage}{6cm}
{\small 
\begin{tabular}{|l|l|}
\hline
$\cS$ & syllogistic\\
$\cS^\dagger$ & syllogistic with noun-negation\\
$\cR$ & relational syllogistic\\
$\cR^\dagger$ &  relational syllogistic with noun-negation\\
$\cR^*$ & relational syllogistic with complex subjects\\
$\cR^{*\dagger}$ &  relational syllogistic with complex subjects\\
& and noun-negation\\
\hline
\end{tabular}
}
\end{minipage}
\end{minipage}
\end{center}

\vspace{0.25cm}

\caption{The six syllogistic fragments: $\cS$, $\cS^\dagger$, $\cR$,
  $\cR^\dagger$, $\cR^*$ and $\cR^{*\dagger}$.}
\label{fig:overview} 
\end{figure}
\begin{figure}
\begin{center}
\begin{tabular}{l|l|l}
Expression & Variables & Syntax \\
\hline
unary atom & $o$, $p$, $q$ & \ \\
binary atom & $r$, $s$ & \ \\
unary literal & $l$, $m$, $n$ & $p \  \mid \   \bar{p}$ \\
binary literal & $t$ & $r \  \mid \   \bar{r}$ \\
positive c-term & $b^+$, $c^+$ & 
          $p \  \mid \  \exists(p,r) \  \mid \  \forall(p,r)$\\
c-term & $c$, $d$ & 
          $l \  \mid \  \exists(p,t) \  \mid \  \forall(p,t)$\\
e-term & $e$, $f$ & 
          $l \  \mid \  \exists(l,t) \  \mid \  \forall(l,t)$\\
$\cS$-formula & \ & 
   $\exists(p,l) \  \mid \  \exists(l,p) \  \mid \ 
    \forall(p,l) \  \mid \  \forall(l,\bar{p})$\\
$\cS^\dagger$-formula & \ &  $\exists(l,m) \  \mid \  \forall(l,m)$\\
$\cR$-formula & \ & 
   $\exists(p,c) \  \mid \  \exists(c,p) \  \mid \ 
    \forall(p,c) \  \mid \  \forall(c,\bar{p})$\\
$\cR^\dagger$-formula & \ & 
   $\exists(l,e) \  \mid \  \exists(e,l) \  \mid \ 
    \forall(l,e) \  \mid \  \forall(e,l)$\\
$\cR^*$-formula & \ & 
   $\exists(c^+,d) \  \mid \  \exists(d,c^+) \  \mid \ 
    \forall(c^+,d) \  \mid \  \forall(d,\overline{c^+})$\\
$\cR^{*\dagger}$-formula & \ &  $\exists(e,f) \  \mid \  \forall(e,f)$
\end{tabular}
\end{center}
\caption{Syntax of syllogistic fragments: quick reference guide.}
\label{fig:syntax}
\end{figure}
\paragraph{Semantics}
We now provide semantics for the fragment $\cR^{*\dagger}$ (and hence for
any syllogistic fragment).  A {\em structure} $\fA$ is a triple
$\langle A, \{ p^\fA \}_{p \in \bP}, \{ r^\fA \}_{r \in \bR} \rangle$,
where $A$ is a non-empty set, $p^\fA \subseteq A$, for every $p \in
\bP$, and $r^\fA \subseteq A^2$, for every $r \in \bR$. The set $A$ is
called the {\em domain} of $\fA$. Given a structure $\fA$, we extend
the maps $p \mapsto p^\fA$ and $r \mapsto r^\fA$ to all e-terms by
setting
\begin{align*}
\bar{p}^\fA & = A \setminus p^\fA\\
\bar{r}^\fA & = A^2 \setminus r^\fA\\
\exists(l,t)^\fA & = \{a \in A \mid \mbox{$\langle a,b \rangle \in t^\fA$
                       for some $b \in l^\fA$}\}\\ 
\forall(l,t)^\fA & = \{a \in A \mid \mbox{$\langle a,b \rangle \in t^\fA$ 
                        for all  $b \in l^\fA$}\}.
\end{align*}
We define the truth-relation $\models$ between structures and
$\cR^{*\dagger}$-formulas by declaring $\fA \models \forall(e,f)$ if
and only if $e^\fA \subseteq f^\fA$, and $\fA \models \exists(e,f)$ if
and only if $e^\fA \cap f^\fA \neq \emptyset$.  If $\Theta$ is a set
of formulas, we write $\fA \models \Theta$ if, for all $\theta \in
\Theta$, $\fA \models \theta$. Of course, this defines the
truth-relation for formulas of any syllogistic fragment $\cF$.

A formula $\theta$ is {\em satisfiable} if there exists $\fA$ such
that $\fA \models \theta$; a set of formulas $\Theta$ is {\em
  satisfiable} if there exists $\fA$ such that $\fA \models
\Theta$.  If, for all structures $\fA$, $\fA \models \Theta$ implies
$\fA \models \theta$, we write $\Theta \models \theta$. We take it as
uncontroversial that 
$\Theta\models \theta$ constitutes a rational reconstruction of the
pre-theoretic judgment that a conclusion $\theta$ may be validly
inferred from premises $\Theta$. For example, the valid argument
\begin{equation*}
\mbox{
\begin{minipage}{10cm}
\begin{tabbing}
{\sf Some artist is a beekeeper}\\
{\sf Every artist is a carpenter}\\
\underline{\sf No beekeeper is a dentist \hspace{0.9cm}}\\
{\sf Some carpenter is not a dentist}
\end{tabbing}
\end{minipage}
}
\end{equation*}
corresponds to the valid sequent of $\cS$-formulas:
\begin{multline}
\{
\exists (\mbox{artist},\mbox{beekeeper}),
\forall (\mbox{artist},\mbox{carpenter}),\\
\forall (\mbox{beekeeper},\overline{\mbox{dentist}})\}
\models
\exists (\mbox{carpenter},\overline{\mbox{dentist}}).
\label{eq:seq1}
\end{multline}
Likewise, the valid argument
\begin{equation*}
\mbox{
\begin{minipage}{10cm}
\begin{tabbing}
{\sf Some artist hates some artist}\\
\underline{\sf No beekeeper hates any beekeeper}\\
{\sf Some artist is not a beekeeper}
\end{tabbing}
\end{minipage}
}
\end{equation*}
corresponds to the valid sequent of $\cR$-formulas:
\begin{multline}
\{
\exists(\mbox{artist}, \exists(\mbox{artist},\mbox{hate})),\\
\forall(\mbox{beekeeper},\forall(\mbox{beekeeper}, \overline{\mbox{hate}})) \}
\models
\exists (\mbox{artist},\overline{\mbox{beekeeper}}).
\label{eq:seq2}
\end{multline}
We follow modern practice in taking universal quantification not to
carry existential commitment: thus $\{ \forall(p,q) \} \not \models
\exists(p,q)$.

\paragraph{Absurdity, negation and identifications}
A simple check shows that, for any structure $\fA$ and any e-term $e$,
${\bar{e}}^\fA = A \setminus e^\fA$; hence, $\fA \not \models
\exists(e,\bar{e})$. We refer to a formula of this form as an {\em
  absurdity}.  Even the smallest syllogistic fragment, $\cS$, contains
absurdities, namely, those of the form $\exists(p,\bar{p})$.  Where
the fragment $\cF$ is clear from context, we write $\bot$
indifferently to denote any absurdity in $\cF$.  

For any structure $\fA$ and any $\cR^{*\dagger}$-formula $\phi$, $\fA
\models \phi$ if and only if $\fA \not \models \bar{\phi}$. Also,
$\phi$ and $\bar{\phi}$ belong to the same syllogistic fragments.
Thus, if $\cF$ is a syllogistic fragment and $\phi$ an $\cF$-formula,
we may regard $\bar{\phi}$ as the negation of $\phi$.

For any structure $\fA$ and any e-terms $e$ and $f$, $\fA \models
\exists(e,f)$ if and only if $\fA \models \exists(f,e)$.  Also, these
formulas belong to the same syllogistic fragments.  In the sequel,
therefore, we identify such pairs of formulas, silently transforming
one to the other as necessary.  Similarly, for the pair of formulas
$\forall(e,f)$ and $\forall(\bar{f},\bar{e})$. These identifications
make no essential difference to the results derived below on
syllogistic proof-systems; however, they greatly simplify their
presentation and analysis.

\subsection{Syllogistic rules and \emph{reductio ad absurdum}}
\label{section-refutation-completeness}
Let $\cF$ be a syllogistic fragment. A {\em derivation relation}
$\genProves$ in $\cF$ is a subset of $\bbP(\cF) \times \cF$,
where $\bbP(\cF)$ is the power set of $\cF$.  For
readability, we write $\Theta \genProves \theta$ instead of $\langle
\Theta, \theta \rangle \in \ \genProves$.  We say that $\genProves$ is
{\em sound} if $\Theta \genProves \theta$ implies $\Theta
\models \theta$, and {\em complete} ({\em for} $\cF$)
if $\Theta \models \theta$ implies
$\Theta \genProves \theta$.  A set $\Theta$ of $\cF$-formulas is
\emph{inconsistent} ({\em with respect to} $\genProves$) if $\Theta
\genProves \bot$ for some absurdity $\bot \in \cF$; 
otherwise, \emph{consistent}. A weakening of
completeness called \emph{refutation-completeness} will prove
important in the sequel: $\genProves$ is {\em refutation-complete} if
all unsatisfiable sets $\Theta$ are inconsistent.  Completeness
trivially implies refutation-completeness, but not conversely.  We are
primarily interested in derivation relations induced by two different
sorts of deductive system: direct syllogistic systems and indirect
syllogistic systems. These we now proceed to define.
\paragraph{Direct syllogistic systems}
Let $\cF$ be a syllogistic fragment. We employ the following
terminology.  A {\em syllogistic rule} (sometimes, simply: {\em rule}) in
$\cF$ is a pair $\Theta/\theta$, where $\Theta$ is a finite set
(possibly empty) of $\cF$-formulas, and $\theta$ an $\cF$-formula. We
call $\Theta$ the {\em antecedents} of the rule, and $\theta$ its {\em
  consequent}. The rule $\Theta/\theta$ is {\em sound} if $\Theta
\models \theta$.  We generally display rules in `natural-deduction'
style. For example,
\begin{equation}
\begin{array}{ll}
\infer{\exists (p,o)}
                   {\forall (q,o) & 
             \qquad \exists (p,q)}    
\hspace{3cm} & 
\infer{\exists (p,\bar{o})}
                   {\forall (q,\bar{o}) & 
             \qquad \exists (p,q)}    
\end{array}
\label{eq:dariiFerio}
\end{equation}
where $p$, $q$ and $o$ are unary atoms, are syllogistic rules in
$\cS$, corresponding to the traditional syllogisms {\em Darii} and
{\em Ferio}, respectively.  A {\em substitution} is a function $g =
g_1 \cup g_2$, where $g_1: \bP \rightarrow \bP$ and $g_2: \bR
\rightarrow \bR$. If $\theta$ is an $\cF$-formula, denote by
$g(\theta)$ the $\cF$-formula which results by replacing any atom
(unary or binary) in $\theta$ by its image under $g$, and similarly
for sets of formulas.  An {\em instance} of a syllogistic rule
$\Theta/\theta$ is the syllogistic rule $g(\Theta)/g(\theta)$, where
$g$ is a substitution.

Syllogistic rules which differ only with respect to re-naming of unary
or binary atoms will be informally regarded as identical, because they
have the same instances.  Thus, the letters $p$, $q$ and $o$
in~\eqref{eq:dariiFerio} function, in effect, as {\em variables}
ranging over unary atoms. It is often convenient to display
syllogistic rules using variables ranging over other types of
expressions, understanding that these are just more compact ways of
writing finite collections of syllogistic rules in the official sense.
For example, the two rules~\eqref{eq:dariiFerio} may be more compactly
written
\begin{equation*}
\infer[{\mbox{\small (D1)}}]{\exists (p,l)}
                   {\forall (q,l) & 
             \qquad \exists (p,q)}    
\end{equation*}
where $p$ and $q$ range over unary atoms, but $l$ ranges over unary
{\em literals}.

Fix a syllogistic fragment $\cF$, and let $\X$ be a set of syllogistic
rules in $\cF$. Define $\vdash_\X$ to be the smallest derivation
relation in $\cF$ satisfying:
\begin{enumerate}
\item if $\theta \in \Theta$, then $\Theta \vdash_\X \theta$;
\item if $\{\theta_1, \ldots, \theta_n\}/\theta$ is a rule in $\X$,
  $g$ a substitution, $\Theta = \Theta_1 \cup \cdots \cup \Theta_n$,
  and $\Theta_i \vdash_\X g(\theta_i)$ for all $i$ ($1 \leq i \leq
  n$), then $\Theta \vdash_\X g(\theta)$.
\end{enumerate}
It is simple to show that the derivation relation $\vdash_\X$ is sound
if and only if each rule in $\X$ is sound.

Informally, we imagine chaining together instances of the rules in
$\X$ to construct {\em derivations}, in the obvious way; and we refer
to the resulting proof system as the {\em direct syllogistic system
  defined by} $\X$.  We generally display derivations in
natural-deduction style.  Thus, for example, if $\X$ is any rule set
containing (D1), the derivation \newcommand{\aaaa}{
  \infer[{\mbox{\small (D1)}}] {\exists
    (\mbox{carpenter},\mbox{beekeeper})} {\forall
    (\mbox{artist},\mbox{carpenter}) & \qquad \exists
    (\mbox{artist},\mbox{beekeeper}) } } \newcommand{\bbbb}
{\infer[{\mbox{\small (D1)}}]{\exists
    (\mbox{carpenter},\overline{\mbox{dentist}})} {\forall
    (\mbox{beekeeper},\overline{\mbox{dentist}}) & \aaaa} }
$$\bbbb
$$ 
establishes that
\begin{multline*}
\{
\exists (\mbox{artist},\mbox{beekeeper}),
\forall (\mbox{artist},\mbox{carpenter}),\\
\forall (\mbox{beekeeper},\overline{\mbox{dentist}})\}
\vdash_\X
\exists (\mbox{carpenter},\overline{\mbox{dentist}}),
\end{multline*}
which, given the soundness of (D1), entails the
validity~\eqref{eq:seq1}. Notice, incidentally, that the first
application of (D1) in the above derivation depends on the silent
identification of $\exists(p,q)$ and $\exists(q,p)$. In the sequel, we
reason freely about derivations in order to establish properties of
derivation relations.

Derivation relations defined by direct proof-systems are easily seen
to have polynomial-time complexity.
\begin{lemma}
Let $\cF$ be a syllogistic fragment, and $\X$ a finite set of
syllogistic rules in $\cF$. The problem of determining whether $\Theta
\vdash_\X \theta$, for a given set of $\cF$-formulas $\Theta$ and
$\cF$-formula $\theta$, is in \PTIME.
\label{lma:PTIME}
\end{lemma}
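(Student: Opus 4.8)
The plan is to reduce the query $\Theta \vdash_\X \theta$ to a fixpoint (saturation) computation over a polynomially large set of formulas, and then to argue that this saturation runs in polynomial time. Throughout, $\X$ is fixed and finite, so the number of rules, and the number of distinct atoms occurring in any single rule, are bounded by a constant.

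The first and crucial step is to confine attention to a bounded supply of atoms. Let $\Sigma$ be the finite set of unary and binary atoms occurring in $\Theta \cup \{\theta\}$, and let $\Sigma' = \Sigma \cup \{p^*, r^*\}$, where $p^*$ is a fresh unary and $r^*$ a fresh binary atom. I claim that $\Theta \vdash_\X \theta$ holds iff $\theta$ has a derivation from $\Theta$ all of whose formulas use only atoms from $\Sigma'$. The key observation is that derivations are closed under substitution: if $g(\Theta_0)/g(\theta_0)$ is an instance of a rule and $h$ is any further substitution, then $(h \circ g)(\Theta_0)/(h\circ g)(\theta_0)$ is again an instance of the same rule, since $h \circ g$ is itself a substitution and substitutions need not be injective; hence applying $h$ to every formula of a derivation tree yields another valid derivation tree. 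Now suppose $\Theta \vdash_\X \theta$ via a derivation $D$, and let $h$ map every unary atom of $D$ not in $\Sigma$ to $p^*$, every binary atom not in $\Sigma$ to $r^*$, and fix all atoms of $\Sigma$. Since $h$ fixes every atom occurring in $\Theta$ or $\theta$, the tree $h(D)$ is still a derivation of $\theta$ from $\Theta$, and by construction all its formulas range over $\Sigma'$. Thus no auxiliary atoms beyond a single representative of each sort are ever needed. This is the step I expect to be the main obstacle, as it is where one must rule out the possibility that deriving $\theta$ requires an unbounded reservoir of foreign atoms.

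Given this restriction, the next step is a counting bound. Because the subject of every e-term is a unary literal and its object a binary literal, e-terms have nesting depth at most one, so the number of e-terms over $\Sigma'$ is $O(|\Sigma'|^2)$, and hence the number of $\cR^{*\dagger}$-formulas (and a fortiori of $\cF$-formulas for any fragment $\cF$) over $\Sigma'$ is $O(|\Sigma'|^4)$, which is polynomial in the size of $\Theta$ and $\theta$. Let $\mathcal{A}$ denote this polynomially large set of formulas.

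Finally I would compute, by the obvious fixpoint iteration, the subset $C \subseteq \mathcal{A}$ given by $C_0 = \Theta$, where $C_{i+1}$ extends $C_i$ by every consequent $g(\theta_0)$ of a rule instance $g(\Theta_0)/g(\theta_0)$ with $\Theta_0/\theta_0 \in \X$, $g$ a substitution into $\Sigma'$, and $g(\Theta_0) \subseteq C_i$. The first step shows that $\theta \in C := \bigcup_i C_i$ iff $\Theta \vdash_\X \theta$: the forward direction is immediate, since every member of $C$ is genuinely derivable, and the converse is precisely the collapsed derivation $h(D)$ read off level by level. The iteration is polynomial: the chain $C_0 \subseteq C_1 \subseteq \cdots$ is strictly increasing until it stabilises and is bounded by $|\mathcal{A}|$, so it reaches its fixpoint after polynomially many rounds; and each round inspects only the $|\X| \cdot |\Sigma'|^{O(1)}$ rule-instances over $\Sigma'$ (a polynomial number, since each rule has a constant number of atoms), testing antecedent membership and adding consequents in polynomial time. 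This places the decision problem in \PTIME.
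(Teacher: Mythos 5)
Your proposal is correct and takes essentially the same route as the paper's own proof: both collapse the potentially unbounded supply of auxiliary atoms by applying a uniform substitution to a given derivation (the paper renames foreign atoms to atoms already occurring in $\Theta \cup \{\theta\}$, plus one spare binary atom, while you send them to fresh representatives $p^*$ and $r^*$), and both then saturate the polynomially many formulas over the restricted atom set by a fixpoint computation. The remaining differences are cosmetic: the paper bounds the iteration by derivation depth via the number of rule instances, whereas you bound it by the total number of formulas over the restricted signature.
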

\begin{proof}
Let $\Sigma$ be the set of all atoms (unary or binary) occurring in
$\Theta \cup \set{\theta}$, together with one additional binary atom
$r$.  We first observe that, if there is a derivation of $\theta$ from
$\Theta$ using the rules $\X$, then there is such a derivation
involving only the atoms occurring in $\Sigma$.  For, given any
derivation of $\theta$ from $\Theta$, uniformly replace any unary atom
that does not occur in $\Theta \cup \set{\theta}$ with one that
does. Similarly, uniformly replace any binary atom which does not
occur in $\Theta \cup \set{\theta}$ with one which does (or with $r$
in case $\Theta \cup \set{\theta}$ contains no binary atoms). This
process obviously leaves us with a derivation of $\theta$ from
$\Theta$, using the rules $\X$.

To prove the lemma, let the the total number of symbols occurring in
$\Theta \cup \set{\theta}$ be $n$. Certainly, $|\Sigma| \leq n$.  Let
$\X$ comprise $k_1$ proof-rules, each of which contains at most $k_2$
atoms (unary or binary).  The number of rule instances involving only
atoms in $\Sigma$ is bounded by $p(n) = k_1 n^{k_2}$. Hence, we
need never consider derivations with `depth' greater than $p(n)$. Let
$\Theta_i$ be the set of formulas involving only the atoms in
$\Sigma$, and derivable from $\Theta$ using a derivation of depth $i$
or less ($0 \leq i \leq p(n)$).  Evidently, $|\Theta_i| \leq 
|\Theta| + p(n)$. It is then straightforward to compute the
successive $\Theta_i$ in total time bounded by a polynomial function
of $n$.
\end{proof}

\paragraph{Indirect syllogistic systems}
In addition to syllogistic rules, we consider the traditional rule of
\emph{reductio ad absurdum} (RAA), which allows us to derive a
proposition $\phi$ (in some syllogistic fragment) by first assuming
$\bar{\phi}$ and deriving an absurdity.  Again, we display this rule,
in natural-deduction-style, as
\begin{equation*}
\infer[{\mbox{\small (RAA)}}^i]{\phi}
                               {\infer*{\bot}
                                       {\cdots [\bar{\phi}]^i 
                                         \ \cdots 
                                         [\bar{\phi}]^i \cdots }}.
\end{equation*}
The interpretation is as follows: if an absurdity has been derived
from the set of premises $\Phi$ together with the premise
$\bar{\phi}$, then $\phi$ may be derived from the premises $\Phi$
alone. The premise $\bar{\phi}$ may be used several times (including
zero) in the derivation of the absurdity; moreover, $\Phi$ is allowed
to contain $\bar{\phi}$.  We say that the (zero or more) bracketed
instances of the premise $\bar{\phi}$ have been {\em discharged} by
application of (RAA). The numerical superscript $i$ simply allows us
to keep track of which application of (RAA) was responsible for the
discharge of which (instances of a) premise.

Fix a syllogistic fragment $\cF$, and let $\X$ be a set of syllogistic
rules in $\cF$. Define $\Vdash_\X$ to be the smallest derivation
relation in $\cF$ satisfying:
\begin{enumerate}
\item if $\theta \in \Theta$, then $\Theta \Vdash_\X \theta$;
\item if $\{\theta_1, \ldots, \theta_n\}/\theta$ is a rule in $\X$,
  $g$ a substitution, $\Theta = \Theta_1 \cup \cdots \cup \Theta_n$,
  and $\Theta_i \Vdash_\X g(\theta_i)$ for all $i$ ($1 \leq i \leq
  n$), then $\Theta \Vdash_\X g(\theta)$;
\item if $\Theta \cup \{\bar{\theta}\} \Vdash_\X \bot$, then
   $\Theta \Vdash_\X \theta$.
\end{enumerate}
It is again simple to show that $\Vdash_\X$ is sound if and only if
each rule in $\X$ is sound.

Informally, we imagine chaining together instances of the rules in
$\X$ and the rule (RAA) to form {\em indirect derivations} in the
obvious way; and we refer to the resulting proof system as the {\em
  indirect syllogistic system defined by} $\X$. We generally display
indirect derivations in natural-deduction style, as the following
example illustrates. First, consider the (evidently sound) syllogistic
rules:
\begin{equation*}
\infer[{\mbox{\small (D3)}}]{\exists (p,\bar{q})}
                   {\forall (q,\bar{c}) & 
             \qquad \exists (p,c)} 
\hspace{2cm}
\infer[{\mbox{\small ($\forall\forall$)}.}]{\forall (p,\exists(q, t))}
                     {\forall (p, \forall (o, t)) & 
                             \qquad 
                             \exists (q, o)}
\end{equation*}
Here, as usual, $o$, $p$ and $q$ range over unary atoms, $t$ over binary
literals and $c$ over c-terms. In addition, we generalize the rule
(D1) given above so that the variable $l$ (ranging over literals) may
be replaced by the variable $c$ (ranging over c-terms). Then we have the
following indirect derivation.  {\small
\begin{equation*}
\infer[{\mbox{\tiny (RAA)$^1$}.}]{\forall(\mbox{art},\overline{\mbox{bkpr}})}
     {\infer[{\mbox{\tiny (D3)}}]
       {\exists(\mbox{bkpr},\overline{\mbox{bkpr}})}
       {\forall(\mbox{bkpr},\overline{\exists(\mbox{bkpr},\mbox{hate})})
            &
         {\infer[{\mbox{\tiny (D1)}}]
           {\exists(\mbox{bkpr},\exists(\mbox{bkpr},\mbox{hate}))}
           {\infer[{\mbox{\tiny ($\forall\forall$)}}]
              {\forall(\mbox{art},\exists(\mbox{bkpr},\mbox{hate}))}
              {\forall(\mbox{art},\forall(\mbox{art},\mbox{hate}))
               &
               [\exists(\mbox{art},\mbox{bkpr})]^1
              }
            &
            [\exists(\mbox{art},\mbox{bkpr})]^1
           }
         }
       }
     }
\end{equation*}
}

\noindent 
Here, two instances of the premise $\exists(\mbox{artist},\mbox{beekeeper})$
have been discharged by the final application of (RAA). This
derivation establishes that, if $\X$ is any set of rules containing (D3),
($\forall\forall$) and (D1) (generalized as indicated), then
\begin{multline*}
\{\forall(\mbox{artist},\forall(\mbox{artist},\mbox{hate})), \\
  \forall(\mbox{beekeeper},\overline{\exists(\mbox{beekeeper},\mbox{hate})})\}
\Vdash_\X
\forall(\mbox{artist},\overline{\mbox{beekeeper}}).
\end{multline*}
Bearing in mind that $\overline{\exists(\mbox{beekeeper},\mbox{hate})}
= \forall(\mbox{beekeeper},\overline{\mbox{hate}})$, and given the
soundness of the syllogistic rules employed, this entails the
validity~\eqref{eq:seq2}.

It is important to realize that (RAA) is not itself a syllogistic
rule.  In particular, an application of (RAA) in general {\em
  decreases} the set of premises of the derivations in which it
features. As we shall see below, the special status of (RAA) is
essential: indirect syllogistic systems are in general more powerful
than direct syllogistic systems.

If $\vdash_\X$ is refutation-complete, then $\Vdash_\X$ is complete.
For suppose $\Theta \models \theta$. Then $\Theta \cup \{
\bar{\theta}\}$ is unsatisfiable; hence, by the
refutation-completeness of $\vdash_\X$, $\Theta \cup \{ \bar{\theta}\}
\vdash_{\X} \bot$; hence, using a single, final application of (RAA),
$\Theta \Vdash_{\X} \theta$. We stress, however, that (RAA) is not in
general restricted to the final step in an $\Vdash_{\X}$-derivation;
rather, it may be employed at any point, and any number of times.  In
particular, the reasoning of Lemma~\ref{lma:PTIME} fails: in
Section~\ref{sec:noProofR*} we present a set of syllogistic rules
$\R^*$ such that $\Vdash_{\R^*}$ is co-\NPTIME-hard.

We conclude this discussion by showing that indirect proof-systems
yield a version of Lindenbaum's Lemma. Let $\cF$ be any syllogistic
fragment. We say that a set of $\cF$-formulas $\Theta$ is $\cF$-{\em
  complete} if, for every $\cF$-formula $\theta$, either $\theta \in
\Theta$ or $\bar{\theta} \in \Theta$.  If the fragment $\cF$ is clear
from context, we say {\em complete} instead of $\cF$-complete; notice
however that $\cF$-completeness of a set of formulas has nothing to do
with the completeness of a derivation relation.
\begin{lemma} 
Let $\cF$ be a syllogistic fragment, $\Theta$ a set of $\cF$-formulas
and $\X$ a set of syllogistic rules in $\cF$.  If $\Theta$ is
consistent with respect to $\Vdash_\X$, then there exists a complete
set of $\cF$-formulas $\Delta \supseteq \Theta$ consistent with
respect to $\Vdash_\X$.
\label{lma:lindenbaum}
\end{lemma}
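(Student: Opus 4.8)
The plan is to prove a standard Lindenbaum-style extension, with the rule (RAA) doing the essential work at each step. Since $\bP$ and $\bR$ are countably infinite, the set of $\cF$-formulas is countable; so first I would fix an enumeration $\phi_0, \phi_1, \phi_2, \ldots$ of all $\cF$-formulas. I would then build an increasing chain $\Theta = \Delta_0 \subseteq \Delta_1 \subseteq \cdots$ by setting $\Delta_{i+1} = \Delta_i \cup \{\phi_i\}$ if this set is consistent with respect to $\Vdash_\X$, and $\Delta_{i+1} = \Delta_i \cup \{\overline{\phi_i}\}$ otherwise, and finally put $\Delta = \bigcup_{i} \Delta_i$. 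By construction $\Delta \supseteq \Theta$; and since at stage $i$ one of $\phi_i$, $\overline{\phi_i}$ is added to $\Delta$, the set $\Delta$ is complete. It then remains to verify that each $\Delta_i$, and hence $\Delta$ itself, is consistent.

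The crux is the following claim, which guarantees that the chain can always be extended while preserving consistency: \emph{if $\Delta$ is consistent with respect to $\Vdash_\X$, then for any $\cF$-formula $\phi$ at least one of $\Delta \cup \{\phi\}$ and $\Delta \cup \{\bar{\phi}\}$ is consistent.} To prove it, I would suppose for contradiction that both are inconsistent, so that $\Delta \cup \{\phi\} \Vdash_\X \bot$ and $\Delta \cup \{\bar{\phi}\} \Vdash_\X \bot$. Applying clause (3) of the definition of $\Vdash_\X$ (that is, (RAA)) to the first inconsistency, and using $\bar{\bar{\phi}} = \phi$, yields $\Delta \Vdash_\X \bar{\phi}$. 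Combining this with the second inconsistency $\Delta \cup \{\bar{\phi}\} \Vdash_\X \bot$ gives $\Delta \Vdash_\X \bot$, contradicting the consistency of $\Delta$. It is worth stressing that only the (RAA) step, and no particular rule of $\X$, is invoked here; this is precisely why the lemma holds for an arbitrary rule set $\X$.

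The step that requires care, and which I expect to be the main obstacle, is the last inference of the previous paragraph: that $\Delta \Vdash_\X \bar{\phi}$ together with $\Delta \cup \{\bar{\phi}\} \Vdash_\X \bot$ gives $\Delta \Vdash_\X \bot$. This is an instance of the admissibility of cut for $\Vdash_\X$, which I would establish by a tree-splicing argument: given a derivation of $\bot$ from $\Delta \cup \{\bar{\phi}\}$ and a derivation of $\bar{\phi}$ from $\Delta$, replace every \emph{undischarged} leaf labelled $\bar{\phi}$ in the former by a copy of the latter. Because only undischarged occurrences of $\bar{\phi}$ are replaced, any occurrence of $\bar{\phi}$ that is discharged by an application of (RAA) inside the first derivation is left untouched, so the discharge bookkeeping is unaffected; all undischarged leaves of the resulting tree then lie in $\Delta$, and the tree derives $\bot$. (The same splicing idea also yields the monotonicity of $\Vdash_\X$, which I use tacitly below.)

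Finally, I would confirm consistency of the chain and its union. Each $\Delta_i$ is consistent, by induction on $i$ using the claim just proved. For $\Delta$ itself, suppose $\Delta \Vdash_\X \bot$. Since every derivation is a finite tree with finitely many leaves, $\bot$ would already be derivable from a finite subset of $\Delta$; as the chain is increasing, that finite subset is contained in some $\Delta_i$, whence $\Delta_i \Vdash_\X \bot$ by monotonicity, contradicting the consistency of $\Delta_i$. Therefore $\Delta$ is a complete set of $\cF$-formulas extending $\Theta$ and consistent with respect to $\Vdash_\X$, as required.
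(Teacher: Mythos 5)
Your proposal is correct and follows essentially the same route as the paper: a Lindenbaum chain over an enumeration of $\cF$-formulas, with consistency of each stage secured by applying (RAA) to get $\Delta_n \Vdash_\X \bar{\phi}_n$ and then splicing that derivation into the one witnessing $\Delta_n \cup \{\bar{\phi}_n\} \Vdash_\X \bot$ (the paper's ``replace each undischarged premise'' step is exactly your cut argument), and finally invoking finiteness of derivations for the union. The only difference is presentational: you isolate the cut-admissibility and the ``one of $\Delta\cup\{\phi\}$, $\Delta\cup\{\bar{\phi}\}$ is consistent'' claim as explicit lemmas, which the paper performs inline.
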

\begin{proof}
Let $(\phi_n)_{n\in\N}$ enumerate the formulas of $\cF$.  We define a
sequence of consistent sets $\Delta_n$ as follows.  Let $\Delta_0 =
\Theta$.  For $n \geq0$, let
\begin{equation*}
\Delta_{n+1} = 
\begin{cases}
\Delta_n  \cup \{\phi_{n}\} \mbox{ if } \Delta_n  \cup \{\phi_{n}\}
                           \mbox{ is consistent} \\
\Delta_n  \cup \{\bar{\phi}_{n}\}
 \mbox{ otherwise. }
\end{cases}
\end{equation*}
and let $\Delta = \bigcup_n \Delta_n$. We show by induction that each
$\Delta_n$ is consistent.  For suppose $\Delta_n$ is consistent, but
$\Delta_{n+1}$ inconsistent.
Then $\Delta_n\cup\set{\phi_n}\Vdash_{\X} \bot$, so that $\Delta_n
\Vdash_{\X} \bar{\phi}_n$ using (RAA).  In addition,
$\Delta_n\cup\set{\bar{\phi}_n}\Vdash_{\X} \bot'$ (where $\bot'$ is
some absurdity).  We take a derivation establishing the latter
assertion and replace each (undischarged) premise
$\bar{\phi}$ with a derivation of $\bar{\phi}$ from $\Delta_n$.  This
shows $\Delta_n$ to be inconsistent, a contradiction. Thus, $\Delta_n$
is consistent, for all $n$.

Since derivations are finite, $\Delta$ is consistent. Obviously,
$\Delta$ is $\cF$-complete.
\end{proof}

\section{$\cS$ and $\cS^\dagger$: direct systems}
\label{sec:proofSS+}

In the previous section, we defined our general notions, including the
fragments $\cS$ and $\cS^\dagger$.  These are the simplest in our paper,
coming without verbs.  This section provides sound and complete
syllogistic logics for them.  Let $\S$ be the following set of rules,
where $p$ and $q$ range over unary atoms, $l$ over unary literals, and
and $\phi$, $\psi$ over $\cS$-formulas.
\begin{equation*}
\begin{array}{ccc}
\infer[{\mbox{\small (D1)}}]{\exists (p,l)}
                   {\forall (q,l) & 
             \qquad \exists (p,q)}    
& \hspace{.5 cm}  
\infer[{\mbox{\small (B)}}]{\forall (p,l)}
                   {\forall (p,q) & 
             \qquad \forall (q,l)}
  & \hspace{.5 cm}  \infer[{\mbox{\small (A)}}]{\forall (p,l)}
                   {\forall (p, \bar{p})}  
      \\
\\
\infer[{\mbox{\small (D2)}}]{\exists (q,l)}
                   {\exists (p,l) & 
             \qquad \forall (p,q)}
& \hspace{.5 cm}  
\infer[{\mbox{\small (X)}}]{\phi}{\psi & \bar{\psi}} 
& \hspace{.5cm}  
\ \\
\ \\
\infer[{\mbox{\small (D3)}}]{\exists (p,\bar{q})}
                   {\forall (q,\bar{l}) & 
             \qquad \exists (p,l)}    
& \hspace{.5cm}  
\infer[{\mbox{\small (T)}}]{\forall (p,p)}{}
\qquad
\infer[{\mbox{\small (I)}}]{\exists (p,p)} {\exists (p,l)}  
\end{array}
\end{equation*}
Rules (D1) and (D3) we have met already (the latter in a rather more
general form); Rules (D2) and (B) are new, but evidently versions of
classical syllogisms.  Rule (X) is the classical rule of \emph{ex
  falso quodlibet}: from a contradiction, the reasoner may have
anything he pleases [quodlibet]. This rule is not to be confused with
(RAA): (X) is a syllogistic rule, in the technical sense of this
paper; (RAA) is not. Rules (A), (T) and (I) have no classical
counterparts. Rule (A) stems from the fact that if all $p$ are
non-$p$, then there are no $p$ whatsoever; vacuously, then, all $p$
are $l$.  To see that (T) is needed, note that without it there would
be no way to derive $ \forall(p,p)$ from the empty set of premises.
Rule (I) is self-explanatory. We remark that Rule (D1) is actually
redundant in this context.  For consider any instance $\{\forall
(q',l'), \exists (p',q')\}/\exists (p',l')$ of (D1). If $l'= o$ is a
positive literal, then, using the identification $\exists(e,f) =
\exists(f,e)$, we may re-write this instance as $\{\forall (q',o),
\exists (q',p')\}/\exists (o,p')$, which is an instance of (D2).  On
the other hand, if $l' = \bar{o}$, then, using the identification
$\forall(e,f) = \forall(\bar{f},\bar{e})$, we may re-write the
instance as $\{\forall (o, \bar{q}'), \exists (p',q')\}/\exists
(p',\bar{o})$, which is an instance of (D3). We retain Rule (D1) for
ease of use, and to make the relationship to the system $\R$
introduced in Section~\ref{subsec:R} more transparent.

Turning now to the fragment $\cS^\dagger$, let $\S^\dagger$ comprise the following
syllogistic rules, where $l$, $m$ and $n$ range over unary literals,
and $\phi$, $\psi$ over $\cS^\dagger$-formulas.
\begin{equation*}
\begin{array}{ccc}
\infer[{\mbox{\small (D)}}]{\exists (m,n)}
                   {\exists (l,n) & 
             \qquad \forall (l,m)}    
& \hspace{.5 cm}  
\infer[{\mbox{\small (B)}}]{\forall (l,n)}
                   {\forall (l,m) & 
             \qquad \forall (m,n)}
  & \hspace{.5 cm}  \infer[{\mbox{\small (A)}}]{\forall (l,m)}
                   {\forall (l, \bar{l})}  
      \\
\\
\infer[{\mbox{\small (T)}}]{\forall (l,l)}{}
\qquad   \infer[{\mbox{\small (I)}}]{\exists (l,l)} {\exists (l,m)}  
& \hspace{.5 cm}  
  \infer[{\mbox{\small (X)}}]{\phi}{\psi & \bar{\psi}} 
& \hspace{.5cm}  
\infer[{\mbox{\small (N)}}]{\exists (l, l)}
                   {\forall (\bar{l}, l)} 
\end{array}
\end{equation*}

\noindent
The rules in $\S^\dagger$ are the natural  generalizations of those in
$\S$ to reflect the more liberal syntax of $\cS^\dagger$, but with two
small changes: first, rules (D1)--(D3) have merged into a single rule
(D); second, an extra rule (N) has been added.  The soundness of (N)
is due to the requirement that our structures have non-empty domains.

That $\S^\dagger$ really is an extension of $\S$ is given by the following
lemma.
\begin{lemma}
Any instance of a rule in $\S^\dagger$ which involves only formulas in the
fragment $\cS$ is an instance of a rule of $\S$, and conversely.
\label{lma:S+S} 
\end{lemma}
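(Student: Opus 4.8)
The plan is to verify the correspondence rule by rule, using the silent identifications $\exists(e,f)=\exists(f,e)$ and $\forall(e,f)=\forall(\bar f,\bar e)$ together with a clean description of which $\cS^\dagger$-formulas already lie in $\cS$. As a preliminary I would establish that description: modulo the identifications, an existential $\exists(l,m)$ is an $\cS$-formula exactly when $l$ and $m$ are not both negative, and a universal $\forall(l,m)$ is an $\cS$-formula exactly when it is not of the form $\forall(\bar p,q)$ (a negative subject with a positive object). Both facts read off the syntax~\eqref{eq:syntaxS} once the identifications are applied to the permitted shapes $\exists(l,p)$ and $\forall(l,\bar p)$. This characterization is what drives the whole argument, since the $\S$-rules are essentially the $\S^\dagger$-rules with the literal variables constrained to the polarities that keep every formula inside $\cS$.

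For the converse direction---every instance of an $\S$-rule is an instance of an $\S^\dagger$-rule---I would match each rule of $\S$ to a polarity-specialization of its $\S^\dagger$ namesake. Rules (A), (T), (I), (X) and (B) of $\S$ are just the special cases of the like-named $\S^\dagger$-rules in which the relevant subject/middle literals are positive atoms, so nothing is needed beyond reading off the substitution. The only substantive point is that (D1), (D2) and (D3) are all instances of the single merged rule (D): (D2) is the specialization of (D) in which its shared literal $l$ and its literal $m$ are positive atoms, while (D1) and (D3) become instances of (D) after rewriting one existential premise via $\exists(e,f)=\exists(f,e)$ and, for (D3), one universal premise via $\forall(e,f)=\forall(\bar f,\bar e)$. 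I would write out these three substitutions explicitly.

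For the forward direction---every instance of an $\S^\dagger$-rule all of whose formulas lie in $\cS$ is, modulo the identifications, an instance of an $\S$-rule---I would run through the $\S^\dagger$-rules, doing a short case analysis on the polarities of the literal variables and discarding any pattern that, by the preliminary characterization, would push a formula outside $\cS$. Two observations organize the work. First, rule (N) contributes nothing: its premise $\forall(\bar l,l)$ lies in $\cS$ only when $l$ is negative, whereas its conclusion $\exists(l,l)$ lies in $\cS$ only when $l$ is positive, so no instance of (N) has all its formulas in $\cS$---which is exactly why $\S$ needs no analogue of (N). Second, for (A), (T) and (I) the $\cS$-membership conditions pin the subject literal down (to a positive atom for (A) and (I); to either sign for (T), the negative case collapsing via $\forall(\bar p,\bar p)=\forall(p,p)$), and the surviving instance is immediately one of the corresponding $\S$-rule.

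The real labour is the two-premise rules (D) and (B), and this is where I expect the main obstacle. For (D), when $l$ is a positive atom the instance lands on (D2) (if $m$ is positive, any $n$) or on (D3) (if $m$ is negative, which by the condition on $\exists(l,n)$ forces $n$ positive); when $l$ is negative, the condition on $\forall(l,m)$ forces $m$ negative and the condition on $\exists(l,n)$ forces $n$ positive, and the instance matches (D3) after one application of each identification. For (B), the case $l,m$ both positive is immediate; if $l$ is positive and $m$ negative then $\forall(m,n)\in\cS$ forces $n$ negative, and if $l$ is negative the two universal conditions force $m$ and $n$ negative. In these last two cases, applying $\forall(e,f)=\forall(\bar f,\bar e)$ to the affected premises and to the conclusion realigns the shared middle term and exhibits the instance as one of (B). No individual subcase is deep, but there are enough of them that the bookkeeping---keeping straight which identification is applied where so that the middle term lines up---is the part most in need of care; I would present it as a compact table of sign patterns rather than as running prose.
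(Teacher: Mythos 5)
Your proposal is correct and takes essentially the same route as the paper: a rule-by-rule case analysis on literal polarities, carried out modulo the silent identifications $\exists(e,f)=\exists(f,e)$ and $\forall(e,f)=\forall(\bar{f},\bar{e})$, with (N) excluded because its premise and conclusion impose incompatible polarity requirements (the paper only writes out rule (D) in full and declares the rest similar and the converse routine, so your version is just more explicit). One slip to fix before writing it up: in rule (D) with $l$ positive and $m$ negative, it is the conclusion $\exists(m,n)\in\cS$ that forces $n$ to be positive --- the premise $\exists(l,n)$ lies in $\cS$ for every $n$ once $l$ is positive --- which is exactly the justification the paper gives at that point.
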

\begin{proof}
We prove the first statement by considering the rules of $\S^\dagger$
one at a time.  We illustrate here with Rule (D). Suppose that all
formulas in some instance $\{\exists(l,n), \forall (l,m)\}/\exists
(m,n)$ of this rule are in $\cS$.  If $m$ is positive, then so is $l$
(since $\forall(l,m) \in \cS$).  In this case, our instance matches
Rule (D2) in the system $\S$.  On the other hand, if $m$ is negative,
then $n$ is positive (since $\exists(m,n) \in \cS$).  We then use the
identifications $ \forall (l,m) = \forall (\bar{m},\bar{l})$,
$\exists(l,n)= \exists(n,l)$, and $\exists(m,n)= \exists(n,m)$, and
re-write the rule-instance as
$\{\exists(n,l),\forall(\bar{m},\bar{l})\}/\exists (n,m)$, which
(bearing in mind that $n$ and $\bar{m}$ are unary atoms) matches Rule
(D3) in the system $\S$. The other rules are dealt with
similarly. Note in particular that all instances of Rule (N) involve a
formula lying outside the fragment $\cS$, so that the statement holds
trivially for this rule.  The second statement of the lemma is
completely routine.
\end{proof}
\noindent
In view of Lemma~\ref{lma:S+S}, we have taken the liberty of using the
same names---(B), (A), (T), (I) and (X)---for corresponding rules in
$\S$ and $\S^\dagger$.

It is obvious that $\vdash_\S$ and $\vdash_{\S^\dagger}$ are sound. We
now prove their completeness (for the fragments for $\cS$ and
$\cS^\dagger$, respectively).  It is convenient to prove the
completeness of $\vdash_{\S^\dagger}$, and then to derive the result
for $\vdash_\S$ as a special case.

\paragraph{Starting on the proof}
In the remainder of this section, then, a {\em formula} is an
$\cS^\dagger$-formula unless otherwise stated.  A {\em universal} formula is
one of the form $\forall(l,m)$; an {\em existential} formula is one of
the form $\exists(l,m)$. If $\fA$ and $\fB$ are structures with
disjoint domains $A$ and $B$, respectively, denote by $\fA \cup \fB$
the structure with domain $A \cup B$ and interpretations $p^{\fA \cup
\fB} = p^\fA \cup p^\fB$ for any unary atom $p$. (Note that $\cS^\dagger$
features only unary atoms.)
\begin{lemma}
Suppose $\Phi \cup \Psi \models \theta$, where $\Phi$ is a set of
universal formulas, $\Psi$ a set of existential formulas, and $\theta$
a formula.
\begin{enumerate}
\item If $\Phi \cup \Psi$ is satisfiable and $\theta$ is universal, then
$\Phi \models \theta$. 
\item If $\Psi \neq \emptyset$ and $\theta$ is existential, then
there exists $\psi \in \Psi$ such that $\Phi \cup \{ \psi \} \models \theta$. 
\item If $\Psi = \emptyset$ and $\theta = \exists(l,m)$ is existential, then
$\Phi \models \forall(\bar{l},l)$ and $\Phi \models \forall(\mbar,m)$.
\end{enumerate}
\label{lma:1}
\end{lemma}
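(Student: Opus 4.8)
The plan is to prove all three parts by elementary model-theoretic constructions, exploiting how universal and existential $\cS^\dagger$-formulas behave under disjoint unions and under passage to one-element substructures. The preliminary observation, which I would record first by a direct computation on the semantic clauses, is that for disjoint structures $\fA,\fB$ and any unary literal $l$ one has $l^{\fA \cup \fB} = l^\fA \cup l^\fB$; the only point requiring care is the negative case, where $\bar{p}^{\fA \cup \fB} = (A \cup B)\setminus p^{\fA \cup \fB} = \bar{p}^\fA \cup \bar{p}^\fB$ uses disjointness of the domains. From this it is immediate that $\fA \cup \fB \models \forall(l,m)$ iff both $\fA$ and $\fB$ do, while $\fA \cup \fB \models \exists(l,m)$ iff at least one of them does, and the same equivalences hold for disjoint unions of arbitrary (possibly infinite) families. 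These bookkeeping identities are essentially the only calculations in the proof. For part~1, I would take an arbitrary $\fA \models \Phi$ and, using the assumed satisfiability, fix some $\fB \models \Phi \cup \Psi$, relabelling so that the domains are disjoint. Then $\fA \cup \fB \models \Phi$ (both factors do) and $\fA \cup \fB \models \Psi$ (since $\fB$ does and every member of $\Psi$ is existential), so $\fA \cup \fB \models \Phi \cup \Psi$, whence $\fA \cup \fB \models \theta$ by hypothesis. As $\theta = \forall(l,m)$ is universal, the union identity forces $\fA \models \theta$; since $\fA$ was arbitrary, $\Phi \models \theta$.

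For part~2 I would argue contrapositively. Suppose that for every $\psi \in \Psi$ we have $\Phi \cup \{\psi\} \not\models \theta$, and choose for each such $\psi$ a structure $\fA_\psi \models \Phi \cup \{\psi\}$ with $\fA_\psi \not\models \theta$. After relabelling to obtain pairwise disjoint domains, form the disjoint union $\fC = \bigcup_{\psi \in \Psi} \fA_\psi$. Every factor models $\Phi$, so $\fC \models \Phi$; each $\psi \in \Psi$ is existential and is modelled by its own factor $\fA_\psi$, so $\fC \models \Psi$; but $\theta = \exists(l,m)$ is existential and no factor models it, so $\fC \not\models \theta$. This contradicts $\Phi \cup \Psi \models \theta$. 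The hypothesis $\Psi \neq \emptyset$ is exactly what guarantees that the index set, and hence the domain of $\fC$, is nonempty.

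For part~3 the relevant tool is preservation of universal formulas under one-element substructures. Given any $\fA \models \Phi$ and any $a \in A$, let $\fA_a$ be the structure with domain $\{a\}$ and $p^{\fA_a} = p^\fA \cap \{a\}$; a direct check, again minding the negative literals, shows $l^{\fA_a} = l^\fA \cap \{a\}$, so every universal formula of $\Phi$ survives in $\fA_a$ and thus $\fA_a \models \Phi$. Since $\Psi = \emptyset$ and $\Phi \models \theta$, we obtain $\fA_a \models \exists(l,m)$; but in a one-element domain this can hold only if $a \in l^\fA$ and $a \in m^\fA$. As $a$ was arbitrary, $l^\fA = A = m^\fA$, which is precisely $\fA \models \forall(\bar{l},l)$ and $\fA \models \forall(\mbar,m)$ (recalling that $\forall(\bar{l},l)$ asserts $A\setminus l^\fA \subseteq l^\fA$, i.e.\ $l^\fA = A$). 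Letting $\fA$ range over all models of $\Phi$ yields the two claimed entailments.

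I expect the only genuine care to be needed in the uniform treatment of negative literals throughout---verifying that complementation commutes with disjoint union and with restriction to a subdomain---and in confirming that the disjoint-union equivalences remain valid for infinite families, so that the arguments of parts~1 and~2 go through directly even when $\Phi$ or $\Psi$ is infinite, with no appeal to compactness.
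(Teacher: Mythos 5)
Your proof is correct and uses essentially the same constructions as the paper's: disjoint unions of models for parts~1 and~2, and restriction to a one-element substructure for part~3. The only difference is presentational---you argue parts~1 and~3 directly (and part~2 contrapositively) where the paper assumes the contrary in each case and unites a model of $\Phi \cup \Psi$ with a countermodel of $\theta$---but the underlying model-theoretic content is identical.
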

\begin{proof}
In each case, assume the contrary.
\begin{enumerate}
\item 
There exist structures $\fA \models \Phi
\cup \Psi$ and $\fB \models \Phi \cup {\bar{\theta}}$. 
We may assume that $A\cap B = \emptyset$.
But then $\fA
\cup \fB \models \Phi \cup \Psi \cup \{ \bar{\theta} \}$, a
contradiction.  
\item For every $\psi \in \Psi$, there exists a structure
$\fA_\psi$ such that $\fA_\psi \models \Phi \cup \{ \psi,
\bar{\theta} \}$.   Again, we assume that the domains are pairwise disjoint.
But then $\bigcup_{\psi \in \Psi} \fA_\psi \models
\Phi \cup \Psi \cup \{ \bar{\theta} \}$, a contradiction.
\item If $\Phi \not \models \forall(\bar{l},l)$, there
exists a structure $\fA$ such that $\fA \models \Phi \cup
\{\exists(\bar{l},\bar{l})\}$. Choose $a \in {\bar{l}}^\fA$, and let $\fB$ be
the
structure obtained by restricting $\fA$ to the singleton domain $\{ a
\}$. Then $\fB \models \Phi \cup \{ \bar{\theta} \}$, a contradiction. A
similar argument applies if $\Phi \not \models \forall(\bar{m},m)$. 
\end{enumerate}
\end{proof}

We write $\Phi \vdash_{\rm BTA} \phi$ if there is a derivation in
$\vdash_{\S^\dagger}$ of $\phi$ from $\Phi$ employing only the rules (B), (T)
and (A). We call a set $V$ of literals {\em consistent} if $l \in V$
implies $\bar{l} \not \in V$ (otherwise {\em inconsistent}); we call
$V$ {\em complete} if $l \not \in V$ implies $\bar{l} \in V$.  Let
$\Phi$ be a set of universal formulas and $V$ a set of
  literals.  Define $S^\Phi$ to be the set
of literals:
\begin{equation*}
\{ m : \mbox{$\Phi \vdash_{\rm BTA} \forall(l,m)$ for some $l \in V$} \}.
\end{equation*}
We say that $V$ is $\Phi$-{\em closed} if $V^\Phi \subseteq V$.  

\begin{lemma}
Let $\Phi$ be a set of universal formulas and $V$ a set of
literals.  Then $V^\Phi$ is $\Phi$-closed, and $V \subseteq V^\Phi$.
\label{lma:2}
\end{lemma}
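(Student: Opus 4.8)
The plan is to recognize that the operation $V \mapsto V^\Phi$ is essentially an \emph{inflationary and idempotent} operation on sets of literals, and that both assertions of the lemma are immediate consequences of two structural facts about $\vdash_{\rm BTA}$-derivability of universal formulas: it is \emph{reflexive}, because rule (T) yields $\forall(l,l)$ from no premises, and it is \emph{transitive}, because rule (B) chains $\forall(l,m)$ and $\forall(m,n)$ into $\forall(l,n)$. These are precisely the two rules isolated in the definition of $\vdash_{\rm BTA}$, so the proof amounts to unwinding the definition of $V^\Phi$ and applying (T) and (B) in the appropriate places while keeping careful track of the witnessing literal in $V$.

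First I would establish the inclusion $V \subseteq V^\Phi$. Fixing any $l \in V$, rule (T) gives a one-step derivation of $\forall(l,l)$ from the empty set of premises, so a fortiori $\Phi \vdash_{\rm BTA} \forall(l,l)$. Taking $l$ itself as the required element of $V$ in the defining condition of $V^\Phi$, this witnesses $l \in V^\Phi$; since $l$ was arbitrary, $V \subseteq V^\Phi$.

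Next I would verify that $V^\Phi$ is $\Phi$-closed, i.e.\ that $(V^\Phi)^\Phi \subseteq V^\Phi$. Take any $n \in (V^\Phi)^\Phi$. By definition there is some $m \in V^\Phi$ with $\Phi \vdash_{\rm BTA} \forall(m,n)$, and since $m \in V^\Phi$ there is some $l \in V$ with $\Phi \vdash_{\rm BTA} \forall(l,m)$. Concatenating these two derivations and applying rule (B) to their conclusions produces a $\vdash_{\rm BTA}$-derivation of $\forall(l,n)$ from $\Phi$. As $l \in V$, this exhibits $n$ as a member of $V^\Phi$, so $(V^\Phi)^\Phi \subseteq V^\Phi$, as required.

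I do not expect any genuine obstacle here: the content of the lemma is entirely captured by the reflexivity and transitivity of $\vdash_{\rm BTA}$ on universal formulas. The only point demanding a little care is bookkeeping — namely, ensuring that in each case the literal playing the role of the ``source'' $l$ is chosen from $V$ (and not merely from $V^\Phi$), which is exactly what makes the single application of (B) legitimate. No other rules of $\S^\dagger$, and in particular neither the existential rules nor (X), (N), (I) or (A), are needed.
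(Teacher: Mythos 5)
Your proof is correct and takes exactly the approach the paper intends: the paper's own proof consists solely of the remark ``Almost immediate, by rules (B) and (T)'', and your argument fills in precisely those details, using (T) to witness $V \subseteq V^\Phi$ and a single application of (B) to establish $(V^\Phi)^\Phi \subseteq V^\Phi$. Your bookkeeping point about always tracing the source literal back to $V$ itself is the right one, and your observation that (A) is not needed (despite being admitted in $\vdash_{\rm BTA}$) is consistent with the paper.
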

\begin{proof}
Almost immediate, by rules (B) and (T).
\end{proof}
Evidently, the union of any collection of $\Phi$-closed sets of
literals is $\Phi$-closed.  
\begin{lemma}
Let $\Phi$ be a set of universal formulas and $V$ a set of
literals.  If $V^\Phi$ is inconsistent, then there exist literals $l,
l' \in V$ such that $\Phi \vdash_{\rm BTA} \forall(l,\bar{l}')$.
\label{lma:2.5}
\end{lemma}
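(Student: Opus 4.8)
The plan is to unpack the definition of inconsistency of $V^\Phi$ and then to produce the required derivable formula by a single application of the chaining rule (B). By definition, a set of literals is inconsistent precisely when it contains a complementary pair; hence $V^\Phi$ being inconsistent means there is a literal $m$ with both $m \in V^\Phi$ and $\bar{m} \in V^\Phi$. Unwinding the definition of $V^\Phi$, this yields literals $l, l' \in V$ together with derivations $\Phi \vdash_{\rm BTA} \forall(l,m)$ and $\Phi \vdash_{\rm BTA} \forall(l',\bar{m})$.

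The key step is to bring these two formulas into a shape to which (B) applies. Using the silent identification $\forall(e,f) = \forall(\bar{f},\bar{e})$ (here with $e = l'$ and $f = \bar{m}$, so that $\bar{f} = m$), the formula $\forall(l',\bar{m})$ is simply $\forall(m,\bar{l}')$. We thus have $\Phi \vdash_{\rm BTA} \forall(l,m)$ and $\Phi \vdash_{\rm BTA} \forall(m,\bar{l}')$. A single application of rule (B), with $l$, $m$ and $\bar{l}'$ playing the roles of its three literals, then gives $\Phi \vdash_{\rm BTA} \forall(l,\bar{l}')$. Since $l, l' \in V$, this is exactly the conclusion required.

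There is little genuine difficulty here; the one point requiring care is the appeal to the identification $\forall(e,f) = \forall(\bar{f},\bar{e})$, which is precisely what makes the two derivations share the endpoint $m$ so that (B) can compose them. Without it, $\forall(l,m)$ and $\forall(l',\bar{m})$ possess no common term and cannot be chained directly. Note finally that only rule (B) is actually invoked in this direction; rules (T) and (A), though available in BTA-derivations, play no role.
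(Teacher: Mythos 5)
Your proof is correct and follows essentially the same route as the paper's: both unpack the inconsistency of $V^\Phi$ into derivations $\Phi \vdash_{\rm BTA} \forall(l,m)$ and $\Phi \vdash_{\rm BTA} \forall(l',\bar{m})$ for $l, l' \in V$, rewrite the latter as $\forall(m,\bar{l}')$ via the identification $\forall(e,f) = \forall(\bar{f},\bar{e})$, and conclude with a single application of (B). Your explicit remark that this identification is what creates the common middle term $m$ is exactly the (silent) step in the paper's derivation.
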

\begin{proof}
If $m, \mbar \in V^\Phi$, pick $l, l' \in V$ with $\Phi \vdash_{\rm
  BTA} \forall(l,m)$ and $\Phi \vdash_{\rm BTA} \forall(l',\mbar)$.
Re-writing $\forall(l',\mbar)$ as $\forall(m,\bar{l}')$, we have a
derivation from $\Phi$
\begin{equation*}
\infer[(\mathrm B),]{\forall(l,\bar{l}')}
    {\infer*{\forall(l,m)}{} \qquad & \infer*{\forall(m,\bar{l}')}{}}
\end{equation*}
as claimed.
\end{proof}
\begin{lemma}  
Let $\Phi$ be a set of universal formulas.  Any
non-empty, $\Phi$-closed, consistent set of literals has a
$\Phi$-closed, consistent, complete extension.
\label{lma:3}
\end{lemma}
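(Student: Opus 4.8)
The plan is to extend the given set $V$ to a complete set one literal at a time, going through an enumeration of all unary atoms, and at each stage adding whichever polarity keeps the set both consistent and $\Phi$-closed. The candidate to add is not a single literal but the $\Phi$-closure of the augmented set, so that $\Phi$-closedness is maintained automatically by construction; the real content is showing that consistency can always be preserved.

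Let me spell out the construction. Let $V_0 = V$, which is nonempty, $\Phi$-closed, and consistent by hypothesis. Enumerate the unary atoms as $p_0, p_1, \dots$. Given a $\Phi$-closed, consistent $V_n$, if either $p_n \in V_n$ or $\bar p_n \in V_n$ already, set $V_{n+1} = V_n$. Otherwise, consider the two candidate extensions $(V_n \cup \{p_n\})^\Phi$ and $(V_n \cup \{\bar p_n\})^\Phi$. By Lemma~\ref{lma:2}, both are $\Phi$-closed and contain $V_n$. The key claim is that \emph{at least one of them is consistent}; I set $V_{n+1}$ to be a consistent one. Finally, take $\Delta := \bigcup_n V_n$. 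Since the $V_n$ are increasing, $\Delta$ is $\Phi$-closed (a union of $\Phi$-closed sets, as noted after Lemma~\ref{lma:2}), complete (every atom is decided at its stage), and consistent (any inconsistency $l, \bar l \in \Delta$ would already appear in some finite $V_n$). This $\Delta$ is the desired extension.

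The main obstacle is the key claim: if $V_n$ is $\Phi$-closed and consistent, then $(V_n \cup \{p_n\})^\Phi$ and $(V_n \cup \{\bar p_n\})^\Phi$ cannot both be inconsistent. I would prove this by contradiction using Lemma~\ref{lma:2.5}. Suppose both closures are inconsistent. Applying Lemma~\ref{lma:2.5} to the set $V_n \cup \{p_n\}$ yields literals $l_1, l_1' \in V_n \cup \{p_n\}$ with $\Phi \vdash_{\rm BTA} \forall(l_1, \bar l_1')$; similarly the set $V_n \cup \{\bar p_n\}$ gives $l_2, l_2' \in V_n \cup \{\bar p_n\}$ with $\Phi \vdash_{\rm BTA} \forall(l_2, \bar l_2')$. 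Since $V_n$ itself is consistent and $\Phi$-closed, the offending literals must genuinely involve the newly added atom: in the first case $p_n$ must occur among $\{l_1, l_1'\}$, and in the second $\bar p_n$ among $\{l_2, l_2'\}$ (otherwise the derivation would witness an inconsistency already inside $V_n^\Phi = V_n$).

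To finish, I would combine these two derivations. After normalizing via the identification $\forall(e,f) = \forall(\bar f, \bar e)$, the first derivation gives, from some $l \in V_n$, a proof $\Phi \vdash_{\rm BTA} \forall(l, \bar p_n)$, and the second gives, from some $l' \in V_n$, a proof $\Phi \vdash_{\rm BTA} \forall(l', p_n)$ (the symmetric roles of $l_i, l_i'$ let me arrange the added atom on the right-hand side in each case). Chaining these with rule (B) through $p_n$ produces $\Phi \vdash_{\rm BTA} \forall(l, \bar l')$ with $l, l' \in V_n$, which by the definition of $V_n^\Phi$ forces $\bar l' \in V_n^\Phi = V_n$; but $l'\in V_n$ too, contradicting the consistency of $V_n$. (The degenerate subcases, where the two offending literals in one derivation are equal or where one of them already lies in $V_n$, are handled by the same chaining, using rule (T) to supply trivial links where needed.) This contradiction establishes the claim, and the construction goes through.
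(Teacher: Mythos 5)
Your overall strategy is the same as the paper's: a Lindenbaum-style chain of $\Phi$-closures, with Lemma~\ref{lma:2.5} used to analyse any inconsistency that appears at a stage. The non-degenerate part of your key claim is correct (if slightly roundabout: once you have $\Phi \vdash_{\rm BTA} \forall(l,\bar{p}_n)$ with $l \in V_n$, $\Phi$-closedness already puts $\bar{p}_n$ into $V_n^\Phi = V_n$, contradicting the assumption that $p_n$ was undecided; no chaining is needed). The gap is in the degenerate case, and your proposed fix --- ``rule (T) to supply trivial links'' --- cannot work. Concretely, suppose Lemma~\ref{lma:2.5} returns, for the extension by $p_n$, the pair $l_1 = l_1' = p_n$, and for the extension by $\bar{p}_n$, the pair $l_2 = l_2' = \bar{p}_n$; that is, $\Phi \vdash_{\rm BTA} \forall(p_n,\bar{p}_n)$ and $\Phi \vdash_{\rm BTA} \forall(\bar{p}_n,p_n)$. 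No literal of $V_n$ occurs in either derivation, and chaining them with (B) produces only tautologies such as $\forall(p_n,p_n)$; (T) adds nothing. Worse, in this situation \emph{both} candidate closures genuinely are inconsistent: $(V_n\cup\{p_n\})^\Phi$ contains $p_n$ (by (T)) and $\bar{p}_n$ (by the first derivation), and symmetrically for the other. So your key claim ``at least one of them is consistent'' can only be rescued by proving that this situation is incompatible with $V_n$ being non-empty, $\Phi$-closed, consistent, and undecided on $p_n$ --- it cannot be handled by exhibiting a consistent closure.

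That impossibility argument is exactly where rule (A) and the non-emptiness hypothesis must enter, and your proposal never uses either --- a tell-tale sign of the gap, since both appear in the lemma's hypotheses and rule set precisely for this purpose. From $\forall(\bar{p}_n,p_n)$, rule (A) yields $\forall(\bar{p}_n,m)$ for every $m$; taking $m = \bar{l}''$ for some $l'' \in V_n$ (here non-emptiness is used) and applying the identification $\forall(e,f)=\forall(\bar{f},\bar{e})$, one obtains $\Phi \vdash_{\rm BTA} \forall(l'',p_n)$, whence $p_n \in V_n^\Phi = V_n$, a contradiction. This is precisely the move the paper makes in its own proof (its case $l = \bar{l}_{i+1}$, where (A) is applied to $\forall(\bar{l}_{i+1},l_{i+1})$). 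The gap is essential, not cosmetic: if (A) were deleted from the closure operator, the lemma itself would be false (take $\Phi = \{\forall(p,\bar{p}),\,\forall(\bar{p},p)\}$ and $V_0=\{q\}$, which is then closed and consistent but has no consistent complete closed extension), so any proof that never invokes (A) cannot be repaired without adding that step.
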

\begin{proof}
Let $V_0$ be a non-empty, $\Phi$-closed, consistent set of literals.
Enumerate the set of all literals   as $l_1, l_2, \ldots $.  For all $i
\geq 0$, define
\begin{equation*}
V_{i+1} = 
\begin{cases}
V_i \mbox{ if } l_{i+1} \in V_i \\
\big(V_i \cup \{\bar{l}_{i+1}\}\big)^\Phi \mbox{ otherwise. }
\end{cases}
\end{equation*}
and define $V = \bigcup_i V_i$. Thus, $V$ is $\Phi$-closed, complete
and includes $V_0$.  We remark that, since $V_0 \subseteq V_i$, $V_i$
is non-empty for all $i \geq 0$.  To show that $V$ is consistent,
suppose otherwise. Let $i$ be the least natural number such that
$V_{i+1}$ is inconsistent. Hence, $V_{i+1} = \big(V_i \cup
\{\bar{l}_{i+1}\}\big)^\Phi$, and so by Lemma~\ref{lma:2.5} there
exist literals $l, l' \in V_i \cup \{\bar{l}_{i+1}\}$ such that $\Phi
\vdash_{\rm BTA} \forall(l,\bar{l}')$.  Since $V_i$ is $\Phi$-closed
and consistent, $l$ and $l'$ cannot both be in $V_i$, and so we may
assume without loss of generality that $l' = \bar{l}_{i+1}$, whence
$\Phi \vdash_{\rm BTA} \forall(l, l_{i+1})$.  Now, either $l \in V_i$
or $l = \bar{l}_{i+1}$. In the former case, we have $l_{i+1} \in V_i$,
because $V_i$ is $\Phi$-closed; in the latter, let $l''$ be any
literal in $V_i$ (which we know to be non-empty).  Then the inference
\begin{equation*}
\infer[\mbox{\rm (A)}]{\forall(l'',l_{i+1})}
           {\forall(\bar{l}_{i+1},l_{i+1})}
\end{equation*}
guarantees that, again, $l_{i+1} \in V_i$. Either way, $V_{i+1} =
V_i$, a contradiction.
\end{proof}
\begin{theorem}
The derivation relation $\vdash_{\S^\dagger}$ is sound and complete
for $\cS^\dagger$.
\label{theo:S+completeness}
\end{theorem}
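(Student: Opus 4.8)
The plan is to prove completeness (soundness being immediate, as already noted), and then obtain completeness of $\vdash_{\S}$ as a special case via Lemma~\ref{lma:S+S}. The organizing idea is a correspondence between single points of a structure and sets of literals: a \emph{complete}, \emph{consistent}, $\Phi$-closed set $V$ determines a one-point structure $\fA_V$ (put $a\in p^{\fA_V}$ iff $p\in V$), and I would first check that every such $\fA_V$ models $\Phi$ (this is exactly $\Phi$-closedness), while conversely the literal-type of any point in any model of $\Phi$ is such a set. Disjoint unions of these one-point structures then realize arbitrary families of existential premises. Throughout I would split a premise set $\Theta$ as $\Phi\cup\Psi$ with $\Phi$ universal and $\Psi$ existential, and drive the case analysis using Lemma~\ref{lma:1}.

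The engine of the whole argument is a \emph{universal completeness} statement: $\Phi\models\forall(l,m)$ iff $\Phi\vdash_{\S^\dagger}\forall(l,m)$, which I would establish through the closure $\{l\}^\Phi$. If $\{l\}^\Phi$ is inconsistent, then Lemma~\ref{lma:2.5} yields $\Phi\vdash_{\rm BTA}\forall(l,\bar l)$ and rule (A) delivers $\forall(l,m)$ for every $m$. If $\{l\}^\Phi$ is consistent but $m\notin\{l\}^\Phi$, I would build a counter-model: adjoin $\bar m$ and re-close, extend the result by Lemma~\ref{lma:3} to a complete consistent $\Phi$-closed $V$ containing $l$ and $\bar m$, and read off $\fA_V\models\Phi$ with $l$ true and $m$ false, so that $\Phi\not\models\forall(l,m)$. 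Contrapositively, $\Phi\models\forall(l,m)$ forces $m\in\{l\}^\Phi$, i.e.\ $\Phi\vdash_{\rm BTA}\forall(l,m)$.

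With this in hand I would assemble the three remaining pieces. For an \emph{unsatisfiable} $\Phi\cup\Psi$ I would characterize unsatisfiability (some $\{l,m\}^\Phi$ inconsistent for an $\exists(l,m)\in\Psi$, or, when $\Psi=\emptyset$, every $\{l\}^\Phi$ inconsistent) and then derive an absurdity using (I) and (D)---and crucially (N), which manufactures an existential witness out of purely universal premises---after which (X) yields any $\theta$. For a \emph{satisfiable} $\Phi\cup\Psi$ with $\theta$ universal, Lemma~\ref{lma:1}(1) reduces the task to $\Phi\models\theta$, dispatched by universal completeness. For $\theta$ existential with $\Psi\neq\emptyset$, Lemma~\ref{lma:1}(2) isolates a single premise $\psi=\exists(l',m')$ with $\Phi\cup\{\psi\}\models\theta$; the same counter-model trick forces the membership of both literals of $\theta$ in $\{l',m'\}^\Phi$, and chaining instances of (D) produces $\theta$. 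For $\theta=\exists(l,m)$ with $\Psi=\emptyset$, Lemma~\ref{lma:1}(3) gives $\Phi\models\forall(\bar l,l)$ and $\Phi\models\forall(\bar m,m)$; universal completeness derives these, and then (N), (A), (D) combine to yield $\exists(l,m)$.

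The main obstacle I anticipate is the universal-completeness direction, specifically verifying that adjoining $\bar m$ to $\{l\}^\Phi$ and re-closing preserves consistency: this is where the real work lives, and it requires a short case analysis via Lemma~\ref{lma:2.5}, chaining the offending witnesses with (B) and, in the degenerate case where the two witnessing literals coincide, invoking (A). A secondary point demanding care is the non-emptiness of domains, which is precisely what rule (N) is present to discharge when detecting inconsistency of universal-only premise sets. Once the $\cS^\dagger$ result is secured, restricting attention to $\cS$-formulas and appealing to Lemma~\ref{lma:S+S} yields completeness of $\vdash_{\S}$ for $\cS$.
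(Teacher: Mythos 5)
Your proposal is correct and follows essentially the same route as the paper's proof: the same split $\Theta = \Phi \cup \Psi$ driven by Lemma~\ref{lma:1}, the same closure machinery (Lemmas~\ref{lma:2}, \ref{lma:2.5} and~\ref{lma:3}) with one-point models built from complete, consistent, $\Phi$-closed sets and disjoint unions for existential premises, and the same rules (A), (B), (T), (D), (I), (N), (X) in the same roles --- your ``universal completeness'' statement is exactly the paper's Case~(1), and the consistency-preservation issue you flag is discharged there by precisely the case analysis via Lemma~\ref{lma:2.5} and rule (A) that you sketch. The only (minor) divergence is the unsatisfiable case, where the paper extracts a minimal unsatisfiable subset and reuses the satisfiable case before a single application of (X), whereas you characterize unsatisfiability syntactically and derive an absurdity directly; both work, and the rest (including the need for (I) when both literals of the conclusion are traced back to the same premise literal) matches the paper's derivations.
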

\begin{proof}
Soundness is routine.  For the converse, let $\Theta$ be a set of
$\cS^\dagger$-formulas and $\theta$ a $\cS^\dagger$-formula, and suppose $\Theta
\models \theta$. By the compactness theorem for first-order logic, we
may safely assume that $\Theta$ is finite. Suppose for the moment that
$\Theta$ is satisfiable, and write $\Theta = \Phi \cup \Psi$, where
$\Phi$ is a set of universal formulas and $\Psi$ a set of existential
formulas. We consider three cases: (1) $\theta$ is universal; (2)
$\theta$ is existential and $\Psi$ is non-empty; and (3) $\theta$ is
existential and $\Psi$ is empty.

In the remainder of this proof, we simplify our notation to write
$\vdash$ for $\vdash_{\S^\dagger}$.

\vspace{0.25cm}

\noindent Case (1): Write $\theta = \forall (l_0,m_0)$. By
Lemma~\ref{lma:1}, Part 1, $\Phi \models \theta$. Let
\begin{equation}
V_0 = \{ l_0, \mbar_0 \}^\Phi.
\end{equation}
Then $V_0$ is non-empty and $\Phi$-closed, by Lemma~\ref{lma:2}.  We
claim that $V_0$ is inconsistent. For suppose otherwise.  By
Lemma~\ref{lma:3}, let $V$ be a consistent complete extension of $V_0$,
and define $\fA$ to be the structure with singleton domain $\{a \}$
given by
\begin{equation*}
p^\fA = 
\begin{cases}
\{ a \} \mbox{ if $p \in V$}\\
\emptyset \mbox{ otherwise,}
\end{cases}
\end{equation*}
for every atom $p$.  It is easily seen that $\fA \models \Phi \cup
\bar{\theta}$, a contradiction. So by Lemma~\ref{lma:2.5}, there exist
literals $l, l' \in \{ l_0, \mbar_0 \}$ such that
\begin{equation}
\Phi \vdash_{\rm BTA} \forall(l,\bar{l}').
\label{eq1}
\end{equation}
By exchanging $l$ and $l'$ if necessary, we have two sub-cases: (i)
$l= l_0$ and $l' = \mbar_0$; (ii) $l = l' \in \{l_0,\mbar_0\}$.  
In sub-case (i), \eqref{eq1} simply asserts that $\Phi \vdash \theta$.
In sub-case (ii), we have one of the derivations
from $\Phi$
\begin{equation*}
\infer[({\mathrm A})]{\forall(\mbar_0,\bar{l}_0)}
      {\infer*{\forall(l_0,\bar{l}_0)}{}}
\hspace{2cm}
\infer[({\mathrm A}),]{\forall(l_0,m_0)}
      {\infer*{\forall(\mbar_0,m_0)}{}}
\end{equation*}
and so $\Phi \vdash \theta$. 

\vspace{0.25cm}

\noindent Case (2): Write $\theta = \exists (l,m)$. By
Lemma~\ref{lma:1}, Part~2, there exists $\psi = \exists (l_0,m_0) \in
\Psi$ such that $\Phi \cup \{ \psi \} \models \theta$. 
Set
\[
V_0 = \{ l_0, m_0, \bar{l} \}^\Phi.
\]
The set $V_0$ must be inconsistent. For otherwise, we can easily
construct, using a parallel argument to that employed in Case (1), a
structure $\fA$ such that $\fA \models \Phi \cup \{ \psi, \bar{\theta}
\}$, contradicting the fact that $\Phi \cup \{ \psi \} \models
\theta$. Hence, there exist literals $l_1, l_2 \in \{ l_0, m_0,
\bar{l} \}$ such that $\Phi \vdash_{\rm BTA}
\forall(l_1,\bar{l}_2)$. If $l_1$ and $l_2$ are both in $\{ l_0, m_0
\}$, then $\Theta$ is unsatisfiable, contrary to
hypothesis.  So assume, without loss of generality, that $l_2 =
\bar{l}$. Thus, $\Phi \vdash_{\rm BTA} \forall(l_1,l)$, and we have the
following possibilities: (i) $l_1 = l_0$; (ii) $l_1 = m_0$; (iii) $l_1
= \bar{l}$. Possibility (i) yields $\Phi
\vdash \forall(l_0,l)$. Possibility (ii) yields $\Phi \vdash
\forall(m_0,l)$.  Possibility (iii) also yields $\Phi
\vdash \forall(l_0,l)$, via the derivation
\begin{equation*}
\infer[(\mathrm A).]{\forall(l_0,l)}
                   {\infer*{\forall(\bar{l},l)}{}}
\end{equation*}
In other words, we have proved:
\begin{equation}
\mbox{either } \Phi \vdash \forall(l_0,l) \mbox{ or } \Phi \vdash \forall(m_0,l).
\label{eq:possibilities1}
\end{equation}
Replacing $l$ by $m$ in the above argument yields, in exactly the
same way:  
\begin{equation}
\mbox{either } \Phi \vdash \forall(l_0,m) \mbox{ or } \Phi \vdash \forall(m_0,m).
\label{eq:possibilities2}
\end{equation}
Considering~\eqref{eq:possibilities1}, we may assume, by transposing
$l_0$ and $m_0$ if necessary, that $\Phi \vdash \forall(l_0,l)$. This
leaves us with the two possibilities in~\eqref{eq:possibilities2}.  If
$\Phi \vdash \forall(m_0,m)$, we have
\begin{equation*}
\infer[(\mathrm D);]{\exists(l,m)}
                   {\infer[(\mathrm D)]{\exists(l,m_0)}  
                                       {\exists(l_0,m_0) 
                                        \qquad &
                                        \infer*{\forall(l_0,l)}
                                               {}} 
                    \qquad &
                    \infer*{\forall(m_0,m)}{}}
\end{equation*}
if, on the other hand, 
$\Phi \vdash \forall(l_0,m)$, we have
\begin{equation}
\infer[(\mathrm D).]{\exists(l,m)}
                   {\infer[(\mathrm D)]{\exists(l,l_0)}  
                                       {\infer[(\mathrm I)]{\exists(l_0,l_0)} 
                                                           {\exists(l_0,m_0)}
                                        \qquad &
                                        \infer*{\forall(l_0,l)}
                                               {}} 
                    \qquad &
                    \infer*{\forall(l_0,m)}{}}
\label{eq:d2}
\end{equation}
Either way, $\Theta \vdash \theta$, as required.

\vspace{0.25cm}

\noindent Case (3): Write $\theta = \exists (l,m)$. Since $\Theta =
\Phi \models \theta$, by Lemma~\ref{lma:1}, Part 3, $\Phi \models
\forall(\bar{l},l)$ and $\Phi \models \forall(\bar{m},m)$. By Case
(1), $\Phi \vdash \forall(\bar{l},l)$ and $\Phi \vdash
\forall(\bar{m},m)$. Therefore, we have the derivation from $\Phi$
\begin{equation*}
\infer[(\mathrm D),]{\exists(l,m)}
                   {\infer[(\mathrm N)]{\exists(l,l)}  
                                       {\infer*{\forall(\bar{l},l)}{}} 
                    \qquad &
                    \infer[(\mathrm A)]{\forall(l,m)}
                                       {\infer*{\forall(\bar{m},m)}{}}}\
\end{equation*}
and $\Theta \vdash \theta$.

\vspace{0.25cm}

\noindent We have now shown that, for $\Theta$ satisfiable, 
$\Theta \models \theta$ implies $\Theta \vdash \theta$. It remains
only to consider the case where $\Theta$ is unsatisfiable. If so,
let $\Theta' \cup \{\theta'\}$ be a minimal unsatisfiable subset of
$\Theta$. (Remember, we are allowed to assume that $\Theta$ is finite.)
Hence, $\Theta'$ is satisfiable, with
$\Theta' \models \bar{\theta}'$. By the previous argument, 
$\Theta' \vdash \bar{\theta}'$. Thus, we have the derivation from 
$\Theta' \cup \{\theta'\}$
\begin{equation*}
\infer[(\mathrm X),]{\theta}
                   {\infer*{\bar{\theta}'}{}
                    \qquad &
                    \theta'}
\end{equation*}
and $\Theta \vdash \theta$. 
\end{proof}

\begin{theorem}  
The derivation relation $\vdash_\S$ is sound and complete for $\cS$.
\label{theo:Scompleteness-redux}
\end{theorem}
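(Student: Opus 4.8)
The plan is to obtain completeness of $\vdash_\S$ by \emph{specialising} the completeness argument already given for $\vdash_{\S^\dagger}$ (Theorem~\ref{theo:S+completeness}), rather than running a fresh induction. Soundness is immediate, since every rule of $\S$ is sound and a derivation relation $\vdash_\X$ is sound exactly when each rule of $\X$ is. For completeness, suppose $\Theta \models \theta$ with $\Theta \cup \{\theta\} \subseteq \cS$; since $\cS \subseteq \cS^\dagger$, Theorem~\ref{theo:S+completeness} already yields a $\vdash_{\S^\dagger}$-derivation of $\theta$ from $\Theta$. By Lemma~\ref{lma:S+S}, any application of an $\S^\dagger$-rule all of whose formulas lie in $\cS$ is in fact an application of an $\S$-rule; hence it suffices to show that the particular derivation built in the proof of Theorem~\ref{theo:S+completeness} can be chosen to mention only $\cS$-formulas. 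Once this is established, the same tree is, verbatim, a $\vdash_\S$-derivation, giving $\Theta \vdash_\S \theta$.

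Two points drive the verification, and both exploit the restricted shape of $\cS$-formulas (at least one argument positive for existentials; positive subject or negative object for universals). First, I would show that \emph{Case~(3) of the earlier proof never arises for $\cS$}: a satisfiable set $\Phi$ of universal $\cS$-formulas cannot entail an existential $\cS$-formula. Indeed, by Lemma~\ref{lma:1}(3) such an entailment would force $\Phi \models \forall(\bar p,p)$ for some positive atom $p$ (``everything is a $p$''), and by the reasoning of Case~(1) this would require $\Phi \vdash_{\rm BTA} \forall(\bar p,p)$. But a short induction on $(\mathrm B)$-, $(\mathrm T)$- and $(\mathrm A)$-derivations shows that no formula of the out-of-$\cS$ shape $\forall(\bar p,q)$ is $\vdash_{\rm BTA}$-derivable from $\cS$-universal premises, so $\forall(\bar p,p)$ is not derivable either. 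Consequently rule $(\mathrm N)$, which is present in $\S^\dagger$ but absent from $\S$, is never invoked. Second, the same induction shows that every universal formula produced by the $\vdash_{\rm BTA}$-subderivations of Cases~(1) and~(2) is itself an $\cS$-formula; combined with $\theta,\psi \in \cS$, this keeps all the $(\mathrm D)$- and $(\mathrm A)$-steps inside $\cS$, where Lemma~\ref{lma:S+S} converts $(\mathrm D)$ into $(\mathrm{D2})$ or $(\mathrm{D3})$.

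The main obstacle is the mismatch between $\S$ and $\S^\dagger$ at the two rules that behave differently on negative literals, namely $(\mathrm N)$ and $(\mathrm I)$: in $\S$ the rule $(\mathrm I)$ is stated only for positive atoms, whereas in $\S^\dagger$ it applies to arbitrary literals. Eliminating $(\mathrm N)$ is handled by the vacuity of Case~(3) above; the delicate part is ensuring that the single use of $(\mathrm I)$ in Case~(2) (derivation~\eqref{eq:d2}) is always applied to a \emph{positive} literal. Here I would use the freedom granted by the identification $\exists(e,f)=\exists(f,e)$ to orient the chosen existential premise $\psi = \exists(l_0,m_0)$, together with the observation from the first induction that a derived universal $\forall(\bar q,n)$ forces $n$ to be negative. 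A brief case split on which of the alternatives in~\eqref{eq:possibilities1} and~\eqref{eq:possibilities2} hold then shows that whenever $(\mathrm I)$ is genuinely needed the relevant literal must be positive (otherwise both arguments of $\theta$ would be negative, contradicting $\theta \in \cS$); in the remaining cases $(\mathrm I)$ can be bypassed entirely by a pair of $(\mathrm D)$-steps. With these checks in place every node of the derivation is an $\cS$-formula and every rule-application an $\S$-instance, completing the proof.
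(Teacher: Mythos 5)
Your proposal is correct in substance, but it establishes the key step by a genuinely different route from the paper. Both arguments share the same skeleton: soundness is immediate, and completeness is obtained by taking the $\vdash_{\S^\dagger}$-derivation supplied by Theorem~\ref{theo:S+completeness} and using Lemma~\ref{lma:S+S} to read any $\cS$-internal $\S^\dagger$-derivation as an $\S$-derivation. The difference lies in how one shows the derivation stays inside $\cS$. You re-open the \emph{construction} in the proof of Theorem~\ref{theo:S+completeness} and verify, case by case, that it can be steered through $\cS$-formulas: your induction showing that (B)/(T)/(A)-derivations from $\cS$-premises never leave $\cS$ is sound; your argument that Case~(3) (and hence rule (N)) is vacuous over $\cS$ is sound; and your analysis of the (I)-step in derivation~\eqref{eq:d2} checks out --- if that case arises with $l_0$ negative, then $\forall(l_0,l),\forall(l_0,m)\in\cS$ would force both $l$ and $m$ negative, contradicting $\theta\in\cS$. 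The paper instead proves a purity statement about \emph{arbitrary} derivations: for satisfiable $\Theta\subseteq\cS$, (i) no formula of the shape $\forall(\bar{p},q)$ is derivable at all, since it is not even entailed (the structure in which every atom has empty extension satisfies all universal $\cS$-formulas, and Lemma~\ref{lma:1} plus soundness finishes); and (ii) formulas of the shape $\exists(\bar{p},\bar{q})$ are \emph{absorbing}: (X) cannot fire when $\Theta$ is satisfiable, and the only rules consuming such a premise, (I) and (D), return a formula of the same shape (the (D)-side condition uses (i) to force the other literal negative), so any derivation containing one must conclude in one. Hence \emph{every} $\vdash_{\S^\dagger}$-derivation of an $\cS$-conclusion from satisfiable $\cS$-premises already lies in $\cS$. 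The paper's argument is shorter and insensitive to the internals of the earlier completeness proof; yours is more laborious and tightly coupled to that construction, but it yields extra information (Case~(3) never occurs for $\cS$-sequents, so rule (N) is genuinely idle there), and it avoids the absorption analysis.

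There is one genuine omission: your Cases~(1)--(3) analysis lives entirely inside the \emph{satisfiable} branch of the proof of Theorem~\ref{theo:S+completeness}, so as written you only obtain $\Theta\vdash_\S\theta$ when $\Theta$ is satisfiable. For unsatisfiable $\Theta$ you must add the final step the paper uses: take a minimal unsatisfiable subset $\Theta'\cup\{\theta'\}\subseteq\Theta$, note that $\Theta'$ is satisfiable with $\Theta'\models\bar{\theta}'$ and $\bar{\theta}'\in\cS$, apply the satisfiable case to get $\Theta'\vdash_\S\bar{\theta}'$, and conclude $\Theta\vdash_\S\theta$ by one application of (X), which is a rule of $\S$ and here involves only $\cS$-formulas. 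This is a one-sentence repair, but without it the proof does not cover all instances of $\Theta\models\theta$.
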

\begin{proof}
Soundness is obvious. Suppose that $\Theta$ is a satisfiable set of
$\cS$-formulas and $\theta$ an $\cS$-formula such that $\Theta \models
\theta$.  By Theorem~\ref{theo:S+completeness}, $\Theta \vdash_{\S^\dagger}
\theta$.  We must show that $\Theta \vdash_{\S} \theta$.  Since
derivations (in $\vdash_{\S^\dagger}$) are finite, we may assume without
loss of generality that $\Theta$ is finite. For the moment, let us
further assume that $\Theta$ is satisfiable.

Let $p$, $q$ be any unary atoms.  We claim that $\Theta \not \models
\forall(\bar{p},q)$.  For otherwise, by Lemma~\ref{lma:1} Part~1,
there exists a subset $\Phi \subseteq \Theta$ of {\em universal}
formulas such that $\Phi \models \forall(\bar{p},q)$. Now let $\fA$ be
any structure such that $o^\fA= \emptyset$ for every unary atom
$o$. Since $\Phi \subseteq \cS$, $\fA \models \Phi$; but $\fA \not
\models \forall(\bar{p},q)$, a contradiction. Furthermore, since
$\vdash_{\S^\dagger}$ is sound, it follows of course that $\Theta \not
\vdash_{\S^\dagger} \forall(\bar{p},q)$.

Consider any derivation in $\vdash_{\S^\dagger}$ with premises $\Theta$.  We
claim that, if this derivation contains any formula of the form
$\exists(\bar{p},\bar{q})$, then the final conclusion of this
derivation is also of that form. For, since $\Theta$ is, by
assumption, satisfiable, (X) cannot be used in the derivation; and the
only other rules in $\S^\dagger$ with any premise of the form
$\exists(\bar{p},\bar{q})$ are (I) and (D), thus:
\begin{equation*}
\infer[{\mbox{\small (I)}}]{\exists (\bar{p},\bar{p})} {\exists (\bar{p},\bar{q})}
\hspace{1cm}
\infer[{\mbox{\small (D)}}.]{\exists (\bar{q},l)}
                   {\exists (\bar{p},\bar{q}) & 
             \qquad \forall (\bar{p},l)}    
\end{equation*}
By the observation of the previous paragraph, the literal $l$ in this
instance of (D) must be negative. In either case, the consequent of
the rule is of the form $\exists(\bar{p},\bar{q})$, as claimed.

Now take any derivation of $\theta$ from $\Theta$ in
$\vdash_{\S^\dagger}$. (We know that one exists.)  Since $\theta \in
\cS$ is definitely not of the form $\exists(\bar{p},\bar{q})$, it
follows from the previous two paragraphs that this derivation cannot
involve any formula of either of the forms $\forall(\bar{p},q)$ or
$\exists(\bar{p},\bar{q})$. That is, all the formulas are in $\cS$. By
Lemma~\ref{lma:S+S}, $\Theta \vdash_{\S} \theta$.

This proves the theorem in the case where the (finite) set of
$\cS$-formulas $\Theta$ is satisfiable.  If $\Theta$ is unsatisfiable,
let $\Theta' \cup \{\theta'\}$ be a minimal unsatisfiable subset of
$\Theta$. Hence, $\Theta'$ is satisfiable, with $\Theta' \models
\bar{\theta}'$.  By the result just established, $\Theta' \vdash_{\S}
\bar{\theta}'$.  By a single application of (X), $\Theta \vdash_{\S}
\theta$.
\end{proof}

\section{$\cR$: a refutation complete system}
\label{sec:proofR}
We turn from $\cS$ to $\cR$.  We exhibit a set $\R$ of syllogistic
rules in $\cR$, and prove that $\vdash_\R$ is a sound and {\em
  refutation}-complete derivation relation for $\cR$; we also prove
that there is no finite set of rules $\X$ in $\cR$ such that $\X$ is
sound and {\em complete} for $\cR$.  This highlights the importance of (RAA)
in obtaining complete logics.  Our refutation-completeness proof also
implies that the problem of determining whether a given $\cR$-sequent
is valid is in \NLOGSPACE, a fact which is otherwise not obvious.

\subsection{There are no sound and complete 
syllogistic systems for $\cR$}

\begin{theorem}
There exists no finite set $\X$ of syllogistic rules in $\cR$ such that
$\vdash_\X$ is both sound and complete.
\label{theo:noSyllR}
\end{theorem}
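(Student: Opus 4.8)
The plan is to argue by contradiction. Suppose $\X$ is a finite set of sound syllogistic rules in $\cR$ with $\vdash_\X$ complete. Since $\X$ is finite, there is a constant $N$ bounding the number of (unary or binary) atoms occurring in any single rule of $\X$. The conceptual engine of the argument is that a direct derivation is \emph{monotone and RAA-free}: because every rule in $\X$ is sound, every formula occurring in a $\vdash_\X$-derivation from $\Theta$ is a semantic consequence of the premises used to obtain it. Hence a direct proof can only ever chain through $\cR$-formulas that its premises actually \emph{entail}; it can never introduce and later discharge a temporary hypothesis, and so can never carry out a genuine case analysis. The sequent~\eqref{eq:seq2} (some artist hates some artist; no beekeeper hates any beekeeper; hence some artist is not a beekeeper) is the paradigm of a valid inference whose only justification is an irreducible case split --- exactly the manoeuvre that (RAA) supplies and a direct system lacks. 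The task is therefore to promote this single obstruction into an infinite family that no fixed, finite, bounded-size rule set can absorb.

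Finiteness of $\X$ is essential here, and this must guide the construction: the \emph{infinite} set of all sound rules already makes derivation trivially complete, since each valid sequent is itself a sound rule. Accordingly, I would seek an infinite family of valid $\cR$-sequents $\Theta_n \models \theta_n$, each \emph{satisfiable} with $\theta_n$ existential and $\Theta_n \cup \{\bar{\theta}_n\}$ unsatisfiable (so that each sits squarely in the gap between refutation-completeness and completeness), chosen to lie in pairwise distinct substitution classes and, crucially, to be \emph{indecomposable}: no $\theta_n$ is obtainable by chaining sound rules through $\cR$-formulas entailed by a proper part of $\Theta_n$. Granting indecomposability, any $\vdash_\X$-derivation of $\theta_n$ would have to produce $\theta_n$ by a single rule instance whose antecedents jointly mention $\Omega(n)$ atoms; for $n > N$ this is impossible, so $\Theta_n \not\vdash_\X \theta_n$ although $\Theta_n \models \theta_n$, contradicting completeness. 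Since $N$ depends only on the fixed $\X$, a single family defeats all candidate $\X$ at once.

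The model-theoretic tool for indecomposability is the disjoint-union and domain-restriction technique already used in the proof of Lemma~\ref{lma:1}, now tracking the binary interpretations $r^\fA$ alongside the unary ones: for each proper sub-part of $\Theta_n$ one exhibits a structure certifying that the $\cR$-formulas it entails are too weak to bridge the case split, so the split cannot be pre-packaged as a smaller entailed lemma and then fed, via (B)-style chaining, into a bounded final rule. One subtlety I would keep in view is that indecomposability cannot be established by atom-counting alone: universal ``routing'' premises such as $\forall(p_1,p_2),\dots,\forall(p_{n-1},p_n)$ already let a direct system derive an existential (e.g. $\exists(p_0,p_n)$) using unboundedly many atoms, so sheer atom-dependence is \emph{not} an obstruction. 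What must be isolated is precisely the presence of a genuine case split.

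The main obstacle is the construction of the family itself. Because $\cR$-formulas have bounded size and cannot directly express counting or disjunction, one cannot force a single $n$-way split whose arena spans $\Omega(n)$ atoms by naively forbidding a configuration: any such prohibition is already violated by a single witnessing pair, which makes the contradiction reachable from a bounded sub-sequent and destroys both indecomposability and the atom lower bound. The delicate point is therefore to make the dependence on all $\Omega(n)$ atoms genuine --- the members must differ in \emph{shape}, not merely in atom naming --- using structure between the existential witness and the forbidding constraint that resists summary by any single entailed $\cR$-formula (in particular, resists collapse under rule (B)), while keeping each $\Theta_n$ satisfiable and each entailment $\Theta_n \models \theta_n$ valid and split-requiring. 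Engineering this resistant shape is the crux on which the whole argument turns; once the family is in hand, the clash with the atom bound $N$ is immediate. This outcome is, moreover, perfectly consistent with the refutation-complete direct system for $\cR$ developed in the sequel, since refutation uses the discharging step only once, as a single final application of (RAA).
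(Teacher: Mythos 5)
Your strategic outline does coincide with the paper's actual strategy: exploit the finiteness of $\X$ by exhibiting, for each $n$, a valid sequent whose validity rests on a case split (empty versus non-empty subject), together with an ``indecomposability'' property guaranteeing that chaining sound rules through consequences of bounded parts of the premise set can never reach the conclusion. But your proposal stops exactly where the proof has to start. You never construct the family $(\Theta_n,\theta_n)$, never formulate the indecomposability property in a form that supports an induction on derivations, and you yourself identify the missing construction as ``the crux on which the whole argument turns.'' A plan plus an honest admission that its essential step is unsolved is a genuine gap, not a proof: all of the mathematical content of Theorem~\ref{theo:noSyllR} lives in that step.

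For comparison, here is how the paper fills the gap. It takes $n$ larger than the number of antecedents of any rule in $\X$ (counting antecedents, not atoms, is what the argument needs), and lets $\Gamma$ consist of the $r$-chain \eqref{eq:larryProof1}, the bridge \eqref{eq:larryProof2}, and---crucially---the padding formulas \eqref{eq:larryProof3} and \eqref{eq:larryProof4}; the conclusion is $\gamma=\forall(p_1,\exists(p_n,r))$, valid by a case split on whether $p_1^\fA$ is empty. Note that $\gamma$ is \emph{universal}; your insistence on an existential conclusion is an unnecessary design constraint. The usable form of your indecomposability is Claim~\ref{claim:R}: every $\cR$-formula entailed by $\Delta_i=\Gamma\setminus\{\forall(p_i,\exists(p_{i+1},r))\}$ already lies in $\Gamma$. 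This is subtly different from, and stronger than, asking that proper parts of $\Gamma$ entail nothing that bridges the split: rather than making the $\Delta_i$ weak, the paper makes $\Gamma$ \emph{absorbing}, so a simple induction shows that every formula derivable from $\Gamma$ via $\X$ stays inside $\Gamma$, while $\gamma\notin\Gamma$. (In particular the contradiction is not that a final rule instance would have to ``mention $\Omega(n)$ atoms''; it is that every instance's antecedent set fits inside some $\Delta_i$, whence by soundness and the Claim its consequent is already a member of $\Gamma$.) Proving the Claim is the real labour your proposal defers: one must run through every syntactic shape of $\cR$-formula and, for each candidate not in $\Gamma$, exhibit a countermodel of $\Delta_i$---the broken chain $\fA_i$, its perturbations with an extra point or an extra $r$-pair, the truncated chain $\fC_i$, and the empty-extension structure. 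The difficulties you gesture at (resistance to collapse under (B)-style chaining, making the dependence on all $n$ atoms genuine rather than nominal) are precisely what the padding formulas and these countermodels are engineered to overcome.
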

\begin{proof}
Let $\X$ be any finite set of syllogistic rules for $\cR$, and suppose
$\vdash_\X$ is sound. We show that it is not complete. Since $\X$ is
finite, fix $n \in \N$ greater than the number of antecedents in any
rule in $\X$.

Let $p_1, \ldots, p_n$ be distinct unary atoms and $r$ a binary atom.
Let $\Gamma$ be the following set of $\cR$-formulas:
\begin{align}
& \forall(p_i, \exists(p_{i+1},r)) & (1 \leq i <  n)
\label{eq:larryProof1} \\
& \forall (p_1, \forall(p_n,r))
\label{eq:larryProof2} \\
& \forall(p,p)  &  (p \in \bP)
\label{eq:larryProof3} \\
& \forall(p_i,\bar{p}_j)  & (1 \leq i < j \leq n)
\label{eq:larryProof4}
\end{align}
and let $\gamma$ be the $\cR$-formula $\forall(p_1,
\exists(p_n,r))$. Observe that $\Gamma \models \gamma$. 
To see this, let $\fA\models\Gamma$.    If
$p_1^\fA = \emptyset$, then trivially $\fA \models \gamma$; on the
other hand, if   $p_1^\fA \neq \emptyset$, a
simple induction using formulas~\eqref{eq:larryProof1} shows that
$p_i^\fA \neq \emptyset$ for all $i$ ($1 \leq i \leq n$), whence $\fA
\models \gamma$ by~\eqref{eq:larryProof2}.  

For $1 \leq i < n$, let $\Delta_i = \Gamma \setminus
\set{\forall(p_i,\exists(p_{i+1},r))}$.
\begin{claim}
If $\phi \in \cR$ and $\Delta_i \models \phi$, then $\phi \in \Gamma$.
\label{claim:R}
\end{claim}
It follows from this claim that $\Gamma \not \vdash_\X \gamma$. For,
since no rule of $\X$ has more than $n-1$ antecedents, any instance of
those antecedents contained in $\Gamma$ must be contained in
$\Delta_i$ for some $i$.  Let $\delta$ be the corresponding instance
of the consequent of that rule.  Since $\vdash_\X$
is sound, $\Delta_i \models \delta$. By
Claim~\ref{claim:R}, $\delta \in \Gamma$. By induction on the number
of steps in derivations, we see that no derivation from $\Gamma$ leads
to a formula not in $\Gamma$. But $\gamma \not \in \Gamma$.

\begin{proof}[Proof of Claim]
Certainly, $\Delta_i$ has a model, for instance the model $\fA_i$ given by:
\begin{equation}
\begin{array}{l}
\xymatrix@C-1pc@R-1pc{  & &  &   \\
p_1\ar `u[ur] `[rrrrrrrrrr]  [rrrrrrrrrr] 
 \ar[r] & p_2 \ar[r] && \cdots & \ar[r]  & p_i 
& p_{i+1} \ar[r] & &\cdots & \ar[r]  & p_{n}\\
%%A_{N+1} & \ar[l] A_1\ar[r] & A_2 \ar[r] && \cdots & \ar[r]  & A_i 
%& 
} \\
\end{array}
\label{models}
\end{equation}
Here, $A = \{p_1, \ldots, p_n\}$, $p_j^{\fA_i} = \{p_j\}$ for all $j$
($1 \leq j \leq n$), and $r^{\fA_i}$ is indicated by the arrows. All
other atoms (unary or binary) are assumed to have empty
extensions.  Note that there is no arrow from $p_i$ to $p_{i+1}$.

We consider the various possibilities for $\phi$ in turn and check
that either $\phi \in \Gamma$ or there is a model of $\Delta_i$ in
which $\phi$ is false.

\vspace{0.25cm}

\noindent
(i) $\phi$ is of the form $\forall(p,p)$. Then $\phi \in \Gamma$
by~\eqref{eq:larryProof3}.

\vspace{0.25cm}

\noindent
(ii) $\phi$ is not of the form $\forall(p,p)$, and involves at least
  one unary or binary atom other than $p_1, \ldots, p_n, r$.  In this
  case, it is straightforward to   
  modify $\fA_i$ so as to obtain a
  model $\fA'_i$ of $\Delta_i$ such that $\fA'_i \not \models
  \phi$. Henceforth, then, we may assume that $\phi$ involves no atoms
  other than $p_1, \ldots, p_n, r$.

\vspace{0.25cm}

\noindent
(iii) $\phi$ is of the form $\forall(p_j,p_k)$. If $j = k$, then $\phi \in \Gamma$,
by~\eqref{eq:larryProof3}.  If $j \neq k$, then $\fA_i \not \models
\phi$ by inspection.

\vspace{0.25cm}

\noindent
(iv) $\phi$ is of the form $\forall(p_j,\bar{p}_k)$.  If $j = k$, then
$\fA_i \not \models \phi$, since $p_j^{\fA_i} \neq \emptyset$.  If $j
\neq k$, then $\phi \in \Gamma$, by~\eqref{eq:larryProof4}
and the identification $\forall(p_j,\bar{p}_k) =
\forall(p_k,\bar{p}_j)$.

\vspace{0.25cm}

\noindent
(v) $\phi$ is of the form $\forall(p_j,\forall(p_k, r))$. If $j=1$ and
$k= n$, then $\phi \in \Gamma$, by~\eqref{eq:larryProof2}.  So we may
assume that either $j > 1$ or $k< n$, in which case, $k \neq j+1$
implies $\fA_i \not \models \phi$, by inspection. Hence, we may assume
that $\phi= \forall(p_j,\forall(p_{j+1}, r))$, with $j <n$. Let
$\fB_{i,j}$ be the structure obtained from $\fA_i$ by adding a second
point $b$ to  the interpretation of $p_{j+1}$, and to which $p_j$ is
not related by $r$.  In pictures:
\begin{equation*}
\begin{array}{l}
\xymatrix@C-1.1pc@R-1pc{  & &  &   \\
p_1\ar `u[ur] `[rrrrrrrrrrr]  [rrrrrrrrrrr] 
 \ar[r] & p_2 \ar[r] & \cdots &
 p_j \ar[r] & p_{j+1} \ar[r] & p_{j+2} \ar[r] & \cdots &  \ar[r]  & p_i 
& p_{i+1} \ar[r] &\cdots   & p_{n}\\
%%A_{N+1} & \ar[l] A_1\ar[r] & A_2 \ar[r] && \cdots & \ar[r]  & A_i 
%& 
 & & & & b \ar[ur]& & &\\
} \\
\end{array}
\end{equation*}
(This picture shows $j+2 < i$. 
 Similar pictures are possible in all other cases.)
By inspection, $\fB_{i,j} \models \Delta_i$, but $\fB_{i,j} \not
\models \phi$.

\vspace{0.25cm}

\noindent
(vi) $\phi$ is of the form $\forall(p_j,\exists(p_k,r))$. If $k =
j+1$, then $\phi \in \Gamma$, by~\eqref{eq:larryProof1}. Moreover, if
$k \neq j+1$, then, unless $j = 1$ and $k=n$, $\fA_i \not \models
\phi$, by inspection.  Hence we may assume $\phi =
\forall(p_1,\exists(p_n,r))$. Let $\fC_i$ be the structure:
\begin{equation*}
\begin{array}{l}
\xymatrix@C-1pc{
p_1
 \ar[r] & p_2 \ar[r] && \cdots & \ar[r]  & p_i, 
} \\
\end{array}
\end{equation*}
with $p_j^{\fC_i} = \emptyset$ for all $j$ ($i < j \leq n$). Then
$\fC_i \models \Delta_i$, but $\fC_i \not \models \phi$.

\vspace{0.25cm}

\noindent
(vii) $\phi$ is of either of the forms
$\forall(p_j,\forall(p_k,\bar{r}))$,
$\forall(p_j,\exists(p_k,\bar{r}))$. Define $\fA''_i$ to be like
$\fA_i$ except that $r^{\fA_i''}$ additionally contains the pair of
points $\langle p_j, p_k \rangle$.  By inspection, $\fA''_i \models
\Delta_i$, but $\fA''_i \not \models \phi$.

\vspace{0.25cm}

\noindent
(viii) $\phi$ is of the form $\exists(p,c)$. Let $\fA_0$ be a
structure over any domain in which every atom has empty extension.
Then $\fA_0 \models \Delta_i$, but $\fA_0 \not \models \phi$.
\end{proof}
This also completes the proof of Theorem~\ref{theo:noSyllR}.
\end{proof}

\subsection{A refutation-complete system for $\cR$}
\label{subsec:R}
Theorem~\ref{theo:noSyllR} notwithstanding, we exhibit below a finite
set $\R$ of rules in $\cR$, such that $\vdash_\R$ is sound and {\em
refutation}-complete.  We remind the reader that $p$ and $q$ range
over unary atoms, $c$ over c-terms, and $t$ over binary literals.
\begin{equation*}
\begin{array}{llll}
\infer[{\mbox{\small (D1)}}]{\exists (p,c)}
                   {\exists (p,q) & 
             \qquad \forall (q,c)}
&
\infer[{\mbox{\small (B)}}]{\forall (p,c)}
                   {\forall (p,q) & 
             \qquad \forall (q,c)}\\
\\
\infer[{\mbox{\small (D2)}}]{\exists (q,c)}
                   {\forall (p,q) & 
             \qquad \exists (p,c)}
&
\infer[{\mbox{\small (T)}}]{\forall (p,p)}{}
\qquad \infer[{\mbox{\small (I)}}]{\exists (p,p)}
                   {\exists (p,c)}    
\\ \\
\infer[{\mbox{\small (D3)}}]{\exists (p,\bar{q})}
                   {\forall (q,\bar{c}) & 
             \qquad \exists (p,c)} 
& \infer[{\mbox{\small (A)}}]{\forall (p,c)}
                   {\forall (p, \bar{p})}
\qquad
\infer[{\mbox{\small (II)}}]{\exists (q, q)}
                            {\exists (p, \exists (q, t))}\\
\ \\
\infer[{\mbox{\small ($\forall\forall$)}}]{\forall (p,\exists(q, t))}
                     {\forall (p, \forall (q', t)) & 
                             \qquad 
                             \exists (q, q')}
\hspace{.5cm}&
\infer[{\mbox{\small ($\exists\exists$)}}]{\exists (p, \exists (q', t))}
                            {\exists (p, \exists (q, t)) & 
                             \qquad 
                             \forall (q, q')}
\\
\ \\
\infer[{\mbox{\small ($\forall\exists$)}}]{\forall (p, \exists (q', t))}
                          {\forall (p, \exists (q, t)) & 
                             \qquad 
                             \forall (q, q')}

\end{array}
\end{equation*}
Rules (D1), (D2), (D3), (B), (A), (T) and (I) are natural
generalizations of their namesakes in $\S$.    In
contrast, ($\forall\forall$), ($\exists\exists$), ($\forall\exists$)
and (II) express genuinely relational logical principles.  In some
settings, these last rules are called \emph{monotonicity principles}.
Because we seek only refutation-completeness for $\vdash_\R$, we do
not need a version of the rule (X).

To illustrate these rules, let $n$ be any integer greater than 1, let
\begin{equation*}
\Gamma^* = \{ \forall(p_i,\exists(p_{i+1}, r)) \mid 1 \leq i < n \} \cup
\{ \forall(p_1,\forall(p_n, r))\}
\end{equation*}
and let $\gamma = \forall(p_1,\exists(p_n, r))$.  Noting that
$\bar{\gamma} = \exists(p_1,\forall(p_n, \bar{r}))$, we have the
derivation (shown here for $n > 3$)
\begin{equation*}
   \infer[{\mbox{\small (D3),}}]
   {\exists (p_1, \bar{p}_1)}
   {\infer[{\mbox{\small ($\forall\forall$)}}]{\forall(p_1,\exists(p_n, r))}
          {\forall(p_1,\forall(p_n, r)) 
           & \hspace{-2cm} \deduce{\exists (p_n, p_n)}
            {\deduce{\ddots}
             {\infer[{\mbox{\small (II)}}]{\exists (p_3, p_3)}
              {\infer[{\mbox{\small (D1)}}]{\exists(p_2, \exists(p_3, r))}
                     {\infer[{\mbox{\small (II)}}]{\exists (p_2, p_2)}
                  {\infer[{\mbox{\small (D1)}}]{\exists(p_1, \exists(p_2, r))}
                     {{\infer[{\mbox{\small (I)}}]{\exists (p_1, p_1)}
                         {\exists(p_1, \forall(p_n, \bar{r}))}}
                      &
                      \forall(p_1, \exists(p_2, r))
                     }
                  }
               &
               \forall(p_2,\exists(p_3, r))
               }
              }
             }
            }
           }
       &
      \hspace{-2cm}  \exists(p_1, \forall(p_n, \bar{r}))
      }
\end{equation*}
showing that $\Gamma^* \cup \{\bar{\gamma}\} \vdash_\R \bot$.  By
contrast, since $\Gamma^* \subseteq \Gamma$, where $\Gamma$ is the set
of formulas used in the proof of Theorem~\ref{theo:noSyllR}, we know
that, for {\em any} finite set  $\X$ of syllogistic 
 rules, $n$ can be 
made
sufficiently large that $\Gamma^* \not \vdash_\X \gamma$.

\paragraph{Starting on the proof}
For the remainder of this section, fix a finite, non-empty set
$\Gamma$ of $\cR$-formulas. As usual, we take the (possibly decorated)
variables $p$, $q$ to range over unary atoms, $r$ over binary atoms,
$t$ over binary literals,
and $c$, $d$ over c-terms. We write $c \Rightarrow d$ if $c
= d$ or there exists a sequence of unary atoms $p_0, \ldots, p_k$ such
that $c = p_0$, $\forall(p_k,d) \in \Gamma$, and $\forall(p_i,p_{i+1})
\in \Gamma$ for all $i$ ($0 \leq i <k$). If $V$ is a set of 
c-terms, write $V \Rightarrow d$ if $c \Rightarrow d$ for some $c
\in V$.
\begin{lemma}
Let $V$ be a set of c-terms.
\begin{enumerate}
\item If $V \Rightarrow c$, then either $c \in V$ or there 
      exists $p \in V$ such that $\Gamma \vdash_\R \forall(p,c)$;
\item if $V \Rightarrow p$, then there exists $p_0 \in V$ such that
      $\Gamma \vdash_\R \forall(p_0,p)$;
\item if $p \Rightarrow c$, then $\Gamma \vdash_\R \forall(p,c)$.
\end{enumerate}
\label{lma:4}
\end{lemma}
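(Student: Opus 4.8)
The three parts are tightly linked, and the plan is to prove Part 3 first, since Parts 1 and 2 will then follow almost immediately. The content of Part 3 is that the purely syntactic reachability relation $\Rightarrow$ between a unary atom $p$ and a c-term $c$ is faithfully simulated inside $\vdash_\R$ using only the rules (T) and (B). So first I would establish Part 3, and then read off Parts 1 and 2 as corollaries.

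For Part 3, suppose $p \Rightarrow c$. By the definition of $\Rightarrow$ there are two cases. If $p = c$, then $c$ is the unary atom $p$, and a single application of (T) yields $\Gamma \vdash_\R \forall(p,p)$, as required. Otherwise there is a witnessing sequence $p = p_0, p_1, \ldots, p_k$ of unary atoms with $\forall(p_k, c) \in \Gamma$ and $\forall(p_i, p_{i+1}) \in \Gamma$ for $0 \leq i < k$, and I would argue by induction on $k$. In the base case $k = 0$ we have $\forall(p, c) \in \Gamma$, so $\Gamma \vdash_\R \forall(p,c)$ directly from clause (1) of the definition of $\vdash_\R$. For the inductive step, the tail $p_1, \ldots, p_k$ witnesses $p_1 \Rightarrow c$, so by the induction hypothesis $\Gamma \vdash_\R \forall(p_1, c)$; combining this with the premise $\forall(p_0, p_1) \in \Gamma$ through a single application of (B),
\[
\infer[(\mathrm{B})]{\forall(p_0, c)}{\forall(p_0,p_1) & \forall(p_1,c)}
\]
gives $\Gamma \vdash_\R \forall(p, c)$, since $p_0 = p$.

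Parts 1 and 2 then reduce to Part 3. For Part 1, $V \Rightarrow c$ means $b \Rightarrow c$ for some $b \in V$. If $b = c$ then $c \in V$ and we are done; otherwise the definition of $\Rightarrow$ forces $b$ to be the head $p_0$ of the witnessing chain, hence a unary atom, so Part 3 applied to $b \Rightarrow c$ gives $\Gamma \vdash_\R \forall(b, c)$ with $b \in V$, and we take $p = b$. For Part 2, the key observation is that whenever $b \Rightarrow p$ holds with $p$ a unary atom, the term $b$ is itself necessarily a unary atom (either $b = p$, or $b$ is the head of a chain of unary atoms); hence from $V \Rightarrow p$ we obtain some unary atom $p_0 \in V$ with $p_0 \Rightarrow p$, and Part 3 yields $\Gamma \vdash_\R \forall(p_0, p)$.

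The argument is essentially routine; the only point requiring care is the bookkeeping about which terms are guaranteed to be unary atoms. Specifically, the head of any nontrivial $\Rightarrow$-chain is always a unary atom, and this is exactly what licenses the appeals to Part 3 in Parts 1 and 2 and what keeps the applications of (B) and (T) well-typed. I therefore expect no genuine obstacle beyond stating the induction on chain length cleanly and verifying these typing side-conditions.
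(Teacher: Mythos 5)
Your proof is correct and is essentially the paper's own argument spelled out in full: the paper disposes of this lemma with the single remark that it is ``almost immediate, noting that $\R$ contains the rules (B) and (T)'', and your induction on chain length via (B), with (T) and premise-membership handling the base cases, is exactly the expansion intended. Your care about which terms are forced to be unary atoms (the head of any nontrivial $\Rightarrow$-chain) correctly accounts for why Part 1 needs the disjunct $c \in V$ while Part 2 does not.
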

\begin{proof}
Almost immediate, noting that $\R$ contains the rules (B) and
(T).
\end{proof}

In the ensuing lemmas, we show that, if $\Gamma$ is consistent (with
respect to $\vdash_\R$), then $\Gamma$ is satisfiable.
As a first step, we create plenty of objects from which to construct a
potential model of $\Gamma$. If $0 \leq i \leq 2$ and $V$ is a set of
c-terms with $1 \leq |V| \leq 2$, let $b_{V,i}$ denote some object or
other, and assume that the various $b_{V,i}$ are pairwise
distinct. Now set
\begin{equation*}
B_0 = \{b_{\{p,c\},0} \mid \exists(p,c) \in \Gamma \}.
\end{equation*}
\begin{lemma}
Let $b_{V,0} \in B_0$, and let $p, c \in V$. Then
$\Gamma \vdash_\R \exists(p,c)$.
\label{lma:5}
\end{lemma}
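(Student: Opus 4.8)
The plan is to unfold the definition of $B_0$ and reduce the claim to one or two applications of rule (I) together with the standing identification $\exists(e,f) = \exists(f,e)$. Since $b_{V,0} \in B_0$, the very definition of $B_0$ supplies a formula $\exists(p_0,c_0) \in \Gamma$ with $V = \{p_0, c_0\}$, where $p_0$ is a unary atom and $c_0$ a c-term. Because this formula is a premise, $\Gamma \vdash_\R \exists(p_0,c_0)$ holds at once from clause~(1) in the definition of $\vdash_\R$.

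Now fix $p, c \in V$, where (by our variable conventions) $p$ is a unary atom and $c$ a c-term. Since $V = \{p_0, c_0\}$, each of $p$ and $c$ is one of $p_0, c_0$, and I would argue case by case. If $(p,c) = (p_0, c_0)$, the conclusion is the premise itself. If $(p,c) = (p_0, p_0)$, a single application of (I) to $\exists(p_0,c_0)$ yields $\exists(p_0,p_0)$, which is $\exists(p,c)$.

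The only point requiring care is that the unary atom $p$ might be drawn from $c_0$ rather than $p_0$; this can happen only when $c_0$ is itself a unary atom, say $c_0 = q_0$. In that situation I would first rewrite $\exists(p_0,q_0)$ as $\exists(q_0,p_0)$ using the identification $\exists(e,f) = \exists(f,e)$, giving $\Gamma \vdash_\R \exists(q_0,p_0)$ directly when $(p,c) = (q_0, p_0)$; and when $(p,c) = (q_0,q_0)$, one further application of (I) to $\exists(q_0,p_0)$ produces $\exists(q_0,q_0)$. As these exhaust the possibilities for $p, c \in V = \{p_0,c_0\}$, we obtain $\Gamma \vdash_\R \exists(p,c)$ in every case. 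I do not expect any genuine obstacle here: the lemma is a bookkeeping step, and its whole content lies in observing that (I) and the symmetry of existentials let us freely re-label which member of $V$ plays the role of the unary atom and which the role of the c-term.
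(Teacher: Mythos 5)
Your proof is correct and follows essentially the same route as the paper's: read off the generating formula $\exists(p_0,c_0)\in\Gamma$ from the definition of $B_0$, then conclude by the premise itself, the silent identification $\exists(e,f)=\exists(f,e)$, and at most one application of rule (I). The paper merely organizes the cases as $p\neq c$ versus $p=c$, whereas you enumerate the four assignments of $p,c$ to $\{p_0,c_0\}$; the content is identical, and your handling of the swap needed before applying (I) is exactly the point the paper leaves implicit.
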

\begin{proof}
If $p \neq c$, then $V = \{p,c\}$ and $\exists(p,c) \in \Gamma$
by construction. If $p = c$, then $V = \{p,d\}$ for some $d$, and 
we have the derivation
\begin{equation*}
\infer[{\mbox{\small (I)}}.]{\exists(p,p)}{\exists(p,d)}
\end{equation*}
\end{proof}
We now define sets $B_1$, $B_2$, \ldots inductively as follows. Suppose
$B_k$ has been defined.    Let
\begin{equation*}
B_{k+1} = B_k \cup \{b_{\{p\},i} \mid 1 \leq i \leq 2 \mbox{ and, for
    some $b_{V,j} \in B_k$ and some $t$, $V \Rightarrow \exists(p,t)$}\}.
\end{equation*}
Let $B = \bigcup_{0 \leq k} B_k$. Evidently, $B$ is finite.  (Indeed,
$|B|$ is bounded by a linear function of $|\Gamma|$.)  It
is immediate from the construction of $B$ that, if $b_{V,i} \in B_0$,
then $1 \leq |V| \leq 2$, $V$ contains at least one unary atom $p$,
and $i = 0$. On the other hand, if $b_{V,i} \in B_k$ for $k >0$, then
$V = \{ p\}$ for some unary atom $p$, and $i$ is either 1 or 2.  The
intuition here is that the elements of $B_0$ are witnesses for the
existential formulas of $\Gamma$, while the elements of $B_{k+1}$ are
the witnesses for existential c-terms satisfied by elements of $B_k$.

\begin{lemma} 
If $b_{V,i} \in B$, $V \Rightarrow p$ and $V \Rightarrow c$, then
$\Gamma \vdash_\R \exists(p,c)$.
\label{lma:6}
\end{lemma}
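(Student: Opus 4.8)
The plan is to argue by induction on the least $k$ for which $b_{V,i}\in B_k$. The engine of the proof is a single \emph{push-forward} step which, once we have produced a ``seed'' of the form $\Gamma\vdash_\R\exists(x,y)$ with $x,y\in V$, chases this existential along the $\forall$-edges recorded by the relation $\Rightarrow$. Concretely, suppose $V\Rightarrow p$ and $V\Rightarrow c$. By Lemma~\ref{lma:4}(2) there is a unary atom $x\in V$ with $\Gamma\vdash_\R\forall(x,p)$. If $c\in V$, the seed gives $\Gamma\vdash_\R\exists(x,c)$, and one application of (D2) to $\forall(x,p)$ and $\exists(x,c)$ yields $\exists(p,c)$. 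If $c\notin V$, then Lemma~\ref{lma:4}(1) supplies a unary atom $y\in V$ with $\Gamma\vdash_\R\forall(y,c)$; the seed gives $\exists(x,y)$, (D2) turns this into $\exists(p,y)$, and since $y$ is a unary atom, (D1) applied to $\exists(p,y)$ and $\forall(y,c)$ delivers $\exists(p,c)$. Thus in every case it suffices to produce a suitable seed $\exists(x,y)$ with $x,y\in V$.

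For the base case $b_{V,0}\in B_0$, the set $V$ arises from a formula $\exists(q,d)\in\Gamma$, so $V$ contains the unary atom $q$. Here Lemma~\ref{lma:5} already furnishes $\Gamma\vdash_\R\exists(x,y)$ for \emph{all} $x,y\in V$ with $x$ a unary atom, so the push-forward above applies verbatim and gives $\Gamma\vdash_\R\exists(p,c)$.

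For the inductive step, let $b_{V,i}\in B_k\setminus B_{k-1}$ with $k\geq1$, so that $V=\{q\}$ is a singleton and, by construction, there are $b_{V',j}\in B_{k-1}$ and a binary literal $t$ with $V'\Rightarrow\exists(q,t)$. The only seed we require is $\exists(q,q)$. To obtain it, note that $V'$ lies in $B$ and hence contains at least one unary atom $q'$, so $V'\Rightarrow q'$ by reflexivity of $\Rightarrow$. Applying the induction hypothesis to $b_{V',j}$ with the two facts $V'\Rightarrow q'$ and $V'\Rightarrow\exists(q,t)$ yields $\Gamma\vdash_\R\exists(q',\exists(q,t))$, and a single application of (II) then produces the seed $\exists(q,q)$. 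Since $V=\{q\}$, the hypotheses $V\Rightarrow p$ and $V\Rightarrow c$ read $q\Rightarrow p$ and $q\Rightarrow c$; by Lemma~\ref{lma:4}(3) we have $\forall(q,p)$ and $\forall(q,c)$, and feeding the seed $\exists(q,q)$ into the push-forward (now with $x=y=q$) produces $\exists(p,c)$.

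The main obstacle is organizational rather than deep. The key realization is that the two quite different ways of manufacturing the seed---Lemma~\ref{lma:5} for the witnesses of $B_0$, and rule (II) for the derived witnesses of $B_k$---can be funneled through the same monotonicity argument, and that the seemingly self-referential appeal to the statement being proved is legitimate precisely because the auxiliary witness $b_{V',j}$ strictly precedes $b_{V,i}$ in the stratified construction of $B$. One must also verify the literal side-conditions on (D1), (D2) and (II): the elements $x,y$ extracted from $V$ are genuinely \emph{unary} atoms, as guaranteed by the definition of $\Rightarrow$ (any non-trivial $\Rightarrow$-chain begins at a unary atom) together with Lemma~\ref{lma:4}, which is exactly what is needed since these rules only permit unary atoms in subject position.
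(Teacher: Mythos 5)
Your proof is correct and takes essentially the same route as the paper's: induction on the least stratification index $k$ with $b_{V,i}\in B_k$, using Lemma~\ref{lma:5} to supply the existential ``seed'' in the base case and the inductive hypothesis plus rule (II) to supply the seed $\exists(q,q)$ in the inductive step, after which Lemma~\ref{lma:4} and the rules (D1)/(D2) chase the existential to $\exists(p,c)$. The differences are purely cosmetic---your ``push-forward'' packaging of the chasing step, applying (D2) before (D1) where the paper uses the opposite order, and your (slightly more faithful) use of a binary literal $t$ where the paper writes a binary atom $r$.
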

\begin{proof}
Let $k$ be the smallest number such that $b_{S,i} \in B_k$. We proceed
by induction on $k$.

For the case $k = 0$, we have $b_{V,i}  = b_{V,0} \in B_0$. By Lemma~\ref{lma:4}
Part 2, there exists $q_1 \in S$ such that $\Gamma \vdash_\R \forall(q_1,p)$.
By Lemma~\ref{lma:4} Part 1, either $c \in V$ or 
there exists $q_2 \in V$ such that $\Gamma \vdash_\R \forall(q_2,c)$. In the
former case, Lemma~\ref{lma:5} yields $\Gamma \vdash_\R \exists(q_1,c)$,
so that we have the derivation
\begin{equation*}
\infer[{\mbox{\small (D2)}}.]{\exists(p,c)}{\infer*{\exists(q_1,c)}{}
                             & \qquad 
                             \infer*{\forall(q_1,p)}{\ }}
\end{equation*}
In the latter case, Lemma~\ref{lma:5} yields $\Gamma \vdash_\R
\exists(q_1,q_2)$,
so that we have the derivation
\begin{equation*}
\infer[{\mbox{\small (D2)}}.]{\exists(p,c)}
     {\infer[{\mbox{\small (D1)}}]{\exists(q_1,c)}
                                   {\infer*{\exists(q_1,q_2)}{}
                                    & \qquad 
                                    \infer*{\forall(q_2,c)}{}} 
                             & \qquad \infer*{\forall(q_1,p)}{}}
\end{equation*}

For the case $k>0$, $b_{V,i} \in B_k$ implies $V = \{ p_k \}$ for some
$p_k$, and $1 \leq i \leq 2$.  By construction of $B_k$, there exist
$b_{W,j} \in B_{k-1}$, $p_{k-1} \in W$ and binary atom $r$, such that $W
\Rightarrow \exists(p_k,r)$.  
By inductive hypothesis, $\Gamma \vdash_\R
\exists(p_{k-1},\exists(p_k,r))$, and by Lemma~\ref{lma:4} Part 3,
$\Gamma \vdash_\R \forall(p_k,p)$, and $\Gamma \vdash_\R \forall(p_k,c)$.
Therefore, we have the derivation
\begin{equation*}
\infer[{\mbox{\small (D1)}}.]{\exists(p,c)}
        {\infer[{\mbox{\small (D1)}}]{\exists(p_k,p)}
                {\infer[{\mbox{\small (II)}}]{\exists(p_k,p_k)}
                        {\infer*{\exists(p_{k-1},\exists(p_k,r))}{}} 
                & \qquad 
                \infer*{\forall(p_k,p)}{}}
                       & \qquad 
                       \infer*{\forall(p_k,c)}{}}
\end{equation*}
\end{proof}
\begin{lemma}
If $b_{V,i} \in B$, $V \Rightarrow c$,  $V \Rightarrow d$, and $c \neq
d$, then there exists a unary atom $p$ such that either:
{\rm (i)}  $\Gamma \vdash_\R \exists(p,c)$ and $\Gamma \vdash_\R \forall(p,d)$; or
{\rm (ii)} $\Gamma \vdash_\R \exists(p,d)$ and $\Gamma \vdash_\R \forall(p,c)$. 
\label{lma:7}
\end{lemma}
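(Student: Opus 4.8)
The plan is to unfold the two hypotheses $V \Rightarrow c$ and $V \Rightarrow d$ and then split according to whether a single element of $V$ is responsible for both. By definition of $\Rightarrow$ there are $c_1, c_2 \in V$ with $c_1 \Rightarrow c$ and $c_2 \Rightarrow d$. The key observation I would use throughout is that if $e \Rightarrow f$ with $e \neq f$, then the defining $\Rightarrow$-chain begins with $e = p_0$, forcing $e$ to be a unary atom; consequently a single c-term can $\Rightarrow$-reach two \emph{distinct} targets only if it is itself a unary atom.

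\emph{Case A: some single $c_0 \in V$ satisfies both $c_0 \Rightarrow c$ and $c_0 \Rightarrow d$.} Since $c \neq d$, at least one of these reductions is nontrivial, so by the observation above $c_0$ is a unary atom; write $p = c_0$. As $p \in V$ we have $V \Rightarrow p$ reflexively, and since $V \Rightarrow c$ is given, Lemma~\ref{lma:6} yields $\Gamma \vdash_\R \exists(p,c)$. Meanwhile $p \Rightarrow d$, so Lemma~\ref{lma:4} Part 3 gives $\Gamma \vdash_\R \forall(p,d)$. This is precisely alternative (i).

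\emph{Case B: no single element of $V$ reaches both $c$ and $d$.} Then the witnesses satisfy $c_1 \neq c_2$, forcing $|V| = 2$; by the structural description of $B$ this can only happen for $b_{V,i} \in B_0$, so $V = \{p, c'\}$ with $p$ a unary atom and $\exists(p,c') \in \Gamma$. Since $\{c_1,c_2\} = \{p,c'\}$, after possibly interchanging $c$ with $d$ (and correspondingly (i) with (ii)) I may assume $p \Rightarrow c$ and $c' \Rightarrow d$. Lemma~\ref{lma:4} Part 3 gives $\Gamma \vdash_\R \forall(p,c)$. For the existential I split on $c'$: if $c' = d$ then $\exists(p,d) = \exists(p,c') \in \Gamma$ is immediate; if $c' \neq d$ then $c'$ is a unary atom (again by the observation), Lemma~\ref{lma:4} Part 3 gives $\Gamma \vdash_\R \forall(c',d)$, and a single application of rule (D1) to $\exists(p,c')$ and $\forall(c',d)$ yields $\Gamma \vdash_\R \exists(p,d)$. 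Either way we obtain alternative (ii).

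I expect the main obstacle to be bookkeeping rather than conceptual: one must carefully justify that the element doing double duty (Case A) or the non-subject element of $V$ (Case B) is a unary atom whenever a nontrivial $\Rightarrow$-step is involved, since this is exactly what licenses the appeals to Lemma~\ref{lma:6} and to rule (D1); and one must invoke the precise shape of $B_0$ to guarantee that a two-element $V$ comes from a genuine premise $\exists(p,c') \in \Gamma$. The symmetry between (i) and (ii) under swapping $c$ and $d$ keeps the number of genuinely distinct sub-cases small.
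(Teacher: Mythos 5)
Your proof is correct, but it follows a genuinely different decomposition from the paper's. You split on whether a single member of $V$ $\Rightarrow$-reaches both $c$ and $d$, letting the elementary observation that a nontrivial $\Rightarrow$-step must begin at a unary atom carry the argument; the paper instead splits on whether $c$ or $d$ is itself a unary atom. In the paper's version, if $c = q$ is an atom, Lemma~\ref{lma:6} gives $\Gamma \vdash_\R \exists(q,d)$ and rule (T) gives $\forall(q,q)$, yielding alternative (ii) with $p = q$ (symmetrically if $d$ is an atom); if neither is an atom, then since $V$ contains at most one non-atom, one of $c,d$ lies outside $V$, so Lemma~\ref{lma:4} Part~1 supplies an atom $p \in V$ with the required universal, and Lemma~\ref{lma:6} supplies the existential. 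The paper's route thus lets Lemmas~\ref{lma:4} and~\ref{lma:6} absorb all the structure of $B$. Your Case A is just as economical, but your Case B does redundant work: once you know $p \in V$, you already have $V \Rightarrow p$ (reflexively) and $V \Rightarrow d$ (by hypothesis), so Lemma~\ref{lma:6} yields $\Gamma \vdash_\R \exists(p,d)$ in one step, with no need to unfold the construction of $B_0$, to appeal to $\exists(p,c') \in \Gamma$, or to re-apply rule (D1) --- that derivation essentially replays the base case inside the proof of Lemma~\ref{lma:6}. Both decompositions are sound and of comparable length; yours makes the definition of $\Rightarrow$ do the work, while the paper's makes maximal reuse of the preceding lemmas.
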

\begin{proof}
Suppose first that $c = q$ for some $q$.  By Lemma~\ref{lma:6},
$\Gamma \vdash_\R \exists(q,d)$, and by rule (T), $\Gamma \vdash_\R
\forall(q,q)$.  Putting $p = q$ then satisfies Condition (ii).  On the
other hand, if $d = q$ for some $q$, then Condition (i) is satisfied,
by a similar argument.  Hence we may assume that neither $c$ nor $d$
is a unary atom. Since $c \neq d$, we have either $c \not \in V$ or $d
\not \in V$, by construction of $B$.  If the latter, then, by
Lemma~\ref{lma:4} Part 1, there exists $p$ such that $\Gamma
\vdash_\R \forall(p,d)$. But now we have $V \Rightarrow p$ and $V
\Rightarrow c$, so that, by Lemma~\ref{lma:6}, $\Gamma \vdash_\R
\exists(p,c)$, and Condition (i) is satisfied. If, on the other hand,
$c \not \in V$, Condition (ii) is satisfied, by a similar argument.
\end{proof}

The set $B$ will form the domain of a structure $\fB$, defined as
follows.  If $p$ is a unary atom, set
\begin{equation*}
p^\fB = \{ b_{V,i} \in B \mid V \Rightarrow p \};
\end{equation*}
and if $r$ is a binary atom, set
\begin{multline*}
r^\fB = \{ \langle b_{V,i}, b_{\{p\},1} \rangle \in B^2 \mid V 
                      \Rightarrow \exists(p,r) \} \cup \\
        \{ \langle b_{V,i}, b_{W,j} \rangle \in B^2 \mid  
                      \mbox{ for some $q$, }
                      V \Rightarrow \forall(q,r) \mbox{ and } W \Rightarrow q \}.
\end{multline*}
The intuition is that the elements $b_{\{p\},1}$ are witnesses for the
existential quantifiers in c-terms of the form
$\exists(p,r)$, while the elements $b_{\{p\},2}$ are witnesses for the
existential quantifiers in c-terms of the form
$\exists(p,\bar{r})$.
\begin{lemma}
If $\Gamma$ is unsatisfiable, then there exist an element $b_{V,i}$
of $B$, a unary atom $p$ and a c-term $c$, such that
$V \Rightarrow p$, $V \Rightarrow c$, $b_{V,i} \in p^\fB$ and $b_{V,i}
\not \in c^\fB$.
\label{lma:8}
\end{lemma}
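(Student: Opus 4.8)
The plan is to exploit the fact that $\fB$ is a genuine structure, and therefore cannot possibly satisfy an unsatisfiable set. First I would dispose of a degenerate worry: if $\Gamma$ consists only of universal formulas, then (after the identifications) each has the form $\forall(p,c)$ with $p$ a unary atom, and the one-element structure $\fA$ in which every atom has empty extension satisfies all of them, since $p^\fA = \emptyset \subseteq c^\fA$ trivially. Hence an unsatisfiable $\Gamma$ must contain at least one existential formula, so $B_0 \neq \emptyset$, whence $B \neq \emptyset$ and $\fB$ is a structure with non-empty domain. Since $\Gamma$ is unsatisfiable we then have $\fB \not\models \Gamma$, so some $\phi \in \Gamma$ fails in $\fB$. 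Using the silent identifications $\exists(e,f) = \exists(f,e)$ and $\forall(e,f) = \forall(\bar{f},\bar{e})$, I may assume $\phi$ is literally of one of the two forms $\forall(p_0,c_0)$ or $\exists(p_0,c_0)$, with $p_0$ a unary atom and $c_0$ a c-term.

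In the universal case $\phi = \forall(p_0,c_0)$, failure means $p_0^\fB \not\subseteq c_0^\fB$, so I would pick $b_{V,i} \in p_0^\fB \setminus c_0^\fB$. By the definition of $p_0^\fB$ this gives $V \Rightarrow p_0$; and since $\forall(p_0,c_0) \in \Gamma$, appending this final implication to a chain witnessing $V \Rightarrow p_0$ yields $V \Rightarrow c_0$. Thus $b_{V,i}$, together with $p := p_0$ and $c := c_0$, is the required defect. This "append a trailing $\Gamma$-edge" step is the one genuine-but-routine closure property of $\Rightarrow$ I would need to check.

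In the existential case $\phi = \exists(p_0,c_0)$, failure means $p_0^\fB \cap c_0^\fB = \emptyset$. Here I would use the canonical witness $b := b_{\{p_0,c_0\},0}$, which lies in $B_0 \subseteq B$ precisely because $\exists(p_0,c_0) \in \Gamma$. Since $p_0 \in \{p_0,c_0\}$ we have $\{p_0,c_0\} \Rightarrow p_0$, hence $b \in p_0^\fB$; and because the intersection is empty, $b \notin c_0^\fB$. As also $\{p_0,c_0\} \Rightarrow c_0$, taking $V := \{p_0,c_0\}$, $i := 0$, $p := p_0$, $c := c_0$ exhibits the defect. (Note that this subcase forces $c_0$ to be non-unary: were $c_0$ a unary atom $q$, then $b$ would already lie in $q^\fB$, so $\fB$ would satisfy $\phi$.)

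The hard part will not be any estimate but the bookkeeping. I would need to confirm that $B$ is non-empty so that "$\fB \not\models \Gamma$" genuinely produces a domain element, verify that the two identifications reduce every $\cR$-formula to the canonical forms $\forall(p,c)$ and $\exists(p,c)$, and check the closure of $\Rightarrow$ under appending a trailing universal formula of $\Gamma$. None of these is deep; the real content of the lemma is the simple observation that an honest structure built from $\Gamma$ cannot satisfy an unsatisfiable $\Gamma$, so its failure must localize to some point $b_{V,i}$ at which a reachable c-term $c$ is not realized. The later lemmas (via Lemmas~\ref{lma:6} and~\ref{lma:7}) will then convert such a localized defect into a derivation of $\bot$, establishing refutation-completeness by contraposition.
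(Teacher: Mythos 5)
Your proof is correct and takes essentially the same route as the paper's: since the canonical structure $\fB$ cannot model the unsatisfiable $\Gamma$, pick a failing $\phi\in\Gamma$ and case-split, taking $b_{\{p_0,c_0\},0}$ as the defective element in the existential case and any element of $p_0^\fB\setminus c_0^\fB$ (with $V\Rightarrow c_0$ obtained by appending $\forall(p_0,c_0)\in\Gamma$ to a chain) in the universal case. Your preliminary check that an unsatisfiable $\Gamma$ must contain an existential formula, so that $B\neq\emptyset$ and $\fB$ is a genuine structure, is a legitimate edge case that the paper passes over silently.
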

\begin{proof}
Since $\Gamma$ is unsatisfiable, let $\phi \in \Gamma$ be such that
$\fB \not \models \Gamma$.  If $\phi = \exists(p,c)$, let $V =
\{p,c\}$. Trivially, $V \Rightarrow p$ and $V \Rightarrow c$.  By
construction of $B$, $b_{V,0} \in B$, and by construction of $\fB$,
$b_{V,0} \in p^\fB $, whence (since $\fB \not \models \phi$) $b_{V,0}
\not \in c^\fB $. If, on the other hand, $\phi = \forall(p,c)$, there
exists $b_{V,i} \in B$ such that $b_{V,i} \in p^\fB $ and $b_{V,i}
\not \in c^\fB $. By construction of $\fB$, $V \Rightarrow p$; and
since $\forall(p,c) \in \Gamma$, we have $V \Rightarrow c$, as
required.
\end{proof}
We now prove the main Lemma, from which both the complexity and the
refutation-completeness results follow.
\begin{lemma}
If $\Gamma$ is unsatisfiable, then the following condition holds.
\begin{center}
\begin{minipage}{1cm}
$(${\bf C}$)$

\vspace{3cm}

\end{minipage}
\begin{minipage}{10cm}
There exist elements $b_{V,i}, b_{W,j}$ of $B$, unary atoms $q$, $o$,
and binary atom $r$, such that one of the following is true:
\begin{enumerate}
\item $V \Rightarrow q$ and $V \Rightarrow \bar{q}$; 
\item $V \Rightarrow \exists(q,\bar{r})$, $V \Rightarrow \forall(o,r)$, 
  and $q \Rightarrow o$; 
\item $V \Rightarrow \forall(q,\bar{r})$, 
$V \Rightarrow \exists(o,r)$, and $o \Rightarrow q$;
\item $V \Rightarrow \forall(q,\bar{r})$, $V \Rightarrow
  \forall(o,r)$, $W \Rightarrow q$, and $W \Rightarrow o$.
\end{enumerate}
\end{minipage}
\end{center}
\label{lma:9}
\end{lemma}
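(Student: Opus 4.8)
The plan is to invoke Lemma~\ref{lma:8}, which under the assumption that $\Gamma$ is unsatisfiable produces an element $b_{V,i}\in B$, a unary atom $p$ and a c-term $c$ satisfying $V\Rightarrow p$, $V\Rightarrow c$, $b_{V,i}\in p^\fB$ and $b_{V,i}\notin c^\fB$. From here I would argue by cases on the syntactic form of $c$, in each case confronting the non-membership $b_{V,i}\notin c^\fB$ with the explicit definitions of $p^\fB$, $r^\fB$ and the e-term semantics. The governing intuition is that $\fB$ was built to meet exactly the \emph{positive} existential and universal demands of its elements; consequently every case in which the binary literal inside $c$ is positive collapses into a contradiction, and the cases that survive---$c$ a negative literal, $c=\exists(q,\bar r)$, and $c=\forall(q,\bar r)$---deliver Conditions~1--4 respectively.

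I would first dispose of the literal case. If $c=q$ is positive then $V\Rightarrow q$ forces $b_{V,i}\in q^\fB=c^\fB$, contradicting $b_{V,i}\notin c^\fB$; so $c=\bar q$, and $b_{V,i}\notin\bar q^\fB=B\setminus q^\fB$ yields $b_{V,i}\in q^\fB$, i.e.\ $V\Rightarrow q$, while Lemma~\ref{lma:8} already gives $V\Rightarrow\bar q$---this is Condition~1. For $c=\exists(q,t)$, the semantics turn $b_{V,i}\notin\exists(q,t)^\fB$ into the statement that $\langle b_{V,i},b\rangle\notin t^\fB$ for every $b\in q^\fB$. When $t=r$ is positive, the witness $b_{\{q\},1}$---which the construction of $B$ adjoins once $V\Rightarrow\exists(q,r)$ holds and $b_{V,i}\in B_k$---lies in $q^\fB$ and, by the first clause defining $r^\fB$, makes $\langle b_{V,i},b_{\{q\},1}\rangle\in r^\fB$, a contradiction. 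Hence $t=\bar r$, and now the witness $b_{\{q\},2}\in q^\fB$ must satisfy $\langle b_{V,i},b_{\{q\},2}\rangle\in r^\fB$; as this pair has a second component of index $2$, it can only have entered $r^\fB$ through the second clause, furnishing an atom $o$ with $V\Rightarrow\forall(o,r)$ and $q\Rightarrow o$. Together with $V\Rightarrow\exists(q,\bar r)$ this is Condition~2.

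The universal case $c=\forall(q,t)$ is similar but branches at the end. Here $b_{V,i}\notin\forall(q,t)^\fB$ supplies some $b_{W,j}\in q^\fB$ (so $W\Rightarrow q$) with $\langle b_{V,i},b_{W,j}\rangle\notin t^\fB$. If $t=r$, the second clause defining $r^\fB$ (taking the quantified atom to be $q$, and using $V\Rightarrow\forall(q,r)$ together with $W\Rightarrow q$) forces $\langle b_{V,i},b_{W,j}\rangle\in r^\fB$, a contradiction; so $t=\bar r$ and $\langle b_{V,i},b_{W,j}\rangle\in r^\fB$. I would then split on which clause admits this pair. The first clause forces $W=\{o\}$ with $V\Rightarrow\exists(o,r)$ and $o\Rightarrow q$, which with $V\Rightarrow\forall(q,\bar r)$ is Condition~3; the second clause yields an atom $o$ with $V\Rightarrow\forall(o,r)$ and $W\Rightarrow o$, which with $W\Rightarrow q$ and $V\Rightarrow\forall(q,\bar r)$ is Condition~4.

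The step demanding the most care---and the one I expect to be the main obstacle---is the correct reading of the two-clause definition of $r^\fB$. Since one clause records $\exists$-witnesses (always with second-component index $1$) and the other records $\forall$-obligations, I must attribute membership or non-membership of each specific pair to exactly the right clause. Both the collapse of the positive sub-cases and the extraction of Conditions~2--4 rest on this: on tracking whether the second component has index $1$ or $2$, and on the guarantee, built into the inductive construction of $B$, that the witnesses $b_{\{q\},1}$ and $b_{\{q\},2}$ genuinely belong to $B$. Keeping this index- and existence-bookkeeping straight is the real content of the argument; the remaining manipulations are routine appeals to the semantics and to the definition of $\Rightarrow$.
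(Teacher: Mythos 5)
Your proposal is correct and follows essentially the same route as the paper's proof: start from Lemma~\ref{lma:8}, rule out the positive forms of $c$ via the witnesses and clauses built into $\fB$, and read off Conditions~1--4 from the three negative forms by tracking which clause of the definition of $r^\fB$ (index $1$ versus index $2$ on the second component) admits the offending pair. The only difference is organizational---you interleave the positive/negative sub-cases by the shape of $c$ rather than first eliminating all positive forms as the paper does---which changes nothing of substance.
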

\begin{proof}
Let $V$, $i$, $p$ and $c$ be as in Lemma~\ref{lma:8}.  We claim first
that $c$ cannot be of the form $q$, $\exists(q,r)$ or
$\forall(q,r)$. For consider each possibility in turn.
\begin{description}
\item If $c = q$, then, by the construction of $\fB$, $V
  \Rightarrow c$ implies  $b_{V,i} \in q^\fB$, contradicting $b_{V,i}
  \not \in c^\fB$.
\item If $c = \exists(q,r)$, then, by the construction of $\fB$,
  $V \Rightarrow c$ implies $b_{\{q\},1} \in B$, $b_{\set{q},1} \in q^\fB$,
  and $\langle b_{V,i}, b_{\{q\},1} \rangle \in r^\fB$, whence
  $b_{V,i} \in \exists(q,r)^\fB$, contradicting $b \not \in c^\fB$.
\item If $c = \forall(q,r)$, then, since $V \Rightarrow c$, the
  construction of $\fB$ ensures that, for any
  $b_{W,j} \in q^\fB$, we have $W \Rightarrow q$ and hence
  $\langle b_{V,i}, b_{W,j} \rangle \in r^\fB$. That is, $b_{V,i} \in
  \forall(q,r)^\fB$, contradicting $b \not \in c^\fB$.
\end{description}
Therefore, $c$ is of one of the forms $\bar{q}$, $\exists(q,\bar{r})$,
or $\forall(q,\bar{r})$. We consider each possibility in turn, and
show that one of the four cases of Condition ({\bf C}) holds.
\begin{description}
\item If $c = \bar{q}$, then $b_{V,i} \not \in c^\fB$ means that $b_{V,i} \in
q^\fB$, so that, by construction of $\fB$, $V \Rightarrow q$. But
  by assumption, $V \Rightarrow c$, and we have Case 1 of Condition ({\bf C}).
\item If $c = \exists(q,\bar{r})$, then, since $V \Rightarrow c$, we
  have, by the construction of $\fB$, $b_{\{q\},2} \in B$, and in fact
  $b_{\{q\},2} \in q^\fB$. Since $b_{V,i} \not \in c^\fB$, we have
  $\langle b_{V,i},b_{\{q\},2} \rangle \in r^\fB$. The construction of
  $\fB$ then guarantees that for some unary atom $o$, $V
  \Rightarrow \forall(o,r)$ and $q \Rightarrow o$; thus we have Case 2
  of Condition ({\bf C}).
\item If $c = \forall (q,\bar{r})$, then $b_{V,i} \not \in c^\fB$ implies
  that there exists $b_{W,j} \in B$ such that $b_{W,j} \in q^\fB$ and
  $\langle b_{V,i}, b_{W,j} \rangle \in r^\fB$.  By construction of
  $\fB$, $W \Rightarrow q$, and, for some unary atom $o$, either
\begin{description}
\item (a) $V \Rightarrow \exists (o,r)$, $W = \{o\}$ and $j= 1$, or
\item (b) $V \Rightarrow \forall (o,r)$ and $W \Rightarrow o$.
\end{description}
 But these yield Cases 3 and 4 of Condition ({\bf C}), respectively.
\end{description}
\end{proof}
\begin{lemma}
The following are equivalent:
\begin{enumerate}
\item $\Gamma \vdash_\R \bot$;
\item $\Gamma$ is unsatisfiable;
\item Condition $(${\bf C}$)$ of Lemma~\ref{lma:9} holds.
\end{enumerate}
\label{lma:tfae}
\end{lemma}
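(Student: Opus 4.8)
The plan is to prove the three-way equivalence as a cycle $(1) \Rightarrow (2) \Rightarrow (3) \Rightarrow (1)$, with the first two links essentially free. For $(1) \Rightarrow (2)$ I would invoke soundness: since every rule of $\R$ is sound, $\vdash_\R$ is sound, so $\Gamma \vdash_\R \bot$ gives $\Gamma \models \bot$; but an absurdity $\exists(e,\bar{e})$ holds in no structure, whence $\Gamma$ has no model. The link $(2) \Rightarrow (3)$ is exactly the content of Lemma~\ref{lma:9}, which I may quote verbatim. Thus all the genuine work lies in $(3) \Rightarrow (1)$: assuming Condition $(\mathbf{C})$, I must exhibit an $\R$-derivation of some absurdity from $\Gamma$.

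For $(3) \Rightarrow (1)$ I would argue by cases on the four disjuncts of $(\mathbf{C})$, in each case aiming to produce a unary atom $x$ and a c-term $c$ with $\Gamma \vdash_\R \exists(x,c)$ and $\Gamma \vdash_\R \forall(x,\bar{c})$, since a single application of (D3) then yields $\exists(x,\bar{x}) = \bot$. Case~1 ($V \Rightarrow q$ and $V \Rightarrow \bar{q}$) is immediate: Lemma~\ref{lma:6}, with the literal $\bar{q}$ playing the role of the c-term, gives $\Gamma \vdash_\R \exists(q,\bar{q})$, already an absurdity. In Cases~2--4 the two c-terms that $V$ reaches are distinct (the pairs $\exists(q,\bar{r}),\forall(o,r)$; $\forall(q,\bar{r}),\exists(o,r)$; and $\forall(q,\bar{r}),\forall(o,r)$, the last distinct because of $\bar{r}$ versus $r$), so Lemma~\ref{lma:7} supplies a single atom $p$ that $\exists$-satisfies one of the two and $\forall$-satisfies the other. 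The remaining task is to massage these two facts -- using the subsumptions $q \Rightarrow o$ or $o \Rightarrow q$ turned into derivable universals by Lemma~\ref{lma:4}(3), or, in Case~4, the derivable $\exists(q,o)$ obtained from the second witness $b_{W,j}$ via Lemma~\ref{lma:6} -- into a complementary pair $\exists(x,c),\forall(x,\bar{c})$.

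The key manoeuvre, and the part I expect to demand the most care, is that the relation $\Rightarrow$ propagates only along subjects and never rewrites the \emph{object} of a relational c-term, so one cannot simply replace $\forall(o,r)$ by $\forall(q,r)$ at the level of $\Rightarrow$. Instead, after invoking Lemma~\ref{lma:7} one applies the monotonicity rules to the already-derived formulas. In Case~2 I would use $(\exists\exists)$ (respectively $(\forall\exists)$) with $\forall(q,o)$ to grow the object of $\exists(q,\bar{r})$ from $q$ to $o$, turning it into $\exists(o,\bar{r}) = \overline{\forall(o,r)}$ and thereby producing the complement of $\forall(o,r)$; Case~3 is symmetric, growing the object of $\exists(o,r)$ from $o$ to $q$ to obtain $\exists(q,r) = \overline{\forall(q,\bar{r})}$. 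In Case~4, where two witnesses are involved, I would instead feed the derived $\exists(q,o)$ (or its mirror $\exists(o,q)$) into $(\forall\forall)$ to convert a universal object into a matching existential one, e.g.\ $\forall(p,\forall(o,r))$ into $\forall(p,\exists(q,r)) = \forall(p,\overline{\forall(q,\bar{r})})$. In every sub-case the aim is to land on a complementary pair and close with (D3).

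Finally, I should note that each invocation of Lemma~\ref{lma:7} offers two alternatives -- the existential on the first c-term and the universal on the second, or vice versa -- so both must be discharged; they are handled symmetrically, using $(\exists\exists)$ when the object to be adjusted sits inside an existential and $(\forall\exists)$ or $(\forall\forall)$ when it sits inside a universal. The genuine obstacle is precisely the bookkeeping here: since no rule rewrites a universal object nested inside an existential, the awkward alternative of Case~4 (namely $\exists(p,\forall(o,r))$ together with $\forall(p,\forall(q,\bar{r}))$) must be refuted by leaving the untouchable $\exists(p,\forall(o,r))$ fixed and instead rewriting $\forall(p,\forall(q,\bar{r}))$ into $\forall(p,\exists(o,\bar{r})) = \forall(p,\overline{\forall(o,r)})$ via $(\forall\forall)$ with $\exists(o,q)$. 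Verifying that the appropriate monotonicity rule is available in each such sub-case is the crux; once it is, every branch terminates in a single (D3) step, closing the cycle.
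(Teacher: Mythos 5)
Your proposal is correct and follows the paper's own proof essentially step for step: soundness for $(1)\Rightarrow(2)$, Lemma~\ref{lma:9} for $(2)\Rightarrow(3)$, and for $(3)\Rightarrow(1)$ the same four-case analysis via Lemmas~\ref{lma:4}, \ref{lma:6} and~\ref{lma:7}, the monotonicity rules $(\exists\exists)$, $(\forall\exists)$, $(\forall\forall)$, and a closing application of (D3). Your sub-case bookkeeping (including the ``awkward'' alternative of Case~4, where the existential formula must be left untouched and $(\forall\forall)$ applied to the universal one) matches the paper's derivations exactly.
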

\begin{proof}
For the implication $1 \Rightarrow 2$, we observe that $\vdash_\R$ is
obviously sound.  The implication $2 \Rightarrow 3$ is
Lemma~\ref{lma:9}.  For the implication $3 \Rightarrow 1$, suppose
Condition $(${\bf C}$)$ of Lemma~\ref{lma:9} holds.  This condition
has four cases: we consider each in turn, showing that $\Gamma \vdash_\R
\exists(p,\bar{p})$ for some unary atom $p$.
\begin{enumerate}
\item $V \Rightarrow q$ and $V \Rightarrow \bar{q}$. By
  Lemma~\ref{lma:6} we immediately have $\Gamma \vdash_\R
  \exists(q,\bar{q})$.
\item $V \Rightarrow
  \exists(q,\bar{r})$, $V \Rightarrow \forall(o,r)$, $q \Rightarrow o$. 
By Lemma~\ref{lma:4} Part 2 (or Part~3), $\Gamma \vdash_\R
\forall(q,o)$, and by Lemma~\ref{lma:7}, there exists $p$ such
that either: (i) $\Gamma \vdash_\R \exists(p,\exists(q,\bar{r}))$ and $\Gamma
\vdash_\R \forall(p,\forall(o,r))$; or (ii) $\Gamma \vdash_\R
\exists(p,\forall(o,r))$ and $\Gamma \vdash_\R \forall(p,\exists(q,\bar{r}))$.

In Case (i), we then have
\begin{equation*}
\infer[{\mbox{\small (D3)}},]{\exists(p,\bar{p})}
      {
           \infer[{\mbox{\small ($\exists\exists$)}}]{\exists(p,\exists(o,\bar{r}))}
              {\infer*{\exists(p,\exists(q,\bar{r}))}{} 
               & \qquad 
               \infer*{\forall (q,o)}{}}
       &         \infer*{\forall(p,\forall(o,r))}{} 
               }
\end{equation*}
while in Case (ii), we have
\begin{equation*}
\infer[{\mbox{\small (D3).}}]{\exists(p,\bar{p})}
       {\infer*{\exists(p,\forall(o,r))}{}
       &
        \infer[{\mbox{\small ($\forall\exists$)}}]{\forall(p,\exists(o,\bar{r}))}
              {\infer*{\forall(p,\exists(q,\bar{r}))}{} 
               & \qquad 
               \infer*{\forall (q,o)}{}}
               }
\end{equation*}
\item $V \Rightarrow \forall(q,\bar{r})$, $V \Rightarrow
  \exists(o,r)$, $o \Rightarrow q$.  By Lemma~\ref{lma:4} Part~2 (or
  Part~3), $\Gamma \vdash_\R \forall(o,q)$, and by Lemma~\ref{lma:7},
  there exists $p$ such that either: (i) $\Gamma \vdash_\R
  \exists(p,\exists(o,r))$ and $\Gamma \vdash_\R \forall(p,
  \forall(q,\bar{r}))$; or (ii) $\Gamma \vdash_\R
  \exists(p,\forall(q,\bar{r}))$ and $\Gamma \vdash_\R \forall(p,
  \exists(o,r))$.  But then we can employ exactly the same derivation
  patterns as for Cases 2(i) and 2(ii), respectively.
\item $V \Rightarrow \forall(q,\bar{r})$, $V \Rightarrow
  \forall(o,r)$, $W \Rightarrow q$, $W \Rightarrow o$.  By
  Lemma~\ref{lma:7}, there exists $p$ such that $\Gamma \vdash_\R
  \exists(p,\forall(p_1,u))$ and $\Gamma \vdash_\R
  \forall(p,\forall(p_2,\bar{u}))$, where $u$ is either $r$ or
  $\bar{r}$, and $p_1$ and $p_2$ are $q$ and $o$ in some order. By
  Lemma~\ref{lma:6}, $\Gamma \vdash_\R \exists(o,q)$, i.e.  $\Gamma
  \vdash_\R \exists(p_1,p_2)$.  Thus we have the derivation
\begin{equation}
\infer[{\mbox{\small (D3)}}.]{\exists(p,\bar{p})}
      {\infer[{\mbox{\small ($\forall\forall$)}}]{\forall(p,\exists(p_1,\bar{u}))}
              {\infer*{\forall(p,\forall(p_2,\bar{u}))}{} 
               & \qquad 
               \infer*{\exists (p_1,p_2)}{}}
       & \qquad 
       \infer*{\exists(p,\forall(p_1,u))}{}}
\label{eq:botDerivation}
\end{equation}
\end{enumerate}
\end{proof}
\begin{theorem}
The derivation relation $\vdash_\R$ is 
sound and refutation-complete for 
$\cR$.
\label{theo:Rcompleteness}
\end{theorem}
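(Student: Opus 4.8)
The plan is to read this theorem off Lemma~\ref{lma:tfae}, which already does the essential work by establishing, for a fixed finite non-empty $\Gamma$, the equivalence of unsatisfiability, $\vdash_\R$-inconsistency, and Condition~(\textbf{C}). Only two lightweight reductions are needed to pass from that lemma to the statement of the theorem for arbitrary premise sets: the compactness theorem and the monotonicity of $\vdash_\R$ in its premises. So I would split the argument into a soundness part and a refutation-completeness part, spending almost all the attention on the latter bookkeeping.

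For soundness, I would verify that each of the eleven rules of $\R$ is individually sound, i.e.~that its antecedents semantically entail its consequent. Rules (D1), (D2), (D3), (B), (A), (T), (I) are the evident generalizations of their already-sound namesakes in $\S$, so they need only a routine recheck; the genuinely relational rules (II), $(\forall\forall)$, $(\exists\exists)$ and $(\forall\exists)$ require short direct computations against the clauses defining $\exists(l,t)^\fA$ and $\forall(l,t)^\fA$. By the earlier observation that $\vdash_\X$ is sound precisely when every rule in $\X$ is sound, this delivers soundness of $\vdash_\R$ at once (and with it the implication $1 \Rightarrow 2$ of Lemma~\ref{lma:tfae}).

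For refutation-completeness, let $\Theta$ be any unsatisfiable set of $\cR$-formulas; I must produce a derivation $\Theta \vdash_\R \bot$. Since every $\cR$-formula is equivalent to a first-order sentence over the unary and binary predicates, and satisfiability in the two senses coincides, the compactness theorem for first-order logic yields a \emph{finite} subset $\Gamma \subseteq \Theta$ that is already unsatisfiable; and because $\emptyset$ is satisfiable, $\Gamma$ is non-empty. Lemma~\ref{lma:tfae} (the implication $2 \Rightarrow 1$) then gives $\Gamma \vdash_\R \bot$. Finally I would invoke monotonicity: any $\vdash_\R$-derivation of $\bot$ from $\Gamma$ is, verbatim, a derivation of $\bot$ from every superset of $\Gamma$, since its undischarged leaves are premises lying in $\Gamma \subseteq \Theta$. (This is a one-line induction on the definition of $\vdash_\R$, taking each $\Theta_i = \Theta$ in clause~2.) Hence $\Theta \vdash_\R \bot$, so $\Theta$ is inconsistent, which is exactly refutation-completeness.

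I do not expect a real obstacle at the level of the theorem itself: the substantive difficulty has already been absorbed into Lemmas~\ref{lma:8}, \ref{lma:9} and \ref{lma:tfae}, where the canonical structure $\fB$ is constructed and Condition~(\textbf{C}) is shown to be both forced by unsatisfiability and sufficient for deriving an absurdity. The only point here demanding any care is the passage from the finite, non-empty $\Gamma$ assumed throughout those lemmas to an arbitrary (possibly infinite) $\Theta$, and that is precisely what the compactness step and the monotonicity step accomplish; neither is deep, so the proof of the theorem should be very short.
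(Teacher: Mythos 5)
Your proposal is correct and takes essentially the same route as the paper, whose entire proof reads: ``Soundness is obvious. Refutation-completeness is the implication from 2 to 1 in Lemma~\ref{lma:tfae}.'' The only difference is that you spell out the compactness and monotonicity bookkeeping needed to pass from the fixed finite, non-empty $\Gamma$ of Lemma~\ref{lma:tfae} to an arbitrary (possibly infinite) unsatisfiable premise set---a step the paper leaves entirely implicit.
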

\begin{proof}
Soundness is obvious. Refutation-completeness is the implication from
2 to 1 in Lemma~\ref{lma:tfae}.
\end{proof}
\begin{theorem}
The problem of determining the validity of a sequent in any of the
fragments $\cS$, $\cS^\dagger$ and $\cR$ is \NLOGSPACE-complete.
\label{theo:Rcomplexity}
\end{theorem}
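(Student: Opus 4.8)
The plan is to prove both halves of the statement: \NLOGSPACE-\emph{hardness}, which I handle once and for all in the smallest fragment $\cS$, and \emph{membership} in \NLOGSPACE, which I establish for $\cR$ (so that $\cS$ follows by inclusion) and, separately, for $\cS^\dagger$. For hardness I would reduce from directed graph reachability, the canonical \NLOGSPACE-complete problem. Given a digraph on vertices $\{v_1,\dots,v_k\}$ with distinguished $s,t$, introduce a unary atom $p_v$ for each vertex $v$ and let $\Theta = \{\forall(p_u,p_v) : (u,v)\mbox{ is an edge}\}$. A routine check shows $\Theta \models \forall(p_s,p_t)$ iff $t$ is reachable from $s$: reachability forces the conclusion by transitivity of $\subseteq$, while if $t$ is unreachable the single-point structure interpreting each $p_v$ as the whole domain exactly when $v$ is reachable from $s$ models $\Theta$ yet refutes $\forall(p_s,p_t)$. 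The map is computable in logarithmic space and yields only $\cS$-formulas; since $\cS\subseteq\cS^\dagger$ and $\cS\subseteq\cR$ with agreeing semantics, the same instances witness \NLOGSPACE-hardness of all three fragments.

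For membership I would first note that $\Theta\models\theta$ iff $\Gamma := \Theta\cup\{\bar\theta\}$ is unsatisfiable, and that $\bar\theta$ is computable in logarithmic space, so it suffices to place unsatisfiability of a finite $\Gamma$ in \NLOGSPACE. For $\cR$, Lemma~\ref{lma:tfae} reduces this to Condition $(\mathbf{C})$ of Lemma~\ref{lma:9}. Each clause of $(\mathbf{C})$ is a fixed Boolean combination of assertions of the forms $V\Rightarrow c$ and $c\Rightarrow d$, quantified over elements $b_{V,i},b_{W,j}$ of $B$. The relation $\Rightarrow$ is by definition reachability along the $\forall$-edges of $\Gamma$, hence decidable in \NLOGSPACE; and membership $b_{V,i}\in B$ is itself an iterated reachability condition (guess a seed $\exists(p,c)\in\Gamma$ and a chain of steps $V_k\Rightarrow\exists(p_{k+1},t)$), whose length is bounded by $|B|$, which is linear in $|\Gamma|$ and so trackable with a logarithmic counter. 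Since all witnesses ($b_{V,i}$, the atoms $q,o$, the binary atom $r$, the case number) are guessable in logarithmic space and the verification is a conjunction of \emph{positive} reachability facts, unsatisfiability of $\Gamma$ lies in \NLOGSPACE; closure of \NLOGSPACE under union, intersection, and complement (the last by Immerman--Szelepcs\'enyi) absorbs the bounded universal quantifications that occur. This settles $\cR$, and $\cS$ follows from $\cS\subseteq\cR$.

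For $\cS^\dagger$, which is not a subfragment of $\cR$, I would extract a reachability characterisation of unsatisfiability from the completeness analysis of Section~\ref{sec:proofSS+}. Writing $\Gamma=\Phi\cup\Psi$ with $\Phi$ universal and $\Psi$ existential, the disjoint-union construction of Lemma~\ref{lma:1} shows that $\Gamma$ is satisfiable iff $\Phi\cup\{\psi\}$ is satisfiable for every $\psi\in\Psi$ (when $\Psi\neq\emptyset$), and iff $\Phi$ alone is satisfiable (when $\Psi=\emptyset$). Each condition is in turn a reachability statement about the graph of $\forall$-edges of $\Phi$: by Lemma~\ref{lma:2.5} together with rule (A), the set $\{l_0,m_0\}^\Phi$ is inconsistent iff $\bar l_0$ is reachable from $l_0$, or $\bar m_0$ from $m_0$, or $\bar m_0$ from $l_0$; and $\Phi$ itself is unsatisfiable iff $\bar l$ is reachable from $l$ for every literal $l$. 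These are again fixed Boolean combinations and bounded quantifications of \NLOGSPACE reachability queries, so $\cS^\dagger$-validity is in \NLOGSPACE as well.

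I expect the main obstacle to be the membership argument for $\cR$. The delicate point is not the reachability relation $\Rightarrow$ itself but the quantification over the auxiliary set $B$: one must verify that membership $b_{V,i}\in B$ is decidable in \emph{nondeterministic logarithmic} space, via bounded iterated reachability, rather than merely in polynomial time as Lemma~\ref{lma:PTIME} would suggest, and then confirm that the quantifier/Boolean structure of Condition $(\mathbf{C})$ stays within \NLOGSPACE given its closure properties. The $\cS^\dagger$ case is conceptually simpler but depends essentially on the reduction of $\Gamma$-satisfiability to the per-existential-formula conditions supplied by the disjoint-union lemma.
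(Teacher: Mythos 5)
Your hardness reduction and your treatment of $\cR$ (and of $\cS$ via $\cS \subseteq \cR$) are correct and essentially the paper's own argument: the paper likewise reduces directed-graph reachability to validity in $\cS$, and likewise obtains the $\cR$ upper bound from Lemma~\ref{lma:tfae} by nondeterministically guessing the witnesses of Condition~({\bf C}) of Lemma~\ref{lma:9} and checking the resulting membership and $\Rightarrow$ facts as reachability queries in logarithmic space. The gap is in your $\cS^\dagger$ argument, where you depart from the paper (which simply observes that $\cS^\dagger$-satisfiability reduces almost trivially to 2SAT and cites its \NLOGSPACE-completeness). Your criteria are stated as reachability ``about the graph of $\forall$-edges of $\Phi$'', but rule (A) is not a reachability step, and inconsistency can propagate through (A) to literals that touch no edge at all. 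Concretely, let $\Phi = \{\forall(p,\bar{p}),\ \forall(\bar{p},p)\}$ and $\Psi = \{\exists(q,q')\}$ with $q,q'$ fresh atoms. Then $\Phi$ is unsatisfiable (it forces $p^\fA$ to be both empty and the whole non-empty domain), hence so is $\Gamma = \Phi \cup \Psi$; correspondingly $\{q,q'\}^\Phi$ is inconsistent, because $\vdash_{\rm BTA}$ derives $\forall(\bar{p},\bar{q})$ and $\forall(p,\bar{q})$ by (A), hence $\forall(q,p)$ by the contrapositive identification and then $\forall(q,\bar{q})$ by (B). Yet none of your three conditions---$\bar{q}$ reachable from $q$, $\bar{q}'$ from $q'$, $\bar{q}'$ from $q$---holds in the edge graph, since $q,q'$ occur in no formula of $\Phi$; and your criterion for the case $\Psi=\emptyset$ (``$\bar{l}$ reachable from $l$ for \emph{every} literal $l$'') likewise fails at $l=q$ although $\Phi$ is unsatisfiable. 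So, read as plain graph reachability, your algorithm wrongly declares such $\Gamma$ satisfiable.

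If instead you read ``reachable'' as $\vdash_{\rm BTA}$-derivable, your biconditionals become true (that is Lemma~\ref{lma:2.5} plus the contrapositive identifications and soundness), but then the assertion that they are \NLOGSPACE{}-checkable is precisely what remains to be proved, since BTA-derivability includes (A). The repair is to add one further disjunct everywhere: $\{l_0,m_0\}^\Phi$ is inconsistent iff one of your three plain reachabilities holds \emph{or} some atom $x$ occurring in $\Phi$ satisfies both ``$\bar{x}$ reachable from $x$'' and ``$x$ reachable from $\bar{x}$'' (i.e.\ $\Phi$ itself is unsatisfiable); and $\Phi$ alone is unsatisfiable iff such an atom exists---an existential criterion over the finitely many atoms of $\Phi$, not your universal one over all literals. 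With those corrections the whole test is a Boolean combination of reachability queries and lies in \NLOGSPACE. But at that point you have re-derived, by hand, the implication-graph analysis of 2SAT, which is exactly what the paper's one-line reduction to 2SAT buys without effort.
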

\begin{proof}
The lower bound is obtained by reduction of the reachability problem
for directed graphs to the validity problem for $\cS$.  If $G=(V,E)$
is a directed graph, take the nodes $V$ to be unary atoms. Let
$\Theta_G = \{ \forall(u,v) \mid (u,v) \in E\}$. It is easy to check
that $\Theta_G \models \forall(u,v)$ if and only if $v$ reachable from
$u$.

Recall that, by the Immerman-Szelepcs\'{e}nyi Theorem, \NLOGSPACE{} is
closed under complementation.  The upper bound for $\cS^\dagger$ is
then immediate, since the problem of determining the satisfiability of
a given set of $\cS^\dagger$-formulas is (almost trivially) reducible
to the problem 2SAT, which is well known to be
\NLOGSPACE-complete. (See, for example,
Papadimitriou~\cite{logic:papadimitriou94}, pp.~185 and~398.)

It remains only to establish the upper bound for $\cR$.   For this, it
suffices to show that the problem of determining the {\em
  un}satisfiability of a given set $\Gamma$ of $\cR$-formulas is
in \NLOGSPACE.  Let $\Gamma$ be a set of $\cR$-formulas. Let $B$ and
$\fB$ be as defined for Lemmas~\ref{lma:5}--\ref{lma:tfae}.
Lemma~\ref{lma:tfae} guarantees that $\Gamma$ is unsatisfiable if and
only if there exist $b_{S,i}, b_{T,j} \in B$, unary atoms $q$, $o$,
and a binary atom $r$ satisfying one of the four cases in Condition
({\bf C}) of Lemma~\ref{lma:9}.  Nondeterministically guess these $V$,
$W$, $i$, $j$, $q$, $o$ and $r$. This requires only logarithmic space,
because only the indices of the relevant atoms need to be encoded, and
the size of $B$ is linear in the number of formulas in $\Gamma$. To
check that $b_{V,i}, b_{W,j} \in B$ is essentially a
graph-reachability problem, as are all the requirements in the four
cases of Condition ({\bf C}). Since graph reachability is
in \NLOGSPACE, this proves the theorem.
\end{proof}

\section{$\cR^*$: an indirect system}
\label{sec:noProofR*}
Next, we consider the fragment $\cR^*$. We refer the reader to
Figure~\ref{fig:syntax} for a review of the syntax of this fragment.  The
complexity-theoretic analysis of $\cR^*$ has been done for us.  It is
contained in the more expressive logic (also inspired by natural
language syntax) investigated in McAllester and
Givan~\cite{logic:mcA+G92}, whose satisfiability problem is shown to
be \NPTIME-complete. Inspection of McAllester and Givan's
\NPTIME-hardness proof ({\em op.~cit.}~pp.~12--14) reveals that it
employs only formulas in our fragment $\cR^*$; hence that the lower
bound applies to $\cR^*$ too. In other words:
\begin{proposition}[McAllester and Givan]
The problem of determining the validity of a sequent in the fragment
$\cR^*$  is co-\NPTIME-complete.
\label{prop:R*hard}
\end{proposition}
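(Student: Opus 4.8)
The plan is to obtain the proposition directly from the cited McAllester--Givan result by means of two elementary, polynomial-time reductions relating validity to (un)satisfiability, both of which stay inside $\cR^*$. The two structural facts I shall lean on from Section~\ref{sec:preliminaries} are: (a) for any syllogistic fragment $\cF$, $\phi \in \cF$ implies $\bar{\phi} \in \cF$, so that forming negations never takes us outside $\cR^*$; and (b) $\cR^*$ contains absurdities, for instance $\exists(p,\bar{p})$, since $\cS \subseteq \cR^*$ (see Fig.~\ref{fig:overview}). Combined with the fact that $\cR^*$ is a sublanguage of the logic of McAllester and Givan, whose satisfiability problem is \NPTIME-complete, these will do all the work; the two directions of the complexity bound then come from the two directions of the validity/satisfiability interreduction.

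For the upper bound I would show that determining validity lies in co-\NPTIME. Given a candidate sequent $\Theta \models \theta$, observe that it holds if and only if $\Theta \cup \set{\bar{\theta}}$ is unsatisfiable. By fact (a), $\bar{\theta}$ is again an $\cR^*$-formula, so $\Theta \cup \set{\bar{\theta}}$ is a set of $\cR^*$-formulas; and since $\cR^*$ embeds into McAllester and Givan's logic, deciding satisfiability of such a set is an instance of their satisfiability problem, hence in \NPTIME. Therefore unsatisfiability of $\Theta \cup \set{\bar{\theta}}$---and so validity of the original sequent---lies in co-\NPTIME.

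For the matching lower bound I would invoke the remark immediately preceding the proposition: inspection of McAllester and Givan's \NPTIME-hardness construction shows that the instances it produces are already $\cR^*$-formulas, so satisfiability of $\cR^*$ is itself \NPTIME-hard, and consequently unsatisfiability of $\cR^*$-sets is co-\NPTIME-hard. To transfer this to validity I would use fact (b): for any set $\Gamma$ of $\cR^*$-formulas, $\Gamma$ is unsatisfiable if and only if $\Gamma \models \bot$, where $\bot$ is the fixed absurdity $\exists(p,\bar{p})$ (the forward implication is vacuous, the reverse holds because $\bot$ has no model). The map $\Gamma \mapsto (\Gamma,\bot)$ is trivially polynomial-time, so it is a many-one reduction from the co-\NPTIME-hard unsatisfiability problem to the validity problem, yielding co-\NPTIME-hardness of the latter.

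The genuine content of the result resides in the McAllester--Givan paper, which we cite rather than reprove, so in our own argument essentially nothing is difficult. The one point that really does require care---and which the sentence preceding the proposition is meant to discharge---is the claim that their hardness instances fall within $\cR^*$ \emph{as we have defined it}, rather than merely within their strictly more expressive logic; checking this amounts to matching their syntax against Fig.~\ref{fig:syntax}. The remaining work is pure bookkeeping: keeping the \NPTIME{} and co-\NPTIME{} sides straight, and confirming that both reductions stay inside $\cR^*$, which is exactly what facts (a) and (b) guarantee.
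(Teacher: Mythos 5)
Your proposal is correct and takes essentially the same approach as the paper: the upper bound comes from the containment of $\cR^*$ in McAllester and Givan's logic (whose satisfiability problem is in \NPTIME), and the lower bound from the observation that their \NPTIME-hardness construction uses only $\cR^*$-formulas. The paper leaves the two routine validity/(un)satisfiability interreductions implicit, whereas you spell them out (via $\bar{\theta}$ and the absurdity $\exists(p,\bar{p})$); this is elaboration, not a different argument.
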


Proposition~\ref{prop:R*hard} is significant not least because of what
we know about the complexity of searching for derivations in direct
syllogistic systems.

\begin{corollary}
If \PTIME$\neq$\NPTIME, then there exists no finite set $\X$ of
syllogistic rules in $\cR^*$ such that $\vdash_\X$ is both sound and
refutation-complete.
\label{cor:R*prooftheory}
\end{corollary}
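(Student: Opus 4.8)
The plan is to argue by contraposition: assuming a finite, sound, refutation-complete rule set $\X$ for $\cR^*$ exists, I would show that the validity (equivalently, satisfiability) problem for $\cR^*$ lies in \PTIME, whence---by Proposition~\ref{prop:R*hard}---\PTIME$=$\NPTIME. The engine of the argument is the observation that refutation-completeness converts a \emph{semantic} question (unsatisfiability) into a \emph{proof-theoretic} one (derivability of an absurdity), and that the latter is cheap to decide by Lemma~\ref{lma:PTIME}.

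First I would record the equivalence that makes everything go. Since $\vdash_\X$ is sound, $\Theta \vdash_\X \bot$ for an absurdity $\bot$ forces $\Theta \models \bot$, hence $\Theta$ unsatisfiable; since $\vdash_\X$ is refutation-complete, the converse holds. Thus a finite set $\Theta$ of $\cR^*$-formulas is unsatisfiable if and only if $\Theta \vdash_\X \bot$ for some absurdity $\bot \in \cR^*$. The next step is to bound the number of absurdities one must test. Inspecting the syntax in Figure~\ref{fig:syntax}, any $\cR^*$-absurdity $\exists(e,\bar e)$ must match $\exists(c^+,d)$ or $\exists(d,c^+)$, so one of $e,\bar e$ is a positive c-term; modulo the identification $\exists(e,f)=\exists(f,e)$, every $\cR^*$-absurdity therefore has the shape $\exists(c^+,\overline{c^+})$ with $c^+$ a positive c-term. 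Letting $\Sigma$ consist of the atoms occurring in $\Theta$ together with one fresh unary and one fresh binary atom, there are only $O(|\Sigma|^2)$ positive c-terms over $\Sigma$, hence only polynomially many candidate absurdities over $\Sigma$.

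It then remains to confine attention to $\Sigma$. Following the atom-renaming device in the proof of Lemma~\ref{lma:PTIME}, I would take any derivation of an absurdity $\bot_0=\exists(e_0,\overline{e_0})$ from $\Theta$ and uniformly replace every atom not occurring in $\Theta$ by an atom of $\Sigma$. Leaves lying in $\Theta$ are fixed, rule instances map to rule instances, and---because substitution commutes with the bar operation on e-terms, i.e.\ $g(\overline{e_0})=\overline{g(e_0)}$---the conclusion becomes $\exists(g(e_0),\overline{g(e_0)})$, still an absurdity, now over $\Sigma$. So $\Theta$ is unsatisfiable iff $\Theta \vdash_\X \exists(c^+,\overline{c^+})$ for one of the polynomially many positive c-terms $c^+$ over $\Sigma$. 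Each such test is in \PTIME{} by Lemma~\ref{lma:PTIME}, so unsatisfiability---and hence validity, via the reduction $\Theta\models\theta$ iff $\Theta\cup\{\bar\theta\}$ unsatisfiable (with $\bar\theta\in\cR^*$)---is decidable in polynomial time. Combined with Proposition~\ref{prop:R*hard}, this places a co-\NPTIME-complete problem in \PTIME, collapsing \PTIME{} and \NPTIME.

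I expect the only delicate point to be the second and third steps taken together: pinning down exactly which $\cR^*$-formulas are absurdities, and verifying that the Lemma~\ref{lma:PTIME} renaming preserves absurdity-hood of the conclusion. Everything else---soundness giving one direction of the key equivalence, refutation-completeness the other, and the polynomial bound on candidate absurdities---is routine bookkeeping. In particular, no new proof-theoretic content about $\cR^*$ is required; the corollary is extracted purely by playing the complexity lower bound of Proposition~\ref{prop:R*hard} against the generic \PTIME{} upper bound of Lemma~\ref{lma:PTIME}.
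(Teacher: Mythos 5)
Your proof is correct and takes essentially the same route as the paper, whose entire proof of this corollary is the citation ``Proposition~\ref{prop:R*hard} and Lemma~\ref{lma:PTIME}'': a finite, sound, refutation-complete rule set would place the co-\NPTIME-complete validity problem for $\cR^*$ in \PTIME. Your extra bookkeeping---characterizing the $\cR^*$-absurdities as the polynomially many formulas $\exists(c^+,\overline{c^+})$ over the relevant atoms, and checking that the atom-renaming device of Lemma~\ref{lma:PTIME} preserves absurdity-hood---correctly fills in the details the paper leaves implicit.
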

\begin{proof}
Proposition~\ref{prop:R*hard} and Lemma~\ref{lma:PTIME}.
\end{proof}

Let $\R^*$ be the following set of syllogistic rules in $\cR^*$.  Our
result is that the {\em indirect} derivation relation $\Vdash_{\R^*}$ is
complete.  In the following, the variables $b^+,c^+$ range over
positive c-terms, and $d$ over c-terms. (As usual, $p, q$ range over
unary atoms and $r$ over binary atoms.)
\begin{equation*}
\begin{array}{c}
\begin{array}{ccc}
 \infer[{\mbox{\small (T)}}]{\forall (c^+,c^+)}{}
 \qquad
 \infer[{\mbox{\small (I)}}]{\exists (c^+,c^+)}
                   {\exists (c^+,d)}   
 &  &
\infer[{\mbox{\small (B)}}]{\forall (b^+,d)}
                   {\forall (b^+,c^+) & 
             \qquad \forall (c^+,d)}
                      
                   \\
\ \\
\infer[{\mbox{\small (D1)}}]{\exists (b^+,d)}
                   {\exists (b^+,c^+) & 
             \qquad \forall (c^+,d)}    
& & 
\infer[{\mbox{\small (D2)}}]{\exists (c^+,d)}
                   {\forall (b^+,c^+) & 
             \qquad \exists (b^+,d)}
              \\ \\
 \end{array}
 \\
 \begin{array}{ccc}
  \infer[{\mbox{\small (J)}}]{\forall(\forall(q,r),\forall(p,r))}{\forall(p,q)}
&\infer[{\mbox{\small (K)}}]{\forall(\exists(p,r),\exists(q,r))}{\forall(p,q)} 
& \infer[{\mbox{\small (L)}}]{\forall(\forall(p,r),\exists(q,r))}{\exists(p,q)}
\\  \\
 \infer[\mbox{\small(II)}]{\exists(p,p)}{\exists(q, \exists(p,r))}
&   \infer[{\mbox{\small (Z)}}]{\forall (c^+,\forall(p,r))}
                    {\forall (p, \bar{p})}
&
                     \infer[{\mbox{\small (W)}}]{\exists((\forall(p,r),\forall(p,r))}
                    {\forall (p, \bar{p})}
\end{array}
\end{array}
\end{equation*}
Rules (D1), (D2), (B), (T) and (I) are natural generalizations of
their namesakes in $\R$.  Rules (J), (K), and (L) embody
logical principles that are intuitively clear, yet not familiar when
taken as single steps.  If {\sf all porcupines are brown animals},
then {\sf everything which attacks all brown animals attacks all
  porcupines} (J), and {\sf everything which photographs some
  porcupine photographs some brown animal} (K).  And if {\sf some
  porcupines are brown animals}, then {\sf everything which caresses
  all porcupines caresses some brown animals} (L).  We have already
seen (II).  Rule (Z) tells that if there are no porcupines (say), then
all farmers love all porcupines.  Rule (W) tells us that under this
same assumption, there is something which loves all porcupines (simply
because we assume the universe is non-empty, and everything in it
vacuously loves all porcupines).  If we did not assume that the
universe of a model is non-empty, then we would drop (W), and the
completeness of the resulting system would be proved the same way.
 
All the rules of $\R$ are derivable in $\Vdash_{\R^*}$.  For example,
here is a proof of (A):
$$\infer[(\mbox{\small RAA})^1]{\forall(p,d)}{
\infer[(\mbox{\small D1})]{\exists(p,\pbar)}{
\infer[(\mbox{\small I})]{\exists(p,p)}{[\exists(p,\dbar)]^1} & \hspace{.5 cm}   \forall(p,\pbar)}
}
$$
The rest of this section is devoted to a proof that     
$\Vdash_{\R^*}$ is sound and complete for $\cR^*$.

\paragraph{Starting the proof}
To prove completeness, we need only show that every consistent set
$\Gamma$ in the fragment $\cR^*$ is satisfiable.  Also, by
Lemma~\ref{lma:lindenbaum}, we may assume that $\Gamma$ is
$\cR^*$-complete.  For the remainder of this section, fix some
$\cR^*$-complete set of formulas $\Gamma$ which is consistent with
respect to $\Vdash_{\R^*}$.  We simplify our notation to write
$\vdash$ for $\Vdash_{\R^*}$.

We shall construct a structure $\fA$ and prove that it
satisfies $\Gamma$.  First, let $\bC^+$ be the set of positive
c-terms.  Then we define $\fA$ by:
\begin{equation*}
\begin{array}{lcl} A & = & \set{\pair{c_1,c_2, Q} \in
    \bC^+\times\bC^+\times \set{\forall, \exists} : \Gamma
    \proves\exists(c_1,c_2)} \\ \semantics{p} & = &
  \set{\pair{c_1,c_2,Q}\in A: \Gamma \proves \forall(c_1,p) \mbox{ or
    } \Gamma \proves \forall(c_2,p)} \\ 
    \pair{c_1, c_2, Q_1}
  \semantics{r} (d_1,d_2, Q_2) & \mbox{iff} & \mbox{either (a) for some $i$,
    $j$, and $q\in \bP$,} \\ & & \Gamma \proves\forall(c_i,\forall(q,r))
  \mbox{ and $\Gamma\proves\forall(d_j,q)$;} \\ 
& 
 &\mbox{or else (b) $Q_2 = \exists$,   and for some $i$ and $q\in \bP$,}\\
% $d_1 = d_2$,  $d_1\in\bP$,}\\
% & &
%\mbox{and for some $i$, }  
& & d_1 = d_2 = q, \mbox{ and } 
 \Gamma \proves \forall(c_i, \exists(q,r)).\\
  \end{array}
\end{equation*}
\rem{
\begin{equation*}
\begin{array}{lcl} A & = & \set{\pair{c_1,c_2, Q} \in
    \bC^+\times\bC^+\times \set{\forall, \exists} : \Gamma
    \proves\exists(c_1,c_2)} \\ \semantics{p} & = &
  \set{\pair{c_1,c_2,Q}\in A: \Gamma \proves \forall(c_1,p) \mbox{ or
    } \Gamma \proves \forall(c_2,p)} \\ \pair{c_1, c_2, Q}
  \semantics{r} (d_1,d_2, \forall) & \mbox{iff} & \mbox{for some $i$,
    $j$, and $q$,} \\ & & \Gamma \proves\forall(c_i,\forall(q,r))
  \mbox{ and $\Gamma\proves\forall(d_j,q)$} \\ 
\pair{c_1, c_2, Q} \semantics{r} \pair{d_1,d_2, \exists} & 
    \mbox{iff} &\mbox{either }
  \pair{c_1, c_2, Q} \semantics{r} \pair{d_1,d_2, \forall} \mbox{ from
    above}; \\ 
& & \mbox{or else $d_1 = d_2$, and for some $i$,} \\ & &
  \Gamma \proves \forall(c_i, \exists(d_1,r)).\\
  \end{array}
\end{equation*}
}
Note that the set $A$ is non-empty.  For let $p\in\bP$.  If
$\Gamma\proves\exists(p,p)$, then $\pair{p,p,\forall} \in A$.
Otherwise, $\Gamma\proves\forall(p,\pbar)$, and so for all binary
atoms $r$, $\Gamma\proves\exists(\forall(p,r),\forall(p,r))$ by (W).
Thus $\pair{c,c,\forall}\in A$, where $c$ is $\forall(p,r)$.
 
\begin{lemma}
For all $c\in\bC^+$,
$$\semantics{c}  =    \set{\pair{d_1,d_2,Q}\in A: 
\mbox{either } \Gamma \proves \forall(d_1,c)
\mbox{, or } \Gamma \proves \forall(d_2,c)
}.$$
\label{lemma-rstar}
\end{lemma}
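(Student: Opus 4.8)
The plan is to prove the stated identity by a case analysis on the form of the positive c-term $c$, which by the syntax in Figure~\ref{fig:syntax} is one of $p$, $\exists(p,r)$, or $\forall(p,r)$. Throughout, write $D_c = \set{\pair{c_1,c_2,Q}\in A : \Gamma\proves\forall(c_1,c) \mbox{ or } \Gamma\proves\forall(c_2,c)}$ for the right-hand side, so that the goal is $\semantics{c}=D_c$. For each form of $c$ I will verify the two inclusions separately, reading membership in $\semantics{c}$ off the defining clauses of $\fA$ and converting the resulting derivability facts back and forth using the rules of $\R^*$. The base case $c=p$ needs no work: $\semantics{p}=D_p$ is precisely the defining clause for $\semantics{p}$.

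For $c=\exists(p,r)$ I treat the inclusion $\semantics{\exists(p,r)}\subseteq D_{\exists(p,r)}$ first. An element $\pair{c_1,c_2,Q}$ of $\semantics{\exists(p,r)}$ has an $r$-successor $b=\pair{d_1,d_2,Q'}\in\semantics{p}$, so $\Gamma\proves\forall(d_j,p)$ for some $j$, and I branch on which clause (a) or (b) of the definition of $\semantics{r}$ witnesses $\pair{c_1,c_2,Q}\,\semantics{r}\,b$. Under clause (a) I have $\Gamma\proves\forall(c_i,\forall(q,r))$ and $\Gamma\proves\forall(d_{j'},q)$; feeding $\Gamma\proves\exists(d_1,d_2)$ (from $b\in A$) together with the two universal facts through (I), (D1) and (D2) gives $\Gamma\proves\exists(q,p)$, and then (L) followed by (B) yields $\Gamma\proves\forall(c_i,\exists(p,r))$. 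Under clause (b), $d_1=d_2=q$ with $\Gamma\proves\forall(c_i,\exists(q,r))$, and since $\Gamma\proves\forall(q,p)$, rule (K) followed by (B) gives the same conclusion. For the converse inclusion, assume $\Gamma\proves\forall(c_i,\exists(p,r))$; applying (I), (D1) and (II) to $\Gamma\proves\exists(c_1,c_2)$ yields $\Gamma\proves\exists(p,p)$, so $b=\pair{p,p,\exists}\in A$, while (T) puts $b\in\semantics{p}$. Because $b$ carries the label $\exists$, clause (b) applies and establishes $\pair{c_1,c_2,Q}\,\semantics{r}\,b$, hence $\pair{c_1,c_2,Q}\in\semantics{\exists(p,r)}$.

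For $c=\forall(p,r)$ the inclusion $D_{\forall(p,r)}\subseteq\semantics{\forall(p,r)}$ is the routine one: if $\Gamma\proves\forall(c_i,\forall(p,r))$, then for every $b=\pair{d_1,d_2,Q'}\in\semantics{p}$ clause (a) applies (taking $q=p$), so $\pair{c_1,c_2,Q}$ is $r$-related to every element of $\semantics{p}$, as required. The reverse inclusion $\semantics{\forall(p,r)}\subseteq D_{\forall(p,r)}$ is the crux of the lemma, since it asks me to manufacture a suitable witness instead of consuming one; I expect this to be the main obstacle. My plan is to split on whether $\Gamma\proves\exists(p,p)$. If it does, the element $b=\pair{p,p,\forall}$ lies in $A$ and, by (T), in $\semantics{p}$; then $\pair{c_1,c_2,Q}\in\semantics{\forall(p,r)}$ forces $\pair{c_1,c_2,Q}\,\semantics{r}\,b$, and the label $\forall$ excludes clause (b), so clause (a) must provide $q$ with $\Gamma\proves\forall(c_i,\forall(q,r))$ and $\Gamma\proves\forall(p,q)$; rules (J) and (B) then deliver $\Gamma\proves\forall(c_i,\forall(p,r))$. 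If $\Gamma\not\proves\exists(p,p)$, the witness $\pair{p,p,\forall}$ is unavailable, and here I instead use that $\cR^*$-completeness of $\Gamma$ forces $\forall(p,\pbar)\in\Gamma$, whereupon rule (Z) gives $\Gamma\proves\forall(c_1,\forall(p,r))$ outright. This bifurcation is the delicate point of the argument, and the only place where the $\cR^*$-completeness of $\Gamma$ and the empty-extension rule (Z) are actually needed.
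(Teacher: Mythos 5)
Your proposal is correct and follows essentially the same route as the paper's own proof: the same case analysis on the form of $c$, the same derivations via (I)/(D1)/(D2), (L), (K), (J) and (B) in each direction, and the same bifurcation on whether $\Gamma\proves\exists(p,p)$ (using $\cR^*$-completeness and rule (Z) in the empty case) for the $\forall(p,r)$ direction. The only differences are cosmetic (order of cases and minor choices among the interchangeable existential-manipulation rules).
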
   

\begin{proof}   
The result for $c$ a unary atom is immediate.  We often shall use the
resulting fact that if $\Gamma\proves \exists(p,p)$, then
$\semantics{p} \neq\emptyset$; it contains both $\pair{p,p,\forall}$
and $\pair{p,p,\exists}$.  The main work concerns c-terms of the form
$\forall(p,r)$ and $\exists(p,r)$.  We remark that all c-terms
referred to in this proof are {\em positive}.

We begin with $c = \forall(p,r)$.  Let $\pair{d_1,d_2,Q} \in
\semantics{\forall(p,r)}$.  Now, either $\Gamma\proves \exists (p,p)$
or $\Gamma \not \proves \exists (p,p)$. If the former, then
$\pair{p,p,\forall}\in \semantics{p}$.  By the semantics of our
fragment, $\pair{d_1,d_2,Q} \semantics{r}\pair{p,p,\forall}$.  By the
structure of $\fA$, there are $i$ and $q$ giving the derivation from
$\Gamma$ as in the tree on the left below:
\begin{equation*}
  \infer[(\mbox{\small B})]{\forall(d_i,\forall(p,r)) }
 { \infer*{\forall(d_i,\forall(q,r))}{}
  &
 \hspace{.5 cm}    \infer[(\mbox{\small J})]{\forall(\forall(q,r),\forall(p,r))}{\infer*{\forall(p,q)}{}}
}
\qquad\quad
   \infer[{\mbox{\small (Z)}.}]{\forall (d_j,\forall(p,r))}
                    {\infer*{\forall (p, \bar{p})}{}}
\end{equation*}
This shows that $\Gamma \proves\forall(d_i,\forall(p,r)) $.  On the
other hand, if $\Gamma\not\proves \exists(p,p)$, we use the assumption
that $\Gamma$ is complete to assert that
$\Gamma\proves\forall(p,\pbar)$.  And then we have the derivation from
$\Gamma$ on the right above, for both $j$.

Conversely,  fix $i$ and suppose that 
 $\Gamma\proves \forall(d_i,\forall(p,r))$.
We claim that 
 $\pair{d_1,d_2,Q}$ belongs to $\semantics{\forall(p,r)}$.
 For this, take any $\pair{b_1,b_2,Q'}\in \semantics{p}$ so that
  $\Gamma\proves \forall(b_j,p)$  for some $j$.
  (We are thus using $b_1$ and $b_2$ to range over positive c-terms,
just as the $c$'s and $d$'s do.)
Then $p$, 
$i$ and $j$ show that  $\pair{d_1,d_2,Q}  \semantics{r}\pair{b_1,b_2, Q'}$.
 This for
   all  elements of $\semantics{p}$
 shows that  
  $\pair{d_1,d_2,Q}  \in \semantics{\forall(p,r)}$.
  
 We next prove the statement of our lemma for $c = \exists(p,r)$.
\newcommand{\one}
  {\infer[(\mbox{\small L})]{\forall(\forall(q,r),\exists(b_k,r))}{\zero}}
\newcommand{\zero}{\infer[(\mbox{\small D1})]{\exists(b_k,q)}
  {\infer{\exists(b_k,b_j)}{\infer*{\exists(b_1,b_2)}{}} & \hspace{.5cm} 
   \infer*{\forall(b_j,q)}{}}}

\newcommand{\two}{\infer[(\mbox{\small B})]{\forall(d_i,\exists(b_k,r))}
                 {\infer*{\forall(d_i,\forall(q,r))}{}
                  &  \hspace{.5 cm}  \one}}
\newcommand{\three}{\infer[(\mbox{\small K})]{\forall(\exists(b_k,r),\exists(p,r))}
        {\infer*{\forall(b_k,p)}{}}}

  \newcommand{\zerofive}
{\infer[(\mbox{\small D1})]{\exists(q,p)}
{\zero & \hspace{.5 cm} 
\infer*{\forall(b_k,p)}{}}}
\newcommand{\zeroseven}
{\infer[(\mbox{\small L})]{\forall(\forall(q,r),\exists(p,r))}{\zerofive}}

\newcommand{\zeroeight}
{\infer[(\mbox{\small B})]{\forall(d_i,\exists(p,r))}
  {\infer*{\forall(d_i,\forall(q,r))}{}   & \hspace{.5 cm} \zeroseven}}

Let $\pair{d_1,d_2,Q} \in \semantics{\exists(p,r)}$.  Thus we have
$\pair{d_1,d_2,Q} \semantics{r}\pair{b_1,b_2,Q'}$ for some
$\pair{b_1,b_2,Q'}\in\semantics{p}$.  We first consider case (a) in
the definition of our structure $\fA$: there are $i$, $j$, and $q$ so
that $\Gamma \proves\forall(d_i,\forall(q,r))$ and
$\Gamma\proves\forall(b_j,q)$.  We have
$\Gamma\proves\exists(b_1,b_2)$, since $\pair{b_1,b_2,Q'}\in A$.
Further, let $k$ be such that $\Gamma\proves\forall(b_k,p)$.  We show
the desired conclusion using a derivation from $\Gamma$:
\begin{equation*}
 \zeroeight
%\infer[(\mbox{\small B})]{\forall(d_i,\exists(p,r))}{\two &\hspace{.5 cm} \three}
\end{equation*}

This concludes the work in case (a).
In case (b),   
 $Q' = \exists$,  there is some $q\in\bP$ such that
$b_1 = b_2 = q$,
 and   for some $i$,  $\Gamma \proves \forall(d_i, \exists(q,r))$.
 Again we have
$\Gamma \proves\forall(q, p)$.  So we have a derivation from
$\Gamma$ as follows:
\begin{equation*}
\infer[(\mbox{\small B})]{\forall(d_i,\exists(p,r))}
{\infer*{\forall(d_i,\exists(q,r))}{}
 \qquad
 \infer[(\mbox{\small K})]{\forall(\exists(q,r),\exists(p,r))}{
\infer*{\forall(q,p)}{}
}}
\end{equation*}

At this point, we know that if $\pair{d_1,d_2,Q} \in \semantics{\exists(p,r)}$,
then $\Gamma\proves \forall(d_i,\exists(p,r))$ for some $i$.
We now verify the converse.    
Let $\pair{d_1,d_2,Q}\in A$, and fix $i$ such that
$\Gamma\proves \forall(d_i,\exists(p,r))$.  Then $\Gamma\proves
\exists(d_1,d_2)$.  We thus have a derivation from $\Gamma$:
 \begin{equation*}
\infer[(\mbox{\small II})]{\exists(p,p)}{
\infer[(\mbox{\small D1})]{\exists(d_i,\exists(p,r))}
{
\infer[(\mbox{\small I})]{\exists(d_i,d_i)}{\infer*{\exists(d_1,d_2)}{}}
 &
\hspace{.5 cm}   \infer*{\forall(d_i,\exists(p,r))}{}
}
}
\end{equation*}
This goes to show that $\pair{p,p,\exists}\in A$.  
By the construction of $\fA$,
$\pair{d_1,d_2,Q}\semantics{r}\pair{p,p,\exists}$, and
$\pair{p,p,\exists} \in p^\fA$.  So $\pair{d_1,d_2,Q}\in
\semantics{\exists(p,r)}$.  This completes the proof.
\end{proof}                     
\begin{lemma}  
$\fA\models \Gamma$.
\label{lemma-second-McA-G}
\end{lemma}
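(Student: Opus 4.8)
The plan is to verify $\fA \models \phi$ for each $\phi \in \Gamma$ one formula at a time. By the identifications $\exists(e,f) = \exists(f,e)$ and $\forall(e,f) = \forall(\bar f,\bar e)$, together with the four shapes of an $\cR^*$-formula listed in~\eqref{eq:syntaxR*}, we may assume every $\phi \in \Gamma$ has one of the two forms $\forall(c^+,d)$ or $\exists(c^+,d)$, where $c^+$ is a positive c-term and $d$ is an arbitrary (positive or negative) c-term. Throughout I would exploit Lemma~\ref{lemma-rstar}, which describes $e^\fA$ for every \emph{positive} c-term $e$; when $d$ is negative I write $d = \bar b$ with $b = \bar d$ positive and use $d^\fA = A \setminus b^\fA$, applying Lemma~\ref{lemma-rstar} to $b$.

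For the universal case $\phi = \forall(c^+,d) \in \Gamma$, recall that $\fA \models \phi$ means $(c^+)^\fA \subseteq d^\fA$. So I would take an arbitrary $\pair{e_1,e_2,Q} \in (c^+)^\fA$; by Lemma~\ref{lemma-rstar} we have $\Gamma \vdash \forall(e_i,c^+)$ for some $i$, and chaining this with $\forall(c^+,d) \in \Gamma$ through rule (B) gives $\Gamma \vdash \forall(e_i,d)$. If $d$ is positive, Lemma~\ref{lemma-rstar} immediately places $\pair{e_1,e_2,Q}$ in $d^\fA$. If $d = \bar b$ is negative, I must instead show $\pair{e_1,e_2,Q} \notin b^\fA$: assuming the contrary yields $\Gamma \vdash \forall(e_j,b)$ for some $j$, and together with $\Gamma \vdash \forall(e_i,\bar b)$ and $\Gamma \vdash \exists(e_1,e_2)$ (membership in $A$) a short derivation using (I), (D1) and (D2) produces the absurdity $\exists(b,\bar b)$, contradicting the consistency of $\Gamma$.

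For the existential case $\phi = \exists(c^+,d) \in \Gamma$, I need to produce a single element of $(c^+)^\fA \cap d^\fA$. When $d$ is positive, the witness $\pair{c^+,d,\exists}$ works: it lies in $A$ because $\Gamma \vdash \exists(c^+,d)$, and rule (T) places it in both $(c^+)^\fA$ (via its first coordinate) and $d^\fA$ (via its second). When $d = \bar b$ is negative, I would use $\pair{c^+,c^+,\exists}$ instead: rule (I) applied to $\exists(c^+,d)$ yields $\Gamma \vdash \exists(c^+,c^+)$, so this triple lies in $A$ and in $(c^+)^\fA$ by (T); to see that it avoids $b^\fA$, I observe that $\Gamma \vdash \forall(c^+,b)$ would, together with $\exists(c^+,\bar b)$, give $\exists(b,\bar b)$ by a single use of (D2), again contradicting consistency.

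The hard part will be the negative-$d$ subcases. For positive c-terms the explicit combinatorial description of $e^\fA$ from Lemma~\ref{lemma-rstar} does all the work, but for a negative target $d = \bar b$ that description is unavailable, and membership in $d^\fA$ must be decided \emph{negatively}, by ruling out membership in $b^\fA$. The crux is therefore to show that the relevant $\Gamma \vdash \forall(\cdot,b)$ assertions cannot hold, which I do by deriving the absurdity $\exists(b,\bar b)$ and invoking the consistency of $\Gamma$; arranging the right instances of (D1) and (D2), with the silent use of $\exists(e,f) = \exists(f,e)$, is the only delicate bookkeeping in the argument.
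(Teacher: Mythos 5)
Your proposal is correct and follows essentially the same route as the paper's proof: a case split on the four formula shapes $\forall(c^+,d^+)$, $\forall(c^+,\overline{d^+})$, $\exists(c^+,d^+)$, $\exists(c^+,\overline{d^+})$, with Lemma~\ref{lemma-rstar} handling membership for positive c-terms, rules (T), (I) supplying the existential witnesses $\pair{c^+,d^+,\exists}$ and $\pair{c^+,c^+,\cdot}$, and (B), (D1), (D2) plus the consistency of $\Gamma$ ruling out membership in $b^\fA$ in the negative-object cases. The only (immaterial) deviation is your witness $\pair{c^+,c^+,\exists}$ where the paper uses $\pair{c,c,\forall}$; since Lemma~\ref{lemma-rstar} is insensitive to the quantifier tag, both work identically.
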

\begin{proof} 
The proof is by cases on the various formula types in $\cR^*$.  Using
the fact that formulas $\exists(e,f)$ and $\exists(f,e)$ are
identified, and similarly for $\forall(\bar{e},\bar{f})$ and
$\forall(f,e)$, we may take all $\cR^*$-formulas to have one of the
forms:
\begin{equation*}
\forall(c^+,d^+), \hspace{0.5cm}
\forall(c^+,\overline{d^+}), \hspace{0.5cm}
\exists(c^+,d^+), \hspace{0.5cm}
\exists(c^+,\overline{d^+}),
\end{equation*}
where $c^+$ and $d^+$ range over \emph{positive} c-terms. In the
remainder of the proof, we omit the ${}^+$-superscripts for clarity:
i.e.~$c$ and $d$ range over positive c-terms.
 
Let $\phi\in\Gamma$ be $\forall(c,d)$.  Using (B) and
Lemma~\ref{lemma-rstar}, we see that
$\semantics{c}\subseteq\semantics{d}$.

Let $\phi\in \Gamma$ be $\forall(c,\bar{d})$.  Suppose towards a
contradiction that $\fA\not\models\phi$.  Let $\pair{b_1,b_2,Q}\in
\semantics{c}\cap \semantics{d}$.  Let $i$ and $j$ be such that
$\Gamma\proves \forall(b_i,c)$ and $\Gamma\proves \forall(b_j,d)$.
Then using (B), $\Gamma\proves \forall(b_i,\bar{d})$.  And since
$\Gamma\proves\exists(b_i,b_j)$, we use (D1) to see that
$\Gamma\proves\exists(d,\bar{d})$.   So $\Gamma$ is
inconsistent, a contradiction.

If $\phi\in\Gamma$ 
is   $\exists(c,d)$,
then $(c,d,\exists)\in A$.
Indeed, $(c,d,\exists)\in \semantics{c}\cap \semantics{d}$, by Rule
(T)
and Lemma~\ref{lemma-rstar}.

Finally,  consider the case when $\phi\in\Gamma$ 
is of the form $\exists(c,\dbar)$.
Then, using (I), $\Gamma\proves\exists(c,c)$,
so $\pair{c,c,\forall}\in A$. 
Suppose towards a contradiction that $\fA\models\forall(c,d)$.
Then $\pair{c,c,\forall}\in \semantics{d}$.
But then we have $\Gamma\proves\forall(c,d)$, by Lemma~\ref{lemma-rstar} again.
One application of (D2) now shows that  $\Gamma\proves\exists(d,\dbar)$.
Thus we have a contradiction to the consistency of $\Gamma$.
\end{proof}

Hence, we have shown that any consistent set $\Gamma$ of
$\cR^*$-formulas has a model. Therefore:
\begin{theorem}
The derivation relation $\Vdash_{\R^*}$ is sound and complete for $\cR^*$.
\label{theo:rstarcompleteness}
\end{theorem}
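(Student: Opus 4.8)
The plan is to dispatch soundness by the standard criterion and then assemble completeness from the model-construction lemmas already in hand. For soundness, I would verify that each of the eleven rules of $\R^*$ is individually sound---(T), (I), (B), (D1), (D2) are the positive-c-term versions already checked for $\R$, while (J), (K), (L), (II), (Z), (W) are the monotonicity and emptiness principles whose validity was indicated informally when the rules were introduced---and then invoke the fact, noted when $\Vdash_\X$ was defined, that an indirect derivation relation is sound exactly when every rule generating it is sound.

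For completeness, the key move is the reduction, already flagged above, of completeness to the satisfiability of consistent sets. Suppose $\Theta \models \theta$. Then $\Theta \cup \{\bar{\theta}\}$ is unsatisfiable, so it suffices to show that every $\Vdash_{\R^*}$-consistent set is satisfiable (equivalently, that every unsatisfiable set is inconsistent, i.e.\ that $\Vdash_{\R^*}$ is refutation-complete); for then $\Theta \cup \{\bar{\theta}\} \Vdash_{\R^*} \bot$, and a single application of (RAA)---clause~3 in the definition of $\Vdash_\X$---yields $\Theta \Vdash_{\R^*} \theta$. To establish that consistent sets are satisfiable I would take an arbitrary consistent $\Gamma$, extend it by Lemma~\ref{lma:lindenbaum} to an $\cR^*$-complete set $\Delta \supseteq \Gamma$ that is still consistent, build the structure $\fA$ over triples $\langle c_1, c_2, Q \rangle$ exactly as defined above (now relative to $\Delta$), and then read off $\fA \models \Delta$---and hence $\fA \models \Gamma$---directly from Lemma~\ref{lemma-second-McA-G}. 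Since $\Gamma$ was an arbitrary consistent set, this is precisely refutation-completeness, and the theorem follows.

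The genuinely hard work lies not in this final assembly but in the truth lemma, Lemma~\ref{lemma-rstar}, which I am entitled to assume. Were I to prove it from scratch, the main obstacle would be the c-terms of the form $\exists(p,r)$: establishing both inclusions in the characterization of $\exists(p,r)^\fA$ forces the domain to carry the $\exists$-tagged witnesses $\langle p,p,\exists \rangle$ and to close $r^\fA$ under clause~(b) of its definition, and checking that these witnesses are correctly placed leans delicately on rules (K), (L) and (II), on the completeness of $\Delta$ (used to split on whether $\Gamma \proves \exists(p,p)$ or $\Gamma \proves \forall(p,\bar{p})$), and on the emptiness rules (Z) and (W) for the vacuous case $p^\fA = \emptyset$. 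With Lemmas~\ref{lemma-rstar} and~\ref{lemma-second-McA-G} granted, however, the proof of the theorem itself is essentially the short soundness-plus-(RAA) argument sketched above.
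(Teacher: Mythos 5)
Your proposal is correct and takes essentially the same approach as the paper: the paper likewise reduces completeness to showing every $\Vdash_{\R^*}$-consistent set is satisfiable (with soundness dispatched as routine and (RAA) supplying the final step), invokes Lemma~\ref{lma:lindenbaum} to pass to an $\cR^*$-complete consistent extension, and obtains the model from the triple construction via Lemmas~\ref{lemma-rstar} and~\ref{lemma-second-McA-G}. The only difference is presentational: the paper fixes the complete consistent set at the start of the section and states the theorem as a corollary of the two lemmas, whereas you fold that assembly into the proof of the theorem itself.
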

 
\section{$\cR^\dagger$ and $\cR^{*\dagger}$: no indirect systems}
\label{sec:noProofR+}

Our last fragments are $\cR^\dagger$ and $\cR^{*\dagger}$.  We open
with some simple complexity results, showing that these fragments have
no direct syllogistic proof-system which is both sound and
refutation-complete.  After this, we strengthen the result to show
that there are no {\em in}direct syllogistic proof-systems, either.
The proof of this negative result is similar to what we saw earlier in
Theorem~\ref{theo:noSyllR} for $\cR$, but the argument here is more
intricate.

\begin{lemma}
The problem of determining the validity of a sequent in $\cR^\dagger$ is
\EXPTIME-hard.
\label{lma:R+hard}
\end{lemma}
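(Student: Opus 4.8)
The plan is to prove \EXPTIME-hardness by a polynomial-time reduction from a standard \EXPTIME-complete problem, namely satisfiability of an $\mathcal{ALC}$ concept with respect to a general TBox (equivalently, global satisfiability in multi-modal $K$). Since validity of a sequent $\Theta \models \theta$ amounts to unsatisfiability of $\Theta \cup \{\bar\theta\}$, and since \EXPTIME{} is closed under complementation, it suffices to show that deciding satisfiability of a finite set of $\cR^\dagger$-formulas is \EXPTIME-hard. So I would fix an $\mathcal{ALC}$ TBox $\mathcal{T}$ and a concept $C_0$, and manufacture, in polynomial time, a finite set $\Gamma$ of $\cR^\dagger$-formulas (together with one existential assertion witnessing $C_0$) that is satisfiable exactly when $C_0$ is satisfiable with respect to $\mathcal{T}$.

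First I would put $\mathcal{T}$ into a normal form, introducing a fresh unary atom $A_D$ for every subconcept $D$ occurring in $\mathcal{T}$ or $C_0$ and replacing $\mathcal{T}$ by polynomially many simple axioms. The modal and atomic axioms translate directly into $\cR^\dagger$. An inclusion between literals $\ell \sqsubseteq \ell'$ becomes $\forall(\ell,\ell')$; an existential axiom $A \sqsubseteq \exists r.\,B$ becomes $\forall(A,\exists(B,r))$; and, crucially exploiting noun-level negation, a value restriction $A \sqsubseteq \forall r.\,B$ becomes $\forall(A,\forall(\bar B,\bar r))$, since a check of the semantics gives $\forall(\bar B,\bar r)^\fA = (\forall r.\,B)^\fA$. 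Conversely $\exists r.\,B \sqsubseteq A$ becomes $\forall(\exists(B,r),A)$ and $\forall r.\,B \sqsubseteq A$ becomes $\forall(\forall(\bar B,\bar r),A)$. Thus any subconcept of one of the forms $\ell$, $\exists r.\,\ell$, $\forall r.\,\ell$ can be named by a fresh atom $N$ via the pair of $\cR^\dagger$-formulas $\forall(N,e)$ and $\forall(e,N)$, and nested modalities are handled by naming inside-out. Non-emptiness of a marker atom $U$, which I will need below, is forced by the $\cS^\dagger$-formula $\exists(U,U)$.

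The hard part will be the \emph{Boolean} axioms, concretely, simulating conjunction on the left of an inclusion (equivalently, clauses of width three, equivalently a defining axiom $A \equiv B_1 \sqcap B_2$), since a $\cR^\dagger$-formula admits a compound term on only one side and the fragment offers no concept-level $\sqcap$ or $\sqcup$. My intended device is to realise a conjunction as a ``relate-to-all'' concept. Introducing a fresh binary atom $r$ and forced-nonempty marker atoms, I would impose product constraints $\forall(B_1,\forall(U_1,r))$ and $\forall(B_2,\forall(U_2,r))$ (so each conjunct forces all of the corresponding marker-edges) together with the ``misses-a-witness'' constraints $\forall(\bar B_1,\exists(U_1,\bar r))$ and $\forall(\bar B_2,\exists(U_2,\bar r))$ (so failing a conjunct destroys some marker-edge); against a single marker atom $M$ with $U_1,U_2 \sqsubseteq M$, the derived concept $\forall(M,r)$ then behaves like $B_1 \sqcap B_2$, and $\forall(\forall(M,r),A)$ encodes $B_1 \sqcap B_2 \sqsubseteq A$. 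The genuinely delicate point, and where I expect the real work to lie, is ensuring that $\forall(M,r)$ captures the conjunction \emph{in every model} rather than only in the intended one: an adversarial model may inflate $M$ beyond $U_1 \cup U_2$ and thereby shrink $\forall(M,r)$, so the gadget must be arranged (by further constraints controlling the marker-edges, and by a careful bookkeeping of which direction of the equivalence each direction of the reduction actually requires) so that soundness survives.

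Finally I would assemble the reduction: apply the translation and the conjunction gadget to each of the polynomially many normalized axioms, add an existential assertion $\exists(A_{C_0},A_{C_0})$ forcing a witness for $C_0$, and verify the two directions of correctness. For completeness of the reduction, any model of $\mathcal{T}$ in which $C_0$ is nonempty expands to a $\cR^\dagger$-model by interpreting each marker atom as exactly the intended union and each auxiliary relation as intended; for soundness, any $\cR^\dagger$-model of $\Gamma$ restricts, after reading off the atoms $A_D$, to a model of $\mathcal{T}$ with $A_{C_0}$ nonempty. Since the whole construction is polynomial, \EXPTIME-hardness of $\mathcal{ALC}$-with-TBox satisfiability transfers to satisfiability, and hence to validity, in $\cR^\dagger$.
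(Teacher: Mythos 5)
Your overall architecture matches the paper's: reduce from an \EXPTIME-complete modal/description logic (the paper uses $K$ with a universal modality, which is interchangeable with $\mathcal{ALC}$ over a general TBox), name subformulas by fresh unary atoms, observe that literal inclusions, existential restrictions and value restrictions translate directly into $\cR^\dagger$ (your $\forall(A,\forall(\bar{B},\bar{r}))$ is exactly the paper's translation), and isolate the conjunctive axioms $\forall x(p(x)\wedge q(x)\rightarrow o(x))$ as the sole obstruction. But the crux --- the gadget eliminating those axioms --- is precisely where your proposal has a genuine gap, and it is not the routine bookkeeping you suggest. Your encoding $\forall(\forall(M,r),A)$ enforces $B_1\sqcap B_2\sqsubseteq A$ only if every element of $B_1^\fA\cap B_2^\fA$ lies in $\forall(M,r)^\fA$, and since $\forall(\cdot,r)$ is antitone in its first argument, that requires $M^\fA\subseteq U_1^\fA\cup U_2^\fA$. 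Nothing in your constraints forces this: an adversarial model can place one extra point in $M$ with no incoming $r$-edges, making $\forall(M,r)^\fA=\emptyset$ and the encoded axiom vacuous, so your set of $\cR^\dagger$-formulas can be satisfiable while the TBox together with $C_0$ is not. Worse, the missing constraint $M\subseteq U_1\cup U_2$ is itself of the form $\forall x(M(x)\wedge\neg U_1(x)\rightarrow U_2(x))$, i.e.~an axiom of exactly the kind the gadget was supposed to encode, so the repair is circular within your scheme.

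The paper's gadget avoids expressing intersections as terms altogether, via a contrapositive witness trick: to force $p\cap q\subseteq o$, it introduces a fresh binary atom $r_\theta$ and a fresh unary atom $o^*$ and asserts (i) $\forall(\bar{o},\exists(o^*,r_\theta))$ (every non-$o$ element has an $r_\theta$-successor), (ii) $\forall(p,\forall(\bar{p},\bar{r}_\theta))$ (no $p$-element has a non-$p$ successor), and (iii) $\forall(q,\forall(p,\bar{r}_\theta))$ (no $q$-element has a $p$-successor). An element of $p\cap q$ can then have no $r_\theta$-successor at all, hence must lie in $o$; conversely, any model of the axiom expands to satisfy (i)--(iii) by interpreting $o^*$ as the whole domain and $r_\theta$ as the identity on $\bar{o}$. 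This one-way semantic entailment, together with conservativity over the fresh symbols, is the key idea your proposal is missing; without it (or something equivalent), the reduction does not go through.
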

\begin{proof}
The logic $K^U$ is the basic modal logic $K$ together with an
additional  modality $U$ (for ``universal''), 
whose semantics are given by the
standard relational (Kripke) semantics, plus
\begin{center}
$\models_w U\phi$ if and only if $\models_{w'} \phi$ for all worlds $w'$.
\end{center}
The satisfiability problem for $K^U$ is \EXPTIME-hard. (The proof is an
easy adaptation of the corresponding result for propositional dynamic
logic; see, e.g.~Harel {\em et al.}~\cite{logic:hkt00}:~216~ff.)  It
suffices, therefore, to reduce this problem to satisfiability in
$\cR^\dagger$. Let $\phi$ be a formula of $K^U$.

We first transform $\phi$ into an equisatisfiable set of formulas
$T_\phi \cup S_\phi$ of first-order logic; then we translate the
formulas of $T_\phi \cup S_\phi$ into an equisatisfiable set of
$\cR^\dagger$-formulas.  To simplify the notation, we shall take unary
atoms (in $\cR^\dagger$) to be unary predicates (in first-order
logic); similarly, we take binary atoms to do double duty as binary
predicates.  Let $r$ and $e$ be binary atoms.  For any $K^U$-formula
$\psi$, let $p_\psi$ be a unary atom, and define the set of
first-order formulas $T_\psi$ inductively as follows:
\begin{eqnarray}
\nonumber
T_p & = & \emptyset \text{ (where $p$ is a proposition letter)}\\
\nonumber
T_{\psi \wedge \pi} & = &
\begin{split}
T_\psi \cup T_\pi \cup \{&
  \forall x(p_\psi(x) \wedge p_\pi(x) \rightarrow p_{\psi \wedge \pi}(x)),\\
& \forall x( p_{\psi \wedge \pi}(x) \rightarrow p_\psi(x)),
 \forall x( p_{\psi \wedge \pi}(x) \rightarrow p_\pi(x)) \}
\end{split}
\\
\nonumber
T_{\neg \psi} & = & T_\psi \cup \{
    \forall x( p_{\neg \psi}(x) \rightarrow \neg p_\psi(x)),
    \forall x( \neg p_{\neg \psi}(x) \rightarrow p_\psi(x)) \}\\
\nonumber
T_{\Box \psi} & = & 
\begin{split} T_\psi \cup \{&
    \forall x( p_{\Box \psi}(x) \rightarrow \forall y (\neg p_\psi(y)
    \rightarrow \neg r(x,y))), \\
  &  \forall x( \neg p_{\Box \psi}(x) \rightarrow \exists y (\neg p_\psi(y)
    \wedge r(x,y))) \} 
\end{split} \\
\nonumber
T_{U \psi} & = & 
\begin{split}
T_\psi \cup \{&
    \forall x( p_{U \psi}(x) \rightarrow \forall y (\neg p_\psi(y)
    \rightarrow \neg e(x,y))),\\
  &  \forall x( \neg p_{U \psi}(x) \rightarrow \exists y (\neg p_\psi(y)
    \wedge e(x,y))) \}.
\end{split}
\end{eqnarray}
Now let $S_\phi$ be the collection of five first-order formulas
\begin{equation*}
\exists x(p_{\phi}(x) \wedge p_{\phi}(x)), \hspace{1cm}
  \forall x (\pm p_\phi(x) \rightarrow \forall y (\pm p_\phi(y) \rightarrow
  e(x,y))) .
\end{equation*}
(Although the first formula looks like it has a redundant conjunct, we
state it in this way only to make our work below a little easier.)  We
claim that the modal formula $\phi$ is satisfiable if and only if the
set of first-order formulas $T_\phi \cup S_\phi$ is satisfiable.  For
let $\fM$ be any (Kripke) model of $\phi$ over a frame $(W,R)$. Define
the first-order structure $\fA$ with domain $W$, by setting $r^\fA =
R$, $e^\fA = A^2$, and $p_\psi^\fA = \{ w \mid \fM \models_w \psi\}$,
for any subformula $\psi$ of $\phi$.  It is then easy to check that
$\fA \models T_\phi \cup S_\phi$. Conversely, suppose $\fA \models
T_\phi \cup S_\phi$. We build a Kripke structure $\fM$ over the frame
$(A, r^\fA)$ by setting, for any proposition letter $o$ mentioned in
$\phi$, $\fM \models_a o$ if and only if $a \in
p_o^\fA$. A straightforward structural induction
establishes that for any subformula $\psi$ of $\phi$,  
$\fM \models_a \psi$ if and only if $a \in p_\psi^\fA$. The
formula $\exists x (p_\phi(x) \wedge p_\phi(x)) \in S_\phi$ then
ensures that $\phi$ is satisfied in $\fM$.

Now, all of the formulas in $T_\phi \cup S_\phi$ are of one of the
forms
\begin{align}
   &  \forall x (\pm p(x) \rightarrow \pm q(x)) & & 
      \forall x (\pm p(x) \rightarrow \forall y(\pm q(y) \rightarrow \pm r(x,y)))
\label{eq:form1} \\
   &  \exists x (p(x) \wedge p(x)) & &
      \forall x (\pm p(x) \rightarrow \exists y(\pm q(y) \wedge r(x,y)))
\label{eq:form2} \\
   &  \forall x (p(x) \wedge q(x) \rightarrow o(x)).
\label{eq:form3}
\end{align}
Notice that formulas of the forms~\eqref{eq:form1}
and~\eqref{eq:form2} translate (in the obvious sense) directly into
the fragment $\cR^\dagger$; those of form~\eqref{eq:form3}, by
contrast, do not. The next step is to eliminate formulas of this last
type.

Let $o^*$ be a new unary relation symbol.  For $\theta \in T_\phi \cup S_\phi$
of the form~\eqref{eq:form3}, let $r_\theta$ be a new binary
atom, and define $R_\theta$ to be the set of formulas
\begin{align}
\forall x (\neg o(x) \rightarrow \exists z (o^*(z) \wedge
r_\theta(x,z)))
\label{eq:sim1}\\
\forall x (p(x) \rightarrow \forall z (\neg p(z) \rightarrow \neg r_\theta(x,z)))
\label{eq:sim2}\\
\forall x (q(x) \rightarrow \forall z (p(z) \rightarrow \neg r_\theta(x,z))),
\label{eq:sim3}
\end{align}
which are all of the forms in~\eqref{eq:form1} or~\eqref{eq:form2}. It
is easy to check that $R_\theta \models \theta$. For suppose (for
contradiction) that $\fA \models R_\theta$ and $a$ satisfies $p$ and
$q$ but not $o$ in $\fA$. 
By~\eqref{eq:sim1}, there exists $b$ such
that $\fA \models r_\theta[a,b]$. If $\fA \not \models p[b]$, then~\eqref{eq:sim2}
is false in $\fA$; on the other hand, if $\fA \models p[b]$,
then~\eqref{eq:sim3} is false in $\fA$.  Thus, $R_\theta \models
\theta$ as claimed. Conversely, if $\fA \models \theta$, expand $\fA$
to a structure $\fA'$ by interpreting $o^*$ and $r_\theta$ as follows:
\begin{eqnarray*}
(o^*)^\fA & = & A \\
r_\theta^\fA & = & \{ \langle a,a \rangle \mid \fA \not \models o[a] \}.
\end{eqnarray*}
We check that $\fA' \models R_\theta$. Formula~\eqref{eq:sim1} is
true, because $\fA' \not \models o[a]$ implies $\fA'\models
r_\theta[a,a]$. Formula~\eqref{eq:sim2} is true, because $\fA'\models
r_\theta[a,b]$ implies $a = b$.  To see that Formula~\eqref{eq:sim3}
is true, suppose $\fA'\models q[a]$ and $\fA'\models p[b]$.  If $a =
b$, then $\fA \models o[a]$ (since $\fA' \models \theta$); that is,
either $a \neq b$ or $\fA \models o[a]$. By construction, then, $\fA'
\not \models r_\theta[a,b]$.

Now let $T^*_\phi$ be the result of replacing all formulas $\theta$ in
$T_\phi$ of form~\eqref{eq:form3} with the corresponding trio
$R_\theta$. (The binary atoms $r_\theta$ for the various $\theta$ are
assumed to be distinct; however, the same unary atom $o^*$ can be used
for all $\theta$.)  By the previous paragraph, $T^*_\phi \cup S_\phi$
is satisfiable if and only if $T_\phi \cup S_\phi$ is satisfiable, and
hence if and only if $\phi$ is satisfiable.  But $T^*_\phi \cup
S_\phi$ is a set of formulas of the forms~\eqref{eq:form1}
and~\eqref{eq:form2}, and can evidently be translated into a set of
$\cR^\dagger$-formulas satisfied in exactly the same
structures. Moreover, this set can be computed in time bounded by a
polynomial function of $\lVert \phi \rVert$. This completes the
reduction.
\end{proof}
We note the following fact. (We omit a detailed proof, since
subsequent developments do not hinge on this result.)
\begin{lemma}
The problem of determining the validity of a sequent in 
$\cR^{*\dagger}$ is in \EXPTIME.
\label{lma:R*+complexity}
\end{lemma}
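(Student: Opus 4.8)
The plan is to prove the matching upper bound by a type-elimination (Hintikka-set) argument that exploits the fact that every $\cR^{*\dagger}$-formula has ``modal depth'' at most one. Since $\Theta \models \theta$ is equivalent to the unsatisfiability of $\Theta \cup \{\bar\theta\}$, and since \EXPTIME{} is closed under complementation, it suffices to decide, in time exponential in $\|\Gamma\|$, whether a given finite set $\Gamma$ of $\cR^{*\dagger}$-formulas is satisfiable. First I would normalize, using the identifications $\exists(e,f)=\exists(f,e)$ and $\forall(e,f)=\forall(\bar f,\bar e)$, so that each formula of $\Gamma$ is a \emph{global constraint} $\forall(e,f)$ or $\exists(e,f)$ whose constituents $e,f$ are e-terms, i.e.\ expressions of the forms $l$, $\exists(l,t)$ or $\forall(l,t)$. (Optionally one may first flatten, replacing each complex e-term $e$ by a fresh unary atom $p_e$ together with definitional constraints $\forall(p_e,e)$ and $\forall(e,p_e)$, which already lie in $\cR^\dagger$; this reduces $\cR^{*\dagger}$-satisfiability in polynomial time to the satisfiability of a set of global constraints over simple e-terms, but is not essential.)

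Let $\mathrm{cl}(\Gamma)$ be the set of all unary literals and all e-terms occurring in $\Gamma$, closed under the literal-complement $\bar{\cdot}$. A \emph{type} is a subset $\tau\subseteq\mathrm{cl}(\Gamma)$ that is locally consistent (containing exactly one of $l,\bar l$ for each relevant atom) and that respects each universal constraint $\forall(e,f)\in\Gamma$ pointwise: if $e\in\tau$ then $f\in\tau$. There are at most exponentially many types. The e-terms are read as the obligations imposed on a point's outgoing edges: $\exists(l,r)$ demands an $r$-successor satisfying $l$; the window term $\exists(l,\bar r)$ demands a \emph{non}-$r$-successor satisfying $l$; $\forall(l,\bar r)$ forbids any $r$-edge to an $l$-point; and $\forall(l,r)$ forces an $r$-edge to \emph{every} $l$-point. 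The decisive structural observation is that all of these obligations constrain only the edges \emph{leaving} a point: the fragment has no inverse-role and no counting constructs, so the analysis is purely forward-looking, and it is exactly this that keeps the problem inside \EXPTIME{} rather than \NEXPTIME.

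The decision procedure then repeatedly deletes any type that cannot be realized over the types that currently survive. A surviving type $\tau$ is deleted if: (i)~some demand $\exists(l,r)\in\tau$ has no surviving target $\sigma$ with $l\in\sigma$ to which an $r$-edge is permitted by $\tau$'s $\forall(\cdot,\bar r)$-constraints; (ii)~the symmetric condition holds for some demand $\exists(l,\bar r)\in\tau$ (no surviving $l$-target to which a \emph{non}-edge is permitted, taking into account the edges forced by $\tau$'s $\forall(\cdot,r)$-constraints); or (iii)~an edge that $\tau$ is forced to create by some $\forall(l,r)\in\tau$ would run to a surviving type that $\tau$ also forbids (a forced/forbidden clash). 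Each round inspects all pairs of types in time polynomial in their (exponential) number, so the process stabilizes after exponentially many rounds. Finally, $\Gamma$ is satisfiable iff, after stabilization, the surviving types contain, for each existential constraint $\exists(e,f)\in\Gamma$, a witness type holding both $e$ and $f$. The model is built on sufficiently many copies of the surviving types, fixing $r^\fA$ per source type so as to create its forced edges, avoid its forbidden ones, and supply each demanded witness (distinct copies let a single type serve simultaneously as an $r$-witness and a non-$r$-witness when both are demanded). Soundness and completeness of the elimination follow by the standard fixpoint argument, and correctness of the construction uses the forward-looking property to choose each source type's out-edges independently of all others.

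The main obstacle is the bookkeeping for the complement (``window'') operators $\exists(l,\bar r)$ and $\forall(l,r)$, which have no analogue in $\cR^*$: one must check that the per-source edge choices can always simultaneously meet the positive witness demands, honour the forced edges, and respect the forbidden edges, and that no hidden coupling between source and target silently re-introduces a genuinely pairwise constraint (which would threaten the \EXPTIME{} bound). Once this is verified, the exponential-size model property together with the polynomial-per-round elimination yields the claimed membership in \EXPTIME; combined with Lemma~\ref{lma:R+hard} and the inclusion $\cR^\dagger\subseteq\cR^{*\dagger}$, this in fact shows that validity in $\cR^{*\dagger}$ is \EXPTIME-complete.
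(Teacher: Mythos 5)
The critical step that fails is your elimination rule (iii), together with the assertion that ``soundness and completeness of the elimination follow by the standard fixpoint argument.'' The standard argument needs every elimination condition to be \emph{monotone} in the set of surviving types, so that the union of realizable type sets is again realizable and the greatest fixpoint captures exactly the realizable ones. Your conditions (i) and (ii) are monotone, but (iii) is anti-monotone: a forced/forbidden clash between $\tau$ and a surviving $\sigma$ is no reason to delete $\tau$, because a model is free simply not to realize $\sigma$. Concretely, let $\Gamma = \{\exists(\forall(p,r),\forall(q,\bar{r}))\}$. This is satisfiable: in a one-element structure with $p^{\fA}=q^{\fA}=\emptyset$, the unique element vacuously $r$s every $p$ and $r$s no $q$. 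But consider the type $\sigma_1$ containing $p$ and $q$ and no $\forall$-terms (e.g.\ $\{p,q,\exists(p,\bar{r}),\exists(q,r)\}$): it imposes no forced or forbidden edges and witnesses its own demands, so it survives every round. Every type $\tau$ containing both $\forall(p,r)$ and $\forall(q,\bar{r})$ is forced to send an $r$-edge to $\sigma_1$ (since $p\in\sigma_1$) and forbidden from doing so (since $q\in\sigma_1$), so rule (iii) deletes all such $\tau$ in the first round; after stabilization no surviving type witnesses the global existential, and your algorithm wrongly answers ``unsatisfiable.'' Nor can you escape by weakening (iii): dropping it, or restricting it to clashes on a single literal, makes the procedure unsound instead, since $\{\exists(\forall(p,r),\forall(q,\bar{r})),\ \exists(p,q)\}$ is unsatisfiable yet would then pass the final witness check.

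This is exactly the ``hidden coupling'' you flag at the end, and it is real, not a bookkeeping matter: realizing a type containing $\forall(l,r)$ and $\forall(l',\bar{r})$ imposes the genuinely pairwise, anti-monotone constraint that no realized point satisfy both $l$ and $l'$ --- in effect an extra global formula $\forall(l,\bar{l}')$ --- and greatest-fixpoint type elimination cannot detect such constraints. Handling them is the actual content of the \EXPTIME{} upper bound: one needs a further idea, for instance enumerating which induced disjointness constraints $\forall(l,\bar{l}')$ the intended model is to satisfy, restricting the type universe accordingly, and only then running a monotone elimination; merely guessing the set of realized types and checking your conditions (i)--(iii) pairwise gives only a \NEXPTIME{} bound. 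Note finally that the paper does not give a self-contained proof of this lemma at all --- it invokes Theorem~3 of Pratt-Hartmann~\cite{logic:ph04} as a ``trivial adaptation'' --- so a correct self-contained type-elimination argument would indeed be a different and more informative route; but as written, yours decides the wrong problem at its crucial step.
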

\begin{proof}
Trivial adaptation of Pratt-Hartmann~\cite{logic:ph04}, Theorem~3, which
considers a fragment obtained by adding relative clauses to the
relational syllogistic.
\end{proof}
\begin{theorem}
The problem of determining the validity of a sequent in either of
the fragments $\cR^\dagger$ and $\cR^{*\dagger}$ is \EXPTIME-complete.
\label{theo:R+complexity}
\end{theorem}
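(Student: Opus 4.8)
The plan is to assemble the theorem from the two lemmas already established, together with the syntactic containment $\cR^\dagger \subseteq \cR^{*\dagger}$ recorded in Figure~\ref{fig:overview}. The key observation throughout is that the truth relation $\models$ is defined uniformly for $\cR^{*\dagger}$ and merely restricts to each sub-fragment; hence for any sequent $\Theta \models \theta$ with $\Theta \cup \set{\theta} \subseteq \cR^\dagger$, its validity status is identical whether it is regarded as a sequent of $\cR^\dagger$ or as one of $\cR^{*\dagger}$. Consequently the identity map serves as a trivial polynomial-time reduction witnessing that the $\cR^\dagger$ validity problem is exactly the restriction of the $\cR^{*\dagger}$ validity problem to $\cR^\dagger$-instances.

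For the upper bound I would invoke Lemma~\ref{lma:R*+complexity}, which places the validity problem for $\cR^{*\dagger}$ in \EXPTIME. Since every $\cR^\dagger$-sequent is in particular an $\cR^{*\dagger}$-sequent with the same validity status, the same \EXPTIME{} decision procedure decides $\cR^\dagger$-validity as well; hence both fragments lie in \EXPTIME. For the lower bound I would invoke Lemma~\ref{lma:R+hard}, which establishes \EXPTIME-hardness of the validity problem for $\cR^\dagger$. Because $\cR^\dagger \subseteq \cR^{*\dagger}$, the identity reduction carries this hardness upward: any \EXPTIME-hardness reduction whose image lands in $\cR^\dagger$ is simultaneously a reduction whose image lands in $\cR^{*\dagger}$, so the $\cR^{*\dagger}$ validity problem is \EXPTIME-hard too. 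Combining the matching upper and lower bounds yields \EXPTIME-completeness for both $\cR^\dagger$ and $\cR^{*\dagger}$.

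There is essentially no obstacle here: the two substantive ingredients---the \EXPTIME-hardness reduction from $K^U$-satisfiability (Lemma~\ref{lma:R+hard}) and the \EXPTIME{} satisfiability algorithm imported from Pratt-Hartmann~\cite{logic:ph04} (Lemma~\ref{lma:R*+complexity})---have already done all the work. The only point requiring brief care is to confirm that the containment $\cR^\dagger \subseteq \cR^{*\dagger}$ propagates hardness and membership in the correct directions, which it does precisely because validity is monotone under enlarging the ambient fragment while holding the semantics fixed: membership flows from the larger fragment $\cR^{*\dagger}$ down to $\cR^\dagger$, and hardness flows from the smaller fragment $\cR^\dagger$ up to $\cR^{*\dagger}$.
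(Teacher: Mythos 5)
Your proposal is correct and follows essentially the same route as the paper, whose proof simply cites Lemma~\ref{lma:R+hard} (hardness for $\cR^\dagger$) and Lemma~\ref{lma:R*+complexity} (membership for $\cR^{*\dagger}$), leaving implicit exactly the containment argument you spell out. Your elaboration of how the inclusion $\cR^\dagger \subseteq \cR^{*\dagger}$ propagates membership downward and hardness upward is the intended (and correct) glue.
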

\begin{proof}
Lemmas~\ref{lma:R+hard} and~\ref{lma:R*+complexity}.
\end{proof}
\begin{corollary}
There exists no finite set $\X$ of syllogistic rules in either $\cR^\dagger$
or $\cR^{*\dagger}$ such that $\vdash_\X$ is both sound and
refutation-complete.
\label{cor:noSyllR+R*+}
\end{corollary}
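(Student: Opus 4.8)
The plan is to run exactly the argument used for Corollary~\ref{cor:R*prooftheory}, but with the \emph{conditional} separation \PTIME$\neq$\NPTIME{} replaced by the \emph{unconditional} separation \PTIME$\neq$\EXPTIME{} supplied by the time hierarchy theorem; this is why the present statement carries no complexity-theoretic hypothesis. Write $\cF$ for either $\cR^\dagger$ or $\cR^{*\dagger}$; since nothing below distinguishes the two, I handle both at once. Suppose, for contradiction, that some finite set $\X$ of syllogistic rules in $\cF$ were both sound and refutation-complete. I would then show that this renders the unsatisfiability (hence validity) problem for $\cF$ solvable in polynomial time, contradicting Theorem~\ref{theo:R+complexity}.

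First I would observe that, for any finite set $\Gamma$ of $\cF$-formulas, soundness and refutation-completeness of $\vdash_\X$ together yield that $\Gamma$ is unsatisfiable if and only if $\Gamma$ is inconsistent with respect to $\vdash_\X$, i.e.~if and only if $\Gamma \vdash_\X \bot$ for \emph{some} absurdity $\bot \in \cF$. Refutation-completeness gives the forward direction; soundness, together with the fact that every absurdity is unsatisfiable, gives the reverse. Thus testing unsatisfiability reduces to searching for a derivable absurdity, and the only remaining issue is to make that search polynomial.

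The second step bounds the search. The absurdities of $\cF$ are precisely the formulas $\exists(e,\bar{e})$, and each e-term $e$ is built from at most one unary literal and one binary literal; hence, over any fixed finite alphabet of atoms, there are only quadratically many absurdities. Moreover, the atom-renaming device already exploited in the proof of Lemma~\ref{lma:PTIME} shows that if $\Gamma$ derives \emph{any} absurdity then it derives one all of whose atoms lie in the set $\Sigma$ comprising the atoms of $\Gamma$ together with one spare binary atom: uniformly substituting atoms of $\Sigma$ for foreign ones sends a derivation of $\exists(e,\bar{e})$ to a derivation of $\exists(g(e),\overline{g(e)})$, which is again an absurdity. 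It therefore suffices to test, for each of the $O(|\Sigma|^2)$ absurdities $\bot$ over $\Sigma$, whether $\Gamma \vdash_\X \bot$; by Lemma~\ref{lma:PTIME} each test runs in polynomial time, so the whole consistency check is polynomial. Consequently unsatisfiability of a finite $\cF$-formula set---and hence, since $\Theta \models \theta$ iff $\Theta \cup \{\bar{\theta}\}$ is unsatisfiable, validity of $\cF$-sequents---would be decidable in \PTIME.

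Finally I would invoke Theorem~\ref{theo:R+complexity}, which asserts that sequent-validity in $\cF$ is \EXPTIME-complete and in particular \EXPTIME-hard. Combined with the preceding paragraph this forces \EXPTIME${}\subseteq{}$\PTIME, contradicting the time hierarchy theorem's unconditional \PTIME$\neq$\EXPTIME. The only step needing genuine care is the second one: one must verify that restricting attention to absurdities over $\Sigma$ loses no generality, and that only polynomially many such absurdities exist, so that checking \emph{consistency}---a search over all candidate absurdities, not a single derivability query---remains polynomial. Everything else is routine bookkeeping inherited from Corollary~\ref{cor:R*prooftheory} and Lemma~\ref{lma:PTIME}.
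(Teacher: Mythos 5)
Your proof is correct and takes essentially the same route as the paper's, whose entire argument is the observation that \PTIME$\neq$\EXPTIME{} is unconditional, combined with Lemma~\ref{lma:PTIME} and the \EXPTIME-hardness of validity (Lemma~\ref{lma:R+hard}). The extra care you take in reducing the consistency check to polynomially many derivability queries over absurdities built from the atoms of $\Gamma$ (via the renaming argument) is precisely the detail the paper leaves implicit, and you handle it correctly.
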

\begin{proof}
It is a standard result that \PTIME$\neq$\EXPTIME.  The result is then
immediate by Lemmas~\ref{lma:PTIME} and~\ref{lma:R+hard}.
\end{proof}

Of course, Corollary~\ref{cor:noSyllR+R*+} leaves open the possibility
that there exist {\em indirect} syllogistic systems that are sound and
complete for $\cR^\dagger$ and $\cR^{*\dagger}$. To show that there do not,
stronger methods are required. That is the task of the remainder of
this section.

\paragraph{Starting on the proof}
Let $n \geq 2$; let $o_1, o_2, o_3, q_1, q_2, p_1, \ldots, p_n$ be
unary atoms, and $r, s$ binary atoms; and let $A^{(n)}$ be the set
$\{a_1, \ldots, a_n, a'_1, \ldots, a'_n, u_0,$ $u_1, u_2, u_3, u_4,
v_1, v_2\}$.  We take the structure $\fA^{(n)}$, with domain
$A^{(n)}$, to be as depicted in Fig.~\ref{fig:rPlusProof}. Here,
membership of an element in $p^{\fA^{(n)}}$ (where $p$ is any unary
atom) is indicated by writing $p$ in a box next to that element. The
extensions of the binary atoms $r$ and $s$ are depicted similarly: we
indicate that a pair of elements is in ${r}^{\fA^{(n)}}$ by writing
$r$ next to an arrow between those elements (and likewise for
$s$). The $r$-labelled arrow from the dotted box to the element $u_2$
is to be interpreted as follows: for any element $a$ inside the dotted
box, $\langle a, u_2 \rangle \in r^{\fA^{(n)}}$.
\begin{figure}
% Exports to combined ps/latex at 55%
\centerline{\input{rPlusProof.pstex_t}}
\caption{The structure $\fA^{(n)}$. Every element inside the dotted
box is related by $r$ to $u_2$.}
\label{fig:rPlusProof}
\end{figure}
Thus, $\fA^{(n)}$ contains two `$r$-chains' of elements satisfying,
successively, $p_1, \ldots, p_n$: the elements of the first chain
additionally satisfy $q_1$ but not $q_2$; those of the second chain
additionally satisfy $q_2$ but not $q_1$. The terminal elements of the
two $r$-chains are related, by $s$, to the elements $v_1$ and $v_2$.
There are also some additional `out-lier' elements, which ensure the
truth of various existential formulas in $\fA$: for example, $u_1$
ensures (together with $u_0$) 
the truth of $\exists(o_1, \exists(p_1, \bar{r}))$; $u_2$
ensures the truth of $\exists(q_1,q_2)$, and so on. In the sequel, we
define structures containing additional elements, which make {\em
  almost} the same $\cR^{*\dagger}$-formulas true as $\fA^{(n)}$. These
latter structures will then be used to show that there can be no sound
and complete indirect syllogistic system for $\cR^*$ or $\cR^{*\dagger}$.

To avoid notational clutter, we drop the superscripts in $\fA^{(n)}$
(and in related constructions) where the value of $n$ does not
matter. By inspection, we have
\begin{align}
& \fA \models \forall(o_1, \exists(q_1,r))
\label{eq:A1} \\
& \fA \models \forall(o_1, \forall(\bar{p}_1,\bar{r}) )
\label{eq:A2} \\
& \fA \models \forall(q_1, \forall(\bar{q}_1,\bar{r}))
\label{eq:A3} \\
& \fA \models \forall(q_2, \forall(\bar{q}_2,\bar{r}))
\label{eq:A4} \\
& \fA \models \forall(p_i, \exists(p_{i+1},r)) \quad (1 \leq i < n)
\label{eq:A5} \\
& \fA \models \forall(q_1, \forall(o_3,\bar{s}))
\label{eq:A6} \\
& \fA \models \forall(q_2, \forall(\bar{o}_3,\bar{s}))
\label{eq:A7} \\
& \fA \models \forall(p_n, \exists(o_2,s)).
\label{eq:A8} 
\end{align}
\noindent

For any structure $\fC$, denote by ${\rm Th}(\fC)$ the set of
$\cR^\dagger$-formulas true in $\fC$, and denote by ${\rm Th}^*(\fC)$ the
set of $\cR^{*\dagger}$-formulas true in $\fC$. Recall that a set of
formulas $\Phi$ is $\cR^\dagger$-complete (or $\cR^{*\dagger}$-complete) if, for
every $\cR^\dagger$-formula (respectively, $\cR^{*\dagger}$-formula) $\phi$,
either $\phi \in \Phi$ or $\bar{\phi} \in \Phi$.  Trivially, for any
$\fC$, ${\rm Th}(\fC)$ is $\cR^\dagger$-complete, and ${\rm Th}^*(\fC)$ is
$\cR^{*\dagger}$-complete.

Let $\gamma$ be the $\cR^\dagger$-formula given by
\begin{equation*}
\gamma =  \forall(o_1, \exists(\bar{q}_2,r)). 
\end{equation*}
Noting that $o_1^\fA = \{u_0\}$, $\langle u_0,a_1 \rangle \in
r^\fA$, and $a_1 \not \in q_2^\fA$, we have $\fA \models \gamma$ (i.e.
$\gamma \in {\rm Th}(\fA)$). Let $\Gamma^{(n)}$ be obtained from ${\rm
Th}(\fA^{(n)})$ by reversing the truth-value of $\gamma$, and similarly for
$\Gamma^{*(n)}$. That is:
\begin{eqnarray*}
\Gamma^{(n)} & = & \left( {\rm Th}\left(\fA^{(n)}\right) \setminus 
                          \{ \gamma \} \right) \cup \{ \bar{\gamma} \}\\
\Gamma^{*(n)} & = & \left( {\rm Th}^*\left(\fA^{(n)}\right) \setminus 
                          \{ \gamma \} \right) \cup \{ \bar{\gamma} \}.
\end{eqnarray*}
Again, we drop the $(n)$-superscript when the value of $n$ does not
matter.  Thus, $\Gamma$ is $\cR^\dagger$-complete and $\Gamma^*$ is
$\cR^{*\dagger}$-complete.
\begin{lemma}
$\Gamma$ \textup{(}and therefore $\Gamma^*$\textup{)} is unsatisfiable.
\label{lma:GammaUnsat}
\end{lemma}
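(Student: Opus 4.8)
The plan is to show that the set ${\rm Th}(\fA^{(n)})\setminus\{\gamma\}$ already entails $\gamma$, so that replacing $\gamma$ by its negation $\bar{\gamma}$ yields an unsatisfiable set. Concretely, I would fix an arbitrary structure $\fB\models\Gamma$ and derive a contradiction, using only the formulas \eqref{eq:A1}--\eqref{eq:A8} together with $\bar{\gamma}$. Each of \eqref{eq:A1}--\eqref{eq:A8} belongs to ${\rm Th}(\fA^{(n)})$ and is distinct from $\gamma=\forall(o_1,\exists(\bar{q}_2,r))$, hence lies in $\Gamma$ and holds in $\fB$; and since $\bar{\gamma}=\exists(o_1,\forall(\bar{q}_2,\bar{r}))$, there is an element $x_0\in o_1^\fB$ with $x_0\in\forall(\bar{q}_2,\bar{r})^\fB$, i.e.\ \emph{every} $r$-successor of $x_0$ lies in $q_2^\fB$. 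This last fact is the `defect' introduced by negating $\gamma$, and the whole argument consists in tracing its consequences.

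First I would produce the head of a chain. By \eqref{eq:A1}, $x_0$ has an $r$-successor $c_1\in q_1^\fB$; by the choice of $x_0$ we also have $c_1\in q_2^\fB$, and by \eqref{eq:A2} every $r$-successor of $x_0$ lies in $p_1^\fB$, so $c_1\in p_1^\fB\cap q_1^\fB\cap q_2^\fB$. The engine of the argument is that membership in $q_1^\fB\cap q_2^\fB$ is propagated along $r$: by \eqref{eq:A3} every $r$-successor of a $q_1^\fB$-element stays in $q_1^\fB$, and by \eqref{eq:A4} every $r$-successor of a $q_2^\fB$-element stays in $q_2^\fB$. I would then run an induction on $i$ using \eqref{eq:A5}: given $c_i\in p_i^\fB\cap q_1^\fB\cap q_2^\fB$ with $i<n$, formula \eqref{eq:A5} supplies an $r$-successor $c_{i+1}\in p_{i+1}^\fB$, which by the previous remark also lies in $q_1^\fB\cap q_2^\fB$. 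This yields an element $c_n\in p_n^\fB\cap q_1^\fB\cap q_2^\fB$.

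The contradiction is then read off at the far end of the chain via the binary atom $s$. Since $c_n\in p_n^\fB$, formula \eqref{eq:A8} gives an $s$-successor $d\in o_2^\fB$ of $c_n$. Now $c_n\in q_1^\fB$ forces $d\notin o_3^\fB$ by \eqref{eq:A6}, while $c_n\in q_2^\fB$ forces $d\in o_3^\fB$ by \eqref{eq:A7}---a contradiction. Hence no model $\fB$ of $\Gamma$ exists, so $\Gamma$ is unsatisfiable. Finally, since every $\cR^\dagger$-formula is an $\cR^{*\dagger}$-formula we have ${\rm Th}(\fA^{(n)})\subseteq{\rm Th}^*(\fA^{(n)})$, and therefore $\Gamma\subseteq\Gamma^*$; any model of $\Gamma^*$ would be a model of $\Gamma$, so $\Gamma^*$ is unsatisfiable as well.

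The point requiring care---and the reason this is a genuinely relational argument rather than a one-line check---is that the inconsistency is \emph{non-local}: neither $\bar{\gamma}$ nor any bounded subset of $\Gamma$ is inconsistent by itself. One must follow the $q_1\cap q_2$ defect all the way along the length-$n$ $r$-chain built from \eqref{eq:A5} before it collides, at the terminal $s$-edges, with the incompatible constraints \eqref{eq:A6} and \eqref{eq:A7}. The main bookkeeping obstacle is reading the nested formulas $\forall(l,\forall(m,\bar{t}))$ correctly---each asserting that every $t$-successor of an $l$-element is confined to (or excluded from) a prescribed class---and then chaining these `only'-constraints with the existential witnesses supplied by \eqref{eq:A1}, \eqref{eq:A5} and \eqref{eq:A8}. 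It is precisely this tunable length $n$ that is poised to defeat any fixed (even indirect) proof system in the results that follow.
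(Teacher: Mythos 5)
Your proof is correct and follows essentially the same route as the paper's: extract the witness of $\bar{\gamma}$, use \eqref{eq:A1}--\eqref{eq:A2} to get an element of $p_1^\fB\cap q_1^\fB\cap q_2^\fB$, propagate it along the $r$-chain via \eqref{eq:A3}--\eqref{eq:A5} to an element of $p_n^\fB\cap q_1^\fB\cap q_2^\fB$, and derive the contradiction from \eqref{eq:A6}--\eqref{eq:A8}; the inclusion $\Gamma\subseteq\Gamma^*$ handles $\Gamma^*$. You have merely spelled out the inductive bookkeeping that the paper leaves implicit.
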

\begin{proof}
Noting that $\bar{\gamma}$ is the formula $\exists(o_1, \forall(\bar{q}_2,
\bar{r}))$, from~\eqref{eq:A1} and~\eqref{eq:A2}, we see that, in any
model $\fB \models \Gamma$, there exists $b \in p_1^\fB \cap q_1^\fB
\cap q_2^\fB$. From~\eqref{eq:A3}--\eqref{eq:A5}, it then follows
that, in any model $\fB \models \Gamma$, there exists $b \in p_n^\fB
\cap q_1^\fB \cap q_2^\fB$. But, from~\eqref{eq:A6}--\eqref{eq:A8}, no
model $\fB \models \Gamma$ can have any element $b \in p_n^\fB \cap
q_1^\fB \cap q_2^\fB$. This proves the lemma.
\end{proof}

Fixing $n \geq 2$, for any $i$ ($1 \leq i < n$), let $B^{(n)}_i =
A^{(n)} \cup \{b_1, \ldots, b_i, u_5\}$, and consider the structure
$\fB^{(n)}_i$, with domain $B^{(n)}_i$, depicted in
Fig.~\ref{fig:rPlusProofBi}. We employ the same notational conventions
as in Fig.~\ref{fig:rPlusProof}.  In particular, the $r$-labelled
arrows from the dotted boxes are to be interpreted as follows: for any
element $a$ inside either of the dotted boxes, $\langle a, u_2 \rangle
\in r^{\fB^{(n)}_i}$. Again, we drop superscripts where the value of
$n$ does not matter. The structure $\fB_i$ contains a copy of $\fA$,
but has an additional $r$-chain whose elements satisfy both $q_1$ and
$q_2$; notice, however, that this additional $r$-chain stops at the
$i$th element.
\begin{figure}
% Exports to combined ps/latex at 55%

\centerline{\input{rPlusProofBi.pstex_t}}
\caption{The structure $\fB_i$. Every element inside either of the dotted
boxes is related by $r$ to $u_2$.}
\label{fig:rPlusProofBi}
\end{figure}

We employ the following terminology. An {\em existential} formula is
one of the form $\exists(e,f)$.  If $\phi = \exists(e,f)$ and $\fC$ 
a structure, then a {\em witness for} $\fC \models \phi$ is any $a \in C$
such that $a \in e^\fC$ and $a \in f^\fC$. Now, we saw above that $\fA
\models \gamma$; by contrast $\fB_i \models \bar{\gamma}$
($\bar{\gamma}$ is an existential formula, and $u_5$ is a
witness). Let $\delta_i$ be the $\cR^\dagger$-formula given by
\begin{equation*}
\delta_i =  \forall (p_i,\exists(p_{i+1},r)). 
\end{equation*}
We observed in~\eqref{eq:A5} that $\fA \models \delta_i$. By contrast,
$\fB_i \models \bar{\delta_i}$ ($\bar{\delta_i}$ is an existential
formula, and $b_i$ is a witness). However, it
turns out that $\gamma$ and $\delta_i$ 
are the {\em only} differences between $\fA$ and
$\fB_i$ as far as $\cR^{*\dagger}$ is concerned:
\begin{lemma}
For all $i$ \textup{(}$1 \leq i < n$\textup{)}:
\begin{equation*}
{\rm Th}\left( \fB^{(n)}_i \right) =
\left( {\rm Th}\left( \fA^{(n)} \right) \setminus \{\gamma, \delta_i\}
\right) \cup \{\bar{\gamma}, \bar{\delta_i}\};
\end{equation*}
similarly,
\begin{equation*}
{\rm Th}^*\left( \fB^{(n)}_i \right) =
\left( {\rm Th}^*\left( \fA^{(n)} \right) \setminus \{\gamma, \delta_i\}
\right) \cup \{\bar{\gamma}, \bar{\delta_i}\}.
\end{equation*}
\label{lma:Bi}
\end{lemma}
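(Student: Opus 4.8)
The plan is to prove the second (stronger) equality, the one for ${\rm Th}^*(\fB_i)$, and to read off the first one for free. Since ${\rm Th}(\fC) = {\rm Th}^*(\fC) \cap \cR^\dagger$ for every $\fC$, and since $\gamma$, $\delta_i$, $\bar\gamma$, $\bar\delta_i$ all lie in $\cR^\dagger$ (each has a literal as its subject), intersecting the $\cR^{*\dagger}$-equality with $\cR^\dagger$ yields the $\cR^\dagger$-equality at once. Fix $n$ and $i$ and write $N = \{b_1, \ldots, b_i, u_5\}$ for the points of $B_i$ not in $A$. Because ${\rm Th}^*(\fA)$ and ${\rm Th}^*(\fB_i)$ are both $\cR^{*\dagger}$-complete, the asserted equality is equivalent to: (i) $\fA \models \gamma, \delta_i$ while $\fB_i \models \bar\gamma, \bar\delta_i$; and (ii) for every $\cR^{*\dagger}$-formula $\phi \notin \{\gamma, \delta_i, \bar\gamma, \bar\delta_i\}$ we have $\fA \models \phi$ iff $\fB_i \models \phi$. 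Part (i) was verified in the text just before the lemma: $\fA \models \gamma$ since $o_1^\fA = \{u_0\}$ whose $r$-successor $a_1$ avoids $q_2$, and $\fA \models \delta_i$ by \eqref{eq:A5}; while $u_5$ witnesses $\fB_i \models \bar\gamma$ and $b_i$ witnesses $\fB_i \models \bar\delta_i$.

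All the work is in part (ii). The first thing I would extract from Fig.~\ref{fig:rPlusProofBi} is two structural facts. First, $\fA$ is exactly the substructure of $\fB_i$ induced on $A$, so $p^{\fB_i} \cap A = p^\fA$ and $t^{\fB_i} \cap A^2 = t^\fA$ for every literal. Second, and crucially, \emph{no element of $A$ is joined by $r$ or $s$ to any element of $N$}: the only binary edges incident to $N$ run inside $N$ (namely $u_5 \to b_1 \to \cdots \to b_i$) or from $N$ into the single old point $u_2$. The unary types are equally explicit: each $b_j$ satisfies precisely $p_j, q_1, q_2$, and $u_5$ satisfies precisely $o_1$. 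Using the identifications, every $\cR^{*\dagger}$-formula has the form $\forall(e,f)$ or $\exists(e,f)$, whose truth in $\fC$ is governed solely by the extensions $e^\fC$; so part (ii) reduces to comparing e-term extensions in $\fA$ and $\fB_i$ and then comparing the resulting inclusions and nonempty intersections.

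The comparison splits according to the shape of $e$. Because there are no old-to-new edges, for $e$ a literal, for $e = \exists(l,t)$ with $t$ positive, and for $e = \forall(l,\bar t)$ with $t$ positive, one checks directly that $e^{\fB_i} \cap A = e^\fA$: the $t$-successors of an old point are unchanged, a new point cannot witness a positive-$t$ existential for an old point, and a new point cannot destroy a negative-$t$ universal. The delicate e-terms are $\exists(l,\bar t)$ and $\forall(l,t)$ with $t$ positive, where enlarging $l^{\fB_i}$ by a point of $N$ interacts with the absence of old-to-new edges: if $l^{\fB_i}$ meets $N$, then every old point falls into $\exists(l,\bar t)^{\fB_i}$ and no old point survives in $\forall(l,t)^{\fB_i}$. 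These affected terms arise only for the finitely many literals realized in $N$ (namely $p_j$ for $j \le i$, $q_1$, $q_2$, $o_1$, and the complementary literals). For each I would check, using the sparseness of $r$ and $s$ and the outlier points of $\fA$, that the corresponding extension is already saturated in the relevant direction in $\fA$ (e.g.\ $\exists(l,\bar t)^\fA = A$, or $\forall(l,t)^\fA$ already omits every point $N$ would remove), so that the inclusions and intersections needed to evaluate any $\forall(e,f)$ or $\exists(e,f)$ are unchanged. One then records, for each e-term, which of $b_1,\ldots,b_i,u_5$ it contains, and checks that no universal formula gains a counterexample in $N$ and no existential formula gains a witness in $N$, the only two exceptions being $u_5 \in o_1^{\fB_i} \setminus \exists(\bar q_2, r)^{\fB_i}$ (giving $\bar\gamma$) and $b_i \in p_i^{\fB_i} \setminus \exists(p_{i+1}, r)^{\fB_i}$ (giving $\bar\delta_i$).

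The main obstacle is exactly this last bookkeeping: among the many e-terms whose extensions \emph{do} change when $N$ is adjoined, one must show that every induced change cancels at the level of formula truth except for the two designed discrepancies. The argument turns entirely on the deliberate features of $\fA$: that $r$ and $s$ are sparse, that the two original $r$-chains carry the \emph{disjoint} types $p_j q_1$ and $p_j q_2$ (so the new ``combined'' type $p_j q_1 q_2$ of $b_j$ is the sole source of new behaviour), and that the outliers $u_0,\ldots,u_4,v_1,v_2$ already saturate the existential formulas that points of $N$ might otherwise have been needed to witness. Organizing the case analysis by the sign of $t$ and by whether the subject literal is realized in $N$ keeps the number of genuinely distinct cases manageable; the rest is routine verification against Figs.~\ref{fig:rPlusProof} and~\ref{fig:rPlusProofBi}.
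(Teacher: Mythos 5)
Your framing — reducing the $\cR^\dagger$-equality to the $\cR^{*\dagger}$-equality, and recasting the latter via completeness of the two theories as ``the two designed discrepancies and nothing else'' — matches the paper, and your intended route (a direct comparison of e-term extensions, uniform in $n$ and $i$) is genuinely different from the paper's. But the proposal has a real gap, and it sits exactly where you yourself place ``the main obstacle.'' The step you defer as ``routine verification against Figs.~\ref{fig:rPlusProof} and~\ref{fig:rPlusProofBi}'' is the entire content of the lemma: one must show that every existential $\cR^{*\dagger}$-formula witnessed by a point of $N=\{b_1,\ldots,b_i,u_5\}$, other than $\bar{\gamma}$ and $\bar{\delta}_i$, already has a witness in $A$, and that no universal formula other than $\gamma,\delta_i$ acquires a counterexample in $N$. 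Note that your phrase ``no existential formula gains a witness in $N$'' is false taken literally — many do (e.g.\ $\exists(p_1,q_1)$ is witnessed by $b_1$); what is needed is a witness-\emph{relocation} argument. That is precisely what the paper's Claims~\ref{claim:theories2} and~\ref{claim:theories3} provide: witnesses at $b_1$ and $u_5$ are relocated to the designated old sets $\{a_1,a'_1,u_2\}$ and $\{u_0,u_2,u_3\}$, the exclusions $\phi\neq\bar{\delta}_1$ and $\phi\neq\bar{\gamma}$ are essential for those claims even to be true, and the authors could not establish them by inspection: they prove them by induction on $n$ (via the renaming $p_{m+1}\mapsto p_m$), with the base cases $n=2,3$ checked \emph{by computer}. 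That machine checking was needed for the smallest instances is strong evidence that the bookkeeping you postpone is not a wave-away.

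There is a second problem: the structural premises on which your case split rests are unverified, and at least one looks wrong. You assert that no element of $A$ is $r$- or $s$-related to any element of $N$, so that the only ``delicate'' e-terms are $\exists(l,\bar{t})$ and $\forall(l,t)$ with $l$ realized in $N$. But Fig.~\ref{fig:rPlusProofBi} contains an additional $r$-edge on its far left, beyond the chain $u_5\to b_1\to\cdots\to b_i$ and the dotted-box edges into $u_2$, apparently joining the old outlier $u_3$ to $u_5$; the outlier structure of $\fA$ itself (e.g.\ edges among $u_0,\ldots,u_4$) is similarly intricate. If any old-to-new edge exists, your classification collapses: an old point can then enter, say, $\exists(o_1,r)^{\fB_i}$ or $\exists(\bar{q}_2,r)^{\fB_i}$ without being in the corresponding $\fA$-extension, and the tidy split by the sign of $t$ no longer covers the cases. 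The paper's proof never needs a global census of the edges of $\fB_i$: it anchors everything in the machine-checked small structures, lifts them by induction on $n$, and handles general $i$ not all at once but by iterating from $i=1$, where extending the third chain by one element visibly trades $\{\bar{\delta}_i,\delta_{i+1}\}$ for $\{\delta_i,\bar{\delta}_{i+1}\}$. In short, your outline identifies the correct reduction and the correct danger points, but what remains unproved is precisely the lemma.
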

\begin{proof}
Since $\gamma, \delta_i, \bar{\gamma}, \bar{\delta_i}$ are all in 
$\cR^\dagger$, the first of these statements follows instantly from
the second. We proceed, therefore, to establish the second statement.

We first prove the statement for the case $i = 1$. Notice that here,
the third $r$-`chain' in $\fB^{(n)}_1$ contains just one element,
$b_1$.  Our initial goal, then, is to show that ${\rm Th}^*(
\fB^{(n)}_1) = ({\rm Th}^*( \fA^{(n)}) \setminus \{\gamma, \delta_1\})
\cup \{\bar{\gamma}, \bar{\delta_1}\}$.  The basic intuition is
simple: on the one hand, we may check the cases $n = 2$ and $n= 3$ by
brute force; on the other, we see from Fig.~\ref{fig:rPlusProofBi}
that, once the ends of the first two $r$-chains are sufficiently
distant from the elements $u_5$ and $b_1$,
extending those $r$-chains further will have no effect
on the differences between the formulas made true in $\fA^{(n)}$ and
$\fB^{(n)}_1$.

To make this idea rigorous, we first establish three simple claims.
We employ the following terminology. An {\em existential} e-term is
one of the form $\exists(l,t)$. (As usual, $l$ ranges over unary
literals, and $t$ over binary literals.)  If $e = \exists(l,t)$, $\fC$
is a structure, and $a \in C$, then a {\em witness for} $a \in e^\fC$
is any $b \in C$ such that $b \in l^\fC$ and $\langle a,b \rangle \in
t^\fC$.
\begin{claim}
Let $a$ be any element of $A^{(n)}$
 \textup{(}$n \geq 2$\textup{)}, 
and $e$ any e-term. Then
$a \in e^{\fB^{(n)}_1}$ if and only if $a \in e^{\fA^{(n)}}$.
\label{claim:theories1}
\end{claim}
\begin{proof}
We prove the only-if direction by induction on $n$.  The if-direction
then follows by considering the e-term $\bar{e}$. The case $n=2$
is checked by brute force. (We used a computer.) Assume the
claim is true for $n = m \geq 2$; we show that it is true for $n =
m+1$. Let $a \in A^{(m+1)}$. For contradiction, suppose $a \in
e^{\fB^{(m+1)}_1}$, but $a \not \in e^{\fA^{(m+1)}}$. Then $e$ is
existential, and $a \in e^{\fB^{(m+1)}_1}$ has a witness in
$B^{(m+1)}_1 \setminus A^{(m+1)}$.  Writing $e = \exists(l,t)$, there
exists $b \in \{u_5, b_1\}$ such that $b \in l^{\fB^{(m+1)}_1}$ and
$\langle a, b \rangle \in t^{\fB^{(m+1)}_1}$. In that case, if $a \in
A^{(m)}$, we obviously have $a \in e^{\fB^{(m)}_1}$, whence $a \in
e^{\fA^{(m)}}$ by inductive hypothesis, whence $a \in
e^{\fA^{(m+1)}}$, since $\fA^{(m)}$ is a sub-model of
$\fA^{(m+1)}$---a contradiction.  Thus, $a$ is either $a_{m+1}$ or
$a'_{m+1}$. But, by inspection of Fig.~\ref{fig:rPlusProofBi} (bearing
in mind $i = 1$), it is easy to see that, for $a \in \{a_{m+1},
a'_{m+1}\}$, if $a \in e^{\fB^{(m+1)}_1}$ with a witness in $\{u_5,
b_1\}$, then $a \in e^{\fA^{(m+1)}}$ with a witness in $\{u_0, u_2,
a_1, a'_1 \}$--- again, a contradiction.
\end{proof}

\begin{claim}
Let $\phi$ be any existential formula other than $\bar{\delta}_1$, and
let $A_0 = \{a_1, a'_1, u_2 \}$.  If $b_1$ is a witness for
$\fB^{(n)}_1 \models \phi$, then there exists $a \in A_0$ such that 
$a$ is a witness for $\fA^{(n)} \models \phi$.
\label{claim:theories2}
\end{claim}
\begin{proof}
By induction on $n$.  The cases $n=2$ and $n=3$ are checked by brute
force. (We used a computer.) Note in passing that the condition that
$\phi$ is not $\bar{\delta}_1$ is required for these cases.  Suppose
now the claim holds for $n = m \geq 3$; we show that it holds for $n =
m+1$. If $e$ is any e-term, let $\hat{e}$ be the result of replacing
any occurrence of the unary atom $p_{m+1}$ by the unary atom $p_m$.
Writing $\phi = \exists(e,f)$, let $\hat{\phi} =
\exists(\hat{e},\hat{f})$.  Note that, since $m \geq 3$ and $\phi \neq
\bar{\delta}_1$, we have $\hat{\phi} \neq \bar{\delta}_1$.  The
following facts are obvious (see Figs.~\ref{fig:rPlusProof}
and~\ref{fig:rPlusProofBi}):
\begin{eqnarray*}
a \in e^{\fA^{(m+1)}} & \mbox{iff} & a \in \hat{e}^{\fA^{(m)}}
 \qquad \text{for all $a \in A_0$}\\
b_1 \in e^{\fB^{(m+1)}_1} & \mbox{iff} & b_1 \in \hat{e}^{\fB^{(m)}_1}.
\end{eqnarray*}
It follows that, if $b_1$ is a witness for $\fB^{(m+1)}_1 \models
\phi$, then $b_1$ is a witness for $\fB^{(m)}_1 \models \hat{\phi}$,
whence, by inductive hypothesis, there exists $a \in A_0$ such that
$a$ is a witness for $\fA^{(m)} \models \hat{\phi}$, whence there exists
$a \in A_0$ such that $a$ is a witness for $\fA^{(m+1)} \models
\phi$. This completes the induction.
\end{proof}
\begin{claim}
Let $\phi$ be any existential formula other than $\bar{\gamma}$, and
let $A_0 = \{u_0, u_2, u_3\}$. If $u_5$ is a witness for 
$\fB^{(n)}_1 \models \phi$, then there 
exists $a \in A_0$ such that $a$ is a
witness for $\fA^{(n)} \models \phi$.
\label{claim:theories3}
\end{claim}
\begin{proof}
Identical in structure to the proof of Claim~\ref{claim:theories2}.
\end{proof}

We can now prove Lemma~\ref{lma:Bi} in the case $i=
1$. Claim~\ref{claim:theories1} shows that any existential formula
true in $\fA^{(n)}$ is true in $\fB^{(n)}_1$, and moreover, that if
$\phi$ is an existential formula true in $\fB^{(n)}_1$ but false in
$\fA^{(n)}$, then either either $u_5$ or $b_1$ is a witness. But these
possibilities are ruled out---except in the cases $\phi =
\bar{\gamma}$ and $\phi = \bar{\delta}_1$---by
Claims~\ref{claim:theories2} and~\ref{claim:theories3}, respectively.
Finally, if $\phi$ is universal (and not equal to $\gamma$ or
$\delta_1$), then $\bar{\phi}$ is existential
(and not equal to $\bar{\gamma}$ or $\bar{\delta_1}$).  The foregoing analysis
shows that $\fA^{(n)}$ and $\fB^{(n)}_1$ agree on the truth value of
$\bar{\phi}$, hence on the truth value of $\phi$.

We have thus proved the lemma for $i= 1$ and all values of $n$.  Fix
$n \geq 3$, and consider now the structure $\fB^{(n)}_2$, which
differs from $\fB^{(n)}_1$ only in that the third $r$-chain has been
extended from one to two elements.  By inspection of
Fig.~\ref{fig:rPlusProofBi}, the only effect on the set of sentences
made true is to restore the truth of $\delta_1$ and to falsify
$\delta_2$. That is:
\begin{eqnarray*}
{\rm Th}^*\left( \fB^{(n)}_{2} \right) & = &
\left( {\rm Th}^*\left( \fB^{(n)}_1 \right) \setminus \{\bar{\delta}_1, \delta_{2}\}
\right) \cup \{\delta_1, \bar{\delta}_{2}\}\\
\ & = & 
\left( {\rm Th}^*\left( \fA^{(n)} \right) \setminus \{\gamma, \delta_{2}\}
\right) \cup \{\bar{\gamma}, \bar{\delta}_{2}\}.
\end{eqnarray*}
Proceeding in the same way, we have, for all $i$ ($1 \leq i < n -1$),
\begin{eqnarray*}
{\rm Th}^*\left( \fB^{(n)}_{i+1} \right) & = &
\left( {\rm Th}^*\left( \fB^{(n)}_i \right) \setminus \{\bar{\delta}_i, \delta_{i+1}\}
\right) \cup \{\delta_i, \bar{\delta}_{i+1}\}\\
\ & = & 
\left( {\rm Th}^*\left( \fA^{(n)} \right) \setminus \{\gamma, \delta_{i+1}\}
\right) \cup \{\bar{\gamma}, \bar{\delta}_{i+1}\}.
\end{eqnarray*}
This proves the lemma.
\end{proof}

Fixing $n > 3$, for all $i$ and $j$ ($1 < i < j < n$) define
\begin{eqnarray*}
\Delta^{(n)}_{i,j} & = & \Gamma^{(n)} \setminus \{\delta_i, \delta_j \}\\
\Delta^{*(n)}_{i,j} & = & \Gamma^{*(n)} \setminus \{\delta_i, \delta_j \}.
\end{eqnarray*}
Again $(n)$-superscripts are omitted where possible for clarity.
\begin{lemma}
Let $\theta$ be an $\cR^\dagger$-formula and $\theta^*$ an
$\cR^{*\dagger}$-formula.  For all $i$, $j$ \textup{(}$1 < i < j <
n$\textup{)}, if $\Delta_{i,j} \models \theta$, then $\theta \in
\Delta_{i,j}$. Likewise, if $\Delta^*_{i,j} \models \theta^*$, then
$\theta^* \in \Delta^*_{i,j}$.
\label{lma:DeltaIJ}
\end{lemma}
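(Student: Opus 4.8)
The plan is to prove the contrapositive of both statements: I will show that whenever a formula lies outside $\Delta_{i,j}$ (respectively $\Delta^*_{i,j}$), there is a model of $\Delta_{i,j}$ (respectively $\Delta^*_{i,j}$) that falsifies it. Since $\models$ means truth in \emph{every} model, exhibiting a single counter-model for each such formula suffices to defeat entailment. The argument mirrors the use of the structures $\fA_i$ in the proof of Claim~\ref{claim:R}, except that here the required counter-models are already supplied by Lemma~\ref{lma:Bi}.

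First I would record that $\Gamma$ is $\cR^\dagger$-complete, so for every $\cR^\dagger$-formula $\theta$ exactly one of $\theta$, $\bar{\theta}$ lies in $\Gamma$; and since $\delta_i,\delta_j \in {\rm Th}(\fA)$ are distinct from $\gamma$, we may write $\Delta_{i,j} = ({\rm Th}(\fA)\setminus\{\gamma,\delta_i,\delta_j\})\cup\{\bar{\gamma}\}$. The key observation, read off from Lemma~\ref{lma:Bi} together with $i\neq j$, is that \emph{both} $\fB_i$ and $\fB_j$ are models of $\Delta_{i,j}$: indeed ${\rm Th}(\fB_i)$ agrees with ${\rm Th}(\fA)$ except that $\gamma,\delta_i$ are replaced by $\bar{\gamma},\bar{\delta_i}$, and every formula of $\Delta_{i,j}$ survives this replacement, giving $\Delta_{i,j}\subseteq{\rm Th}(\fB_i)$ (and symmetrically for $\fB_j$). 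Moreover Lemma~\ref{lma:Bi} pins down the truth values at the flipped formulas: $\fB_i\models\bar{\delta_i}$ and $\fB_i\models\delta_j$, while $\fB_j\models\delta_i$ and $\fB_j\models\bar{\delta_j}$.

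With these facts in hand the proof is a short case analysis on a formula $\theta\notin\Delta_{i,j}$. If $\theta=\delta_i$ (resp.\ $\theta=\delta_j$) then $\fB_i$ (resp.\ $\fB_j$) is a model of $\Delta_{i,j}$ in which $\theta$ is false. Otherwise $\theta\notin\{\delta_i,\delta_j\}$ forces $\bar{\theta}\in\Gamma$; if in addition $\bar{\theta}\notin\{\delta_i,\delta_j\}$, then $\bar{\theta}\in\Delta_{i,j}$, so the model $\fB_i$ satisfies $\bar{\theta}$ and hence falsifies $\theta$; the only remaining possibilities are $\theta=\bar{\delta_i}$, falsified by $\fB_j$, and $\theta=\bar{\delta_j}$, falsified by $\fB_i$. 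In every case $\Delta_{i,j}\not\models\theta$, which is the contrapositive of the claim. The starred statement is proved verbatim, replacing ${\rm Th}$, $\Gamma$, $\Delta_{i,j}$ by ${\rm Th}^*$, $\Gamma^*$, $\Delta^*_{i,j}$ and invoking the second equation of Lemma~\ref{lma:Bi}.

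I expect no genuine obstacle here: all the structural work has already been carried out in Lemma~\ref{lma:Bi}, and the present lemma is essentially the bookkeeping observation that $\fB_i$ flips $\delta_i$ alone and $\fB_j$ flips $\delta_j$ alone, so the two structures independently witness the falsity of every formula lying outside $\Delta_{i,j}$. The only point that demands a little care is confirming $\fB_i,\fB_j\models\Delta_{i,j}$ --- that is, that deleting precisely $\delta_i,\delta_j$ from $\Gamma$ pushes each of the single-formula discrepancies of $\fB_i$ and of $\fB_j$ outside $\Delta_{i,j}$ --- but this is immediate from the set-identities displayed above.
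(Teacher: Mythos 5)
Your proof is correct and follows essentially the same route as the paper's: both extract from Lemma~\ref{lma:Bi} the two facts $\fB_i \models \Delta_{i,j} \cup \{\bar{\delta}_i,\delta_j\}$ and $\fB_j \models \Delta_{i,j} \cup \{\delta_i,\bar{\delta}_j\}$, use the $\cR^\dagger$-completeness of $\Gamma$ (together with satisfiability of $\Delta_{i,j}$) to reduce to the four formulas $\delta_i,\delta_j,\bar{\delta}_i,\bar{\delta}_j$, and dispatch each with the appropriate counter-model. Your case split is organized slightly differently but is the identical argument.
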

\begin{proof}
We prove the first statement only; the proof of the second is similar.
Given the equations $\Gamma^{(n)} = \left( {\rm
Th}\left(\fA^{(n)}\right) \setminus \{ \gamma \} \right) \cup \{
\bar{\gamma} \}$ and $\Delta^{(n)}_{i,j} = \Gamma^{(n)} \setminus
\{\delta_i, \delta_j \}$, Lemma~\ref{lma:Bi} yields
\begin{align}
& \fB_{i} \models \Delta_{i,j} \cup \{ \bar{\delta}_i, \delta_j \} 
\label{eq:Bi}\\ 
& \fB_{j} \models \Delta_{i,j} \cup \{ \delta_i, \bar{\delta}_j \}. 
\label{eq:Bj} 
\end{align}
Certainly, then, $\Delta_{i,j}$ is satisfiable; hence, the only
$\theta$ we need consider are those such that neither $\theta$ nor
$\bar{\theta}$ is in $\Delta_{i,j}$.  By the completeness of
$\Gamma^{(n)}$, this entails $\theta \in \{ \delta_i, \bar{\delta}_j,
\bar{\delta}_i, \delta_j \}$. In the first two cases, \eqref{eq:Bi}
shows that $\Delta_{i,j} \not \models \theta$; in the third and fourth
cases, \eqref{eq:Bj} shows the same.
\end{proof}

\begin{lemma}
If $\X$ is a finite set of syllogistic rules in $\cR^\dagger$, 
then there exists $n$ such that, for any absurdity
$\bot$ in $\cR^\dagger$, $\Gamma^{(n)} \not \vdash_\X \bot$.
If $\X$ is a finite set of syllogistic rules in $\cR^{*\dagger}$, 
then there exists $n$ such that, for any absurdity
$\bot$ in $\cR^{*\dagger}$, $\Gamma^{(n)} \not \vdash_\X \bot$.
\label{lma:noEntail}
\end{lemma}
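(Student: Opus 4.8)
The plan is to follow the strategy of Theorem~\ref{theo:noSyllR}: for suitably large $n$, show that every formula derivable from $\Gamma^{(n)}$ under $\vdash_\X$ already lies in $\Gamma^{(n)}$, and then observe that $\Gamma^{(n)}$ contains no absurdity. I shall treat the $\cR^\dagger$-case explicitly; the $\cR^{*\dagger}$-case is identical, using $\Gamma^{*(n)}$, $\Delta^{*}_{i,j}$, and the second assertion of Lemma~\ref{lma:DeltaIJ}.

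First I fix $n$. Let $M$ be the maximum number of antecedents of any rule in the finite set $\X$, and choose $n \geq M + 4$ (so in particular $n > 3$, as required for $\Delta_{i,j}$ to be defined). The formulas $\delta_k = \forall(p_k, \exists(p_{k+1}, r))$ with $2 \leq k \leq n-1$ are the $n-2$ ``breakable'' premises, and the bound on $n$ guarantees that any collection of at most $M$ formulas leaves at least two of the indices in $\{2, \ldots, n-1\}$ untouched. I then prove, by induction on the length of a $\vdash_\X$-derivation from $\Gamma^{(n)}$, that every formula derivable from $\Gamma^{(n)}$ belongs to $\Gamma^{(n)}$. The base case is a premise, which lies in $\Gamma^{(n)}$ by definition.

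For the inductive step, consider a rule instance with antecedents $\theta_1, \ldots, \theta_m$ (where $m \leq M$, the case $m = 0$ of premise-free rules included) and consequent $\theta$, each $\theta_k$ being already known, by the inductive hypothesis, to lie in $\Gamma^{(n)}$. At most $m \leq M$ of the indices $k \in \{2, \ldots, n-1\}$ can satisfy $\delta_k \in \{\theta_1, \ldots, \theta_m\}$, so by the choice of $n$ there exist $i, j$ with $1 < i < j < n$ such that $\delta_i, \delta_j \notin \{\theta_1, \ldots, \theta_m\}$. Hence $\{\theta_1, \ldots, \theta_m\} \subseteq \Delta_{i,j}$. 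Since the rule is sound, $\{\theta_1, \ldots, \theta_m\} \models \theta$, and therefore $\Delta_{i,j} \models \theta$; by Lemma~\ref{lma:DeltaIJ} we conclude $\theta \in \Delta_{i,j} \subseteq \Gamma^{(n)}$, which closes the induction. Finally, no member of $\Gamma^{(n)}$ is an absurdity, since every formula of ${\rm Th}(\fA^{(n)})$ is satisfiable (being true in $\fA^{(n)}$) and $\bar{\gamma}$ is satisfiable (being true in each $\fB_i$), whereas absurdities are unsatisfiable. Thus no absurdity is derivable, i.e.\ $\Gamma^{(n)} \not\vdash_\X \bot$ for every absurdity $\bot$.

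I expect the conceptual core to be the point that forces the \emph{two}-premise deletion $\Delta_{i,j}$ rather than a single deletion. Each witnessing structure $\fB_i$ toggles only the one premise $\delta_i$ (together with $\gamma$), so deleting $\delta_i$ alone would leave open the possibility that $\Delta_i \models \bar{\delta}_i$, and the induction would fail precisely when the consequent is $\bar{\delta}_i$. Deleting two premises furnishes two distinct counter-models $\fB_i$ and $\fB_j$, which together certify that none of $\delta_i, \bar{\delta}_i, \delta_j, \bar{\delta}_j$ is entailed by $\Delta_{i,j}$; this is exactly what Lemma~\ref{lma:DeltaIJ} packages, and it is what keeps each consequent inside $\Gamma^{(n)}$. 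By comparison, the counting argument (that at most $M$ antecedents cannot block every admissible pair of indices once $n$ is large) and the verification that $\Gamma^{(n)}$ is absurdity-free are routine.
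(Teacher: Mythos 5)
Your proof is correct and takes essentially the same route as the paper's: choose $n$ large relative to the maximal number of antecedents in $\X$, show by induction on derivations (using soundness of the rules and Lemma~\ref{lma:DeltaIJ}) that every formula derivable from $\Gamma^{(n)}$ already lies in $\Gamma^{(n)}$, and conclude by noting $\Gamma^{(n)}$ contains no absurdity. If anything, your choice $n \geq M+4$ is more careful than the paper's $n = k+2$, which in the extreme case where a rule instance's antecedents are exactly $\delta_2,\ldots,\delta_{n-1}$ leaves no free pair $i<j$; your counting closes that small slip.
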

\begin{proof}
We prove the lemma for $\cR^\dagger$; the proof for $\cR^{*\dagger}$
is similar.  Since the number of rules in $\X$ is finite, let $k$ be
the largest number of antecedents in any of these rules, and fix $n =
k+2$.  Since no instance of a rule of $\X$ has more than $n-2$
antecedents, any such set of antecedents included in $\Gamma^{(n)}$
must also be included in $\Delta^{(n)}_{i,j}$ for some $1 < i < j <
n$. Let $\theta$ be the consequent of this rule-instance. Since
$\vdash_\X$ is sound, $\Delta^{(n)}_{i,j} \models \theta$, whence
$\theta \in \Delta^{(n)}_{i,j} \subseteq \Gamma^{(n)}$, by
Lemma~\ref{lma:DeltaIJ}. A simple induction on the lengths of proofs
then shows that any formula derived from $\Gamma^{(n)}$ is in
$\Gamma^{(n)}$. But $\Gamma^{(n)}$ contains no absurdity.
\end{proof}

We now have the promised strengthening of
Corollary~\ref{cor:noSyllR+R*+}.  In some sense,
Lemma~\ref{lma:noEntail} has done all the work; for (RAA) is, in the
context of a $\cR^\dagger$-complete (or $\cR^{*\dagger}$-complete) set
of premises, essentially redundant. 
\begin{theorem}
There exists no finite set $\X$ of syllogistic rules in either $\cR^\dagger$
or $\cR^{*\dagger}$ such that $\Vdash_\X$ is both sound and complete.
\label{theo:R+NoWay}
\end{theorem}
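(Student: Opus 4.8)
The plan is to assume, for contradiction, that some finite $\X$ makes $\Vdash_\X$ sound and complete (I treat $\cR^\dagger$; the $\cR^{*\dagger}$ case is identical, reading $\Gamma^{*(n)}$ for $\Gamma^{(n)}$ and invoking the second halves of Lemmas~\ref{lma:GammaUnsat} and~\ref{lma:noEntail}). First I would fix $n$ as supplied by Lemma~\ref{lma:noEntail}, so that $\Gamma^{(n)} \not\vdash_\X \bot$ for every absurdity $\bot$; call this property \emph{direct consistency}. Recall also that $\Gamma^{(n)}$ is $\cR^\dagger$-complete and, by Lemma~\ref{lma:GammaUnsat}, unsatisfiable. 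Since $\Gamma^{(n)}$ has no model, $\Gamma^{(n)} \models \bot$ holds vacuously, so completeness of $\Vdash_\X$ yields an indirect derivation $\Gamma^{(n)} \Vdash_\X \bot$. The entire burden is then to show that, \emph{because} $\Gamma^{(n)}$ is complete, this indirect derivation collapses to a direct one, i.e.\ $\Gamma^{(n)} \vdash_\X \bot$ --- contradicting direct consistency.

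The heart of the matter is a lemma saying that (RAA) is redundant over a complete premise set. Write $D = \{\theta : \Gamma^{(n)} \vdash_\X \theta\}$ for the direct closure of $\Gamma^{(n)}$; note $\Gamma^{(n)} \subseteq D$ and that $D$ is closed under instances of the rules in $\X$ (apply clause 2 of the definition of $\vdash_\X$ with each $\Theta_i = \Gamma^{(n)}$). I would prove, by induction on the structure of indirect derivations, the invariant: \emph{for any derivation of $\Theta \Vdash_\X \theta$, if $\Theta \subseteq D$ then either $\theta \in D$ or $D$ contains an absurdity.} The axiom case is immediate, since $\theta \in \Theta \subseteq D$. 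For a rule application, each antecedent's premise set is contained in $\Theta \subseteq D$, so by the induction hypothesis every derived antecedent lies in $D$ (unless an absurdity is already in $D$, in which case we are done), and closure of $D$ under $\X$ places the consequent in $D$.

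The decisive case is (RAA), which derives $\theta$ from a subderivation $\Theta \cup \{\bar\theta\} \Vdash_\X \bot'$: here I use the completeness of $\Gamma^{(n)}$ to split. If $\theta \in \Gamma^{(n)} \subseteq D$, the conclusion already lies in $D$ and (RAA) was unnecessary. Otherwise $\bar\theta \in \Gamma^{(n)} \subseteq D$, so the subderivation's premise set $\Theta \cup \{\bar\theta\}$ is itself contained in $D$, and the induction hypothesis puts the absurdity $\bot'$ into $D$. Applying the invariant to the derivation $\Gamma^{(n)} \Vdash_\X \bot$ (whose premise set $\Gamma^{(n)}$ is trivially inside $D$) forces an absurdity into $D$, that is $\Gamma^{(n)} \vdash_\X \bot$, which is the sought contradiction with Lemma~\ref{lma:noEntail}. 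Note that soundness of $\Vdash_\X$ is used only indirectly: it makes the rules of $\X$ sound, which is what the proof of Lemma~\ref{lma:noEntail} requires.

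The one genuinely delicate point is exactly the (RAA) step above: this is where $\cR^\dagger$-completeness of $\Gamma^{(n)}$ is indispensable, since it guarantees that each assumption $\bar\theta$ discharged by (RAA) is either the very conclusion we seek (so the rule is superfluous) or already a premise (so the temporary assumption adds nothing and the subderivation never leaves $D$). Everything else --- direct consistency for large $n$, unsatisfiability, and the passage from completeness of $\Vdash_\X$ to $\Gamma^{(n)} \Vdash_\X \bot$ --- is delivered by the preceding lemmas, so, as the remark before the theorem anticipates, Lemma~\ref{lma:noEntail} has indeed done the substantive work.
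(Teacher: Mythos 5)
Your proposal is correct, and it shares the paper's skeleton exactly: fix $n$ from Lemma~\ref{lma:noEntail} (whose proof, as you rightly note, is where soundness of the rules in $\X$ gets used), invoke Lemma~\ref{lma:GammaUnsat} to get $\Gamma^{(n)} \models \bot$ and hence $\Gamma^{(n)} \Vdash_\X \bot$ from completeness, and then exploit $\cR^\dagger$-completeness of $\Gamma^{(n)}$ to show that (RAA) cannot help. The difference lies in how this last collapse is executed. The paper argues by minimal counterexample: among indirect derivations of an absurdity from $\Gamma^{(n)}$, take one with the fewest (RAA) applications $k>0$, inspect the last (RAA) (concluding $\bar{\theta}$, discharging $\theta$), deduce $\theta \notin \Gamma^{(n)}$ from minimality, hence $\bar{\theta} \in \Gamma^{(n)}$ by completeness, and replace the subderivation ending at that (RAA) by the premise $\bar{\theta}$, strictly reducing $k$ --- a contradiction. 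You instead prove, by structural induction on derivations, an invariant phrased via the direct closure $D = \{\theta : \Gamma^{(n)} \vdash_\X \theta\}$: any derivation of $\Theta \Vdash_\X \theta$ with $\Theta \subseteq D$ puts $\theta$ in $D$ or an absurdity in $D$, the (RAA) case splitting on $\theta \in \Gamma^{(n)}$ versus $\bar{\theta} \in \Gamma^{(n)}$ exactly as the paper does. Your formulation buys a clean, reusable statement (indirect derivability from an $\cF$-complete, directly consistent premise set collapses to direct derivability), and quantifying over all $\Theta \subseteq D$ is precisely the strengthening needed to pass through discharged hypotheses; it also avoids any reasoning about minimality or about which (RAA) application is ``last.'' The paper's derivation-surgery argument is shorter but more informal. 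Both are sound, and the essential insight --- completeness of $\Gamma^{(n)}$ makes each discharged assumption either superfluous or already a premise --- is identical.
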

\begin{proof}
We prove the lemma for $\cR^\dagger$; the proof for $\cR^{*\dagger}$
is almost identical.  Suppose $\X$ is a finite set of syllogistic
rules in $\cR^\dagger$, with $\Vdash_\X$ sound.  Let $n$ and
$\Gamma^{(n)}$ be as in Lemma~\ref{lma:noEntail}, 
and let $\bot$ be any absurdity in
$\cR^\dagger$. Thus, by Lemma~\ref{lma:GammaUnsat}, $\Gamma^{(n)} \models \bot$, 
but, by Lemma~\ref{lma:noEntail},  $\Gamma^{(n)}
\not \vdash_\X \bot$. It suffices to show that $\Gamma^{(n)} \not
\Vdash_\X \bot$.

Suppose then, for contradiction, that there is an indirect derivation
of an absurdity from $\Gamma^{(n)}$, using the rules $\X$. Consider
such a derivation in which the number $k$ of applications of (RAA) is
minimal. Since $\Gamma^{(n)} \not \vdash_\X \bot$, we know that $k
>0$. Consider the last application of (RAA) in this
derivation, which derives an $\cR^\dagger$-formula, say, $\bar{\theta}$,
discharging a premise $\theta$. That is, there is an (indirect)
derivation of some absurdity $\bot'$ from 
$\Gamma^{(n)} \cup \{\theta\}$, employing fewer
than $k$ applications of (RAA). 
By minimality of $k$,  $\theta \not \in
  \Gamma^{(n)}$, and so, by the $\cR^\dagger$-completeness of
  $\Gamma^{(n)}$, $\bar{\theta} \in \Gamma^{(n)}$. But then we can
  replace our original derivation of of $\bar{\theta}$ with a trivial
  derivation, so obtaining a derivation of $\bot$ from
  $\Gamma^{(n)}$ with fewer than $k$ applications of (RAA), a contradiction.
\end{proof}

The upshot: unlike in the case of $\cR^*$, the full power of
{\em reductio} does not help with $\cR^\dagger$ and $\cR^{*\dagger}$.

\section{Relation to other work}
\paragraph{Systems related to $\cS$ and $\cS^\dagger$}
Modern treatments of the the syllogistic can all be traced back to
{\L}ukasiewicz~\cite{logic:lukasiewicz}, where a logic is presented in
which formulas of the forms~\eqref{eq:syntaxS} are treated as atoms in
a propositional calculus.  {\L}ukasiewicz provides a collection of
axiom-schemata which, together with the usual axioms of propositional
logic, yields a complete proof-system for the resulting language (a
strict superset of~$\cS$).  The completeness of this system was shown
independently by Westerst{\aa}hl~\cite{westerstahl:89}; note that it
differs from the syllogistic fragments considered in this
paper, in that it is embedded within propositional
logic. Other commentators, for example, Smiley~\cite{logic:smiley73},
Corcoran~\cite{logic:corcoran} and Martin~\cite{logic:martin97},
objecting to {\L}ukasiewicz' exegesis of Aristotle, provide
proof-systems in the form of syllogistic rules similar to those of
$\S$, again proving completeness of their respective systems. These
systems are not {\em direct} syllogistic systems, since they all
employ {\em reductio ad absurdum}.  Our emphasis on direct syllogistic
systems, and especially the formulation of {\em
  refutation-completeness} is new, and motivated by the results we
have obtained on relational extensions of the syllogistic.  In
addition, the authors mentioned above do not treat the fragment
$\cS^\dagger$.  The completeness of $\cS$ itself appears as Theorem
6.2 in Moss~\cite{logic:moss08}, where various fragments of $\cS$ are
also axiomatized.  The completeness of $\S^\dagger$ is proved in
Moss~\cite{logic:moss07}, using a somewhat different treatment.

\paragraph{Systems related to $\cR$}
The first presentation of a complete proof-system for a fragment close
to the relational syllogistic seems to be Nishihara, Morita, and
Iwata~\cite{logic:nishihara+morita90}. This logic is in effect a
relational version of {\L}ukasiewicz', in that formulas roughly
similar to those of the forms~\eqref{eq:syntaxR} are treated as atoms
of a propositional calculus.  The authors provide axiom-schemata
which, together with the usual axioms of propositional logic, yield a
complete proof-system for the language in question.  Actually, the
propositional atoms in this language are allowed to feature $n$-ary
predicates for all $n \geq 1$. However, the rather strange
restrictions on quantifier-scope (existentials must always outscope
universals), mean that this language is primarily of interest for
atoms featuring only unary and binary predicates; these atoms (and
their negations) then essentially correspond the formulas of our
fragment $\cR$. We mention in passing that Nishihara {\em et al.}'s
language includes individual constants; but in practice, this leads to
no useful increase in expressive power, and we ignore this feature.
Because it involves such an expressive fragment, which includes full
propositional logic, their proof certainly yields no upper complexity
bound comparable to that of Theorem~\ref{theo:Rcomplexity}.

A logic inspired by the system of Nishihara~{\em et al} may be found
in Moss~\cite{logic:mosstoappear}.  Roughly speaking, that logic is
the negation-free fragment of $\cR$, corresponding to sentence-forms
involving the words \emph{some} and \emph{all}, but not
\emph{no}. However, it extends $\cR$ in that it allows both readings
of scope-ambiguous sentences. For example, a sentence like {\sf Every
painter admires some artist} has both a subject wide scope reading and
a subject narrow scope reading.  The fact that the subject wide scope
reading of logically implies the subject narrow reading is then a rule
of inference in Moss' system.  No contradictions are possible in the
system.

Moss~\cite{logic:mosstoappear} also analyses a syllogistic logic with
negated \emph{nouns} (not
verbs) and only using \emph{All}.  So in our notation, its formulas
are $\forall(l,m)$ and $\forall(l,\forall(m,r))$, with $l$ and $m$
unary literals.  In addition to rules we have seen, it uses the
following additional rules which might be taken to be forms of the law
of the excluded middle:
$$   \infer{\forall(p,\forall(q,r))}{\forall(n,\forall(q,r)) & \forall(\nbar,\forall(q,r))}
\qquad
\infer{\forall(p,\forall(q,r)).}{\forall(p,\forall(n,r)) & \forall(p,\forall(\nbar,r))}
$$
The system is completed by a  rule with three premises:
$$\infer{\forall(p,\forall(q,r)).}
{\forall(p,\forall(n,r)) &    \forall(o,\forall(q,r)) &  \forall(\obar,\forall(\nbar,r)) }
$$

\paragraph{McAllester and Givan's fragment}
We have already had occasion to mention the results of McAllester and
Givan~\cite{logic:mcA+G92} in connection with our fragment $\cR^*$.
McAllester and Givan present a ``Montagovian syntax'' for a
first-order language over a signature of unary and binary
predicates and individual constants, together with (what they call)
its ``quantifier-free'' fragment.  In fact, this latter fragment is
like our fragment $\cR^*$, except that its `class-terms' (the
equivalent of our c-terms) can be nested to arbitrary depth. Thus, in
McAllester and Givan's language, the formula
\begin{equation*}
\forall(\exists(\exists(\mbox{\sf man},\mbox{\sf kill}),\mbox{\sf kill}),
\exists(\exists(\mbox{\sf animal},\mbox{\sf kill}),\mbox{\sf kill}))
\end{equation*}
expresses the proposition that, as de Morgan might have put it, he who
kills one who kills a man kills one who kills an animal.  However, the
ability to embed c-terms to arbitrary depth is easily seen not to
confer any essential increase in expressive power of sets of formulas,
since deeply nested class-terms can always be `defined out' by
introducing new unary predicates. (This is reflected in the fact that
the satisfiability problem for $\cR^*$ is no different from that of
McAllester and Givan's fragment.)  Interestingly, McAllester and Givan
show that the satisfiability of a set $\Gamma$ of $\cR^*$-formulas can
be determined in polynomial time if, for each class-term $c$ occurring
in $\Gamma$, $\Gamma$ contains either the formula $\exists(c,c)$ or
the formula $\forall(c, \bar{c})$. (Such formula sets are said to {\em
  determine existentials}).  McAllester and Givan provide a
syllogistic-like system for this fragment, which is complete for sets
of formulas which determine existentials. 

A sub-fragment of the  McAllester and
Givan fragment was considered in Moss~\cite{logic:mosstoappear}.
In our terms, the formulas would be of the forms
$\forall(c,d)$ and $\exists(c,d)$, where $c$ and $d$ are
class-terms of the
forms $p$ (an atom), $\exists(c,r)$, or $\forall(d,r)$.
Thus the system lacks negation.
The  rules are  versions of rules we have seen in Section~\ref{sec:noProofR*}:
(T), (I), (B), (D), (J), (K), (L), and (II); it also employs a rule allowing
for reasoning-by-cases,
which is not a syllogistic rule in the sense of this paper.
An informal example shows that the system allows non-trivial inferences,
and inferences with more than one verb.
\begin{equation*}
\infer{\mbox{\sf All who dislike all who respect  all porcupines
 dislike all who  respect all mammals}.}
{\infer{\mbox{\sf All who respect  all mammals respect all porcupines}
 }{\mbox{\sf All porcupines are mammals}
 }
 }
\end{equation*}

\paragraph{Modern revivals of term logic}
In recent decades, various logicians have challenged the dominant
paradigm of quantification-theory by seeking to rehabilitate {\em term
logic}---essentially, the extension of the traditional presentation of
the syllogistic to the case of polyadic relations (see,
e.g.~Sommers~\cite{logic:Sommers82},
Englebretsen~\cite{logic:englebretsen81}). Broadly, the strategy
adopted by these term-logicians has been to stress the {\em
expressiveness} of the new term-logical syntax, and its ability to
represent all that is represented in quantificational logic. (For a
very clear account, see, Michael Lockwood's Appendix G to
Sommers~\cite{logic:Sommers82}, pp.~426--456.) By adopting the new,
term-logical framework---so goes the argument---we obtain a formal
system with all the expressive power of first-order logic, but with
the added (alleged) advantage of greater fidelity to the structure of
natural language.  The outlook of the present paper is rather
different, however.  True, the fragments considered here could be
simply and elegantly expressed using the syntax of term-logic. The
resulting formulas would exhibit only cosmetic differences to the
syntax introduced in Section~\ref{sec:preliminaries}; and, of course,
the associated proof-systems would be, modulo these cosmetic
differences, unaffected. However, we have been at pains to stress the
{\em inexpressiveness} of the fragments studied above, because with
inexpressiveness comes low computational complexity. And from the
point of view of computational complexity, the issue of the ruling
syntactic r\'{e}gime---predicate logic or term logic---is immaterial.

\section{Conclusions}
\label{sec:conclusions}
In this paper, we have investigated the availability of
syllogism-like
proof-systems for various extensions of the traditional syllogistic,
with special emphasis on the need for the rule of {\em reductio ad absurdum}; in
addition, we have derived tight complexity bounds for all the logics
investigated.

These logics are: (i) $\cS$, which corresponds to the traditional
syllogistic; (ii) $\cS^\dagger$, which extends $\cS$ with negated nouns;
(iii) $\cR$, which extends $\cS$ with transitive verbs; (iv) $\cR^\dagger$,
which extends $\cS$ with both these constructions; (v) $\cR^*$, which
extends $\cR$ by allowing subject noun phrases to contain
relative clauses; and (vi) $\cR^{*\dagger}$, which extends $\cR^*$ with
negated nouns. The inclusion relations between these systems, together with
the familiar \emph{two-variable fragment} of first-order logic, are
shown in Figure~\ref{figure-summary}.

The associated table lists these fragments together with the results
we obtained on the existence of syllogistic systems and the complexity
of determining validity of sequents.  Regarding the existence of
syllogistic systems, we showed that: (i) $\cS$ and $\cS^\dagger$ both
have sound and complete direct syllogistic systems (i.e.~systems
containing no rule of {\em reductio ad absurdum}); (ii) $\cR$ has a
sound and {\em refutation-complete} direct syllogistic system
(i.e.~one which becomes complete if {\em reductio ad absurdum} is
allowed as a single, final step), but no sound and complete direct
syllogistic system; (iii) $\cR^*$ has a sound and complete {\em
  indirect} syllogistic system (i.e.~one allowing unrestricted use of
{\em reductio ad absurdum}), but---unless \PTIME$=$\NPTIME---no sound
and refutation-complete direct syllogistic system; (iv) neither
$\cR^\dagger$ nor $\cR^{*\dagger}$ has even an indirect syllogistic
system that is sound and complete. Regarding complexity,
we showed that: (i) the problem of determining the validity of a
sequent in any of $\cS$, $\cS^\dagger$ or $\cR$ is
\NLOGSPACE-complete; (ii) the problem of determining the validity of a
sequent in $\cR^*$ is co-\NPTIME-complete; (iii) the problem of determining
the validity of a sequent in either of $\cR^\dagger$ or $\cR^{*\dagger}$ is
\EXPTIME-complete. 
\begin{figure}
\begin{minipage}{10cm}
\begin{minipage}{3cm}
\begin{picture}(60,80)
\put(40,0){$\cS$}
\put(20,20){$\cR$}
\put(0,40){$\cR^*$}
\put(60,20){$\cS^\dagger$}
\put(40,40){$\cR^\dagger$}
\put(20,60){$\cR^{*\dagger}$}
\put(20,80){$\FO^2$}

\put(48,8){\rotatebox{45}{$\subseteq$}}
\put(28,28){\rotatebox{45}{$\subseteq$}}
\put(8,48){\rotatebox{45}{$\subseteq$}}

\put(28,12){\rotatebox{315}{$\supseteq$}}
\put(8,32){\rotatebox{315}{$\supseteq$}}

\put(48,32){\rotatebox{315}{$\supseteq$}}
\put(28,52){\rotatebox{315}{$\supseteq$}}

\put(22,70){\rotatebox{90}{$\subseteq$}}

\end{picture}
\end{minipage}
% \hspace{1.7cm}
\begin{minipage}{6cm}
{\small 
\begin{tabular}{|l|l|l|}
\hline
$\cS$ &direct, complete  & \NLOGSPACE\\
$\cS^\dagger$ & direct, complete   &   \NLOGSPACE \\
$\cR$ &  direct, refutation complete   & \NLOGSPACE  \\
$\cR^\dagger$ &  not even indirect  & \EXPTIME\\
$\cR^*$ & indirect, complete & Co-\NPTIME~\cite{logic:mcA+G92}  \\ 
$\cR^{*\dagger}$ &  not even indirect & \EXPTIME \\
$\FO^2$ &  & \NEXPTIME~\cite{logic:gkv97} \\
\hline
\end{tabular}
}
\end{minipage}
\end{minipage}
\caption{The six fragments studied in this paper
together with the two-variable fragment $\FO^2$ of
first-order logic. The table shows {\em
    strongest possible} results on the existence of syllogistic
  systems, together with tight complexity bounds. See
  Section~\ref{sec:conclusions} for an explanation.}
\label{figure-summary}
\end{figure}
\bibliographystyle{plain} 
\bibliography{relsyll}

\begin{thebibliography}{10}

\bibitem{logic:arnauld62}
Antoine Arnauld.
\newblock {\em Logic, or, the Art of Thinking (``The Port-Royal Logic'')}.
\newblock Bobbs-Merrill, 1964.
\newblock tr.~J.~Dickoff and P.~James (first published 1662).

\bibitem{logic:corcoran}
John Corcoran.
\newblock Completeness of an ancient logic.
\newblock {\em Journal of Symbolic Logic}, 37(4):696--702, 1972.

\bibitem{logic:deMorgan47}
A.~de~Morgan.
\newblock {\em Formal Logic: or, the calculus of inference, necessary and
  probable}.
\newblock Taylor and Walton, London, 1847.

\bibitem{deMorgan60}
Augustus de~Morgan.
\newblock On the syllogism, {Part IV}.
\newblock {\em Transactions of the Cambridge Philosophical Society},
  10:331--357, 1860.

\bibitem{logic:englebretsen81}
George Englebretsen.
\newblock {\em Three Logicians}.
\newblock Van Gorcum, Assen, 1981.

\bibitem{logic:gkv97}
E.~Gr{\"{a}}del, P.~Kolaitis, and M.~Vardi.
\newblock On the decision problem for two-variable first-order logic.
\newblock {\em Bulletin of Symbolic Logic}, 3(1):53--69, 1997.

\bibitem{logic:hkt00}
David Harel, Dexter Kozen, and Jerzy Tiuryn.
\newblock {\em Dynamic Logic}.
\newblock MIT Press, Cambridge, MA., 2000.

\bibitem{logic:lukasiewicz}
J.~Lukasiewicz.
\newblock {\em Aristotle's Syllogistic}.
\newblock Clarendon Press, Oxford, 2nd edition, 1957.

\bibitem{logic:martin97}
John~N. Martin.
\newblock Aristotle's natural deduction revisited.
\newblock {\em History and Philosophy of Logic}, 18(1):1--15, 1997.

\bibitem{logic:mcA+G92}
David~A. McAllester and Robert Givan.
\newblock Natural language syntax and first-order inference.
\newblock {\em Artificial Intelligence}, 56:1--20, 1992.

\bibitem{logic:merrill90}
Daniel~D. Merrill.
\newblock {\em {Augustus De Morgan} and the Logic of Relations}.
\newblock Kluwer Academic Publishers, Dordrecht, 1990.

\bibitem{logic:moss07}
Lawrence~S. Moss.
\newblock Syllogistic logic with complements.
\newblock Ms.~submitted for publication, February 2007.

\bibitem{logic:mosstoappear}
Lawrence~S. Moss.
\newblock Syllogistic logics with verbs.
\newblock {\em Journal of Logic and Computation}, to appear.

\bibitem{logic:moss08}
Lawrence~S. Moss.
\newblock Completeness theorems for syllogistic fragments.
\newblock In F.~Hamm and S.~Kepser, editors, {\em Logics for Linguistic
  Structures}. Mouton de Gruyter, to appear, 2008.

\bibitem{logic:nishihara+morita90}
Noritaka Nishihara, Kenichi Morita, and Shigenori Iwata.
\newblock An extended syllogistic system with verbs and proper nouns, and its
  completeness proof.
\newblock {\em Systems and Computers in Japan}, 21(1):760--771, 1990.

\bibitem{logic:papadimitriou94}
Christos~H. Papadimitriou.
\newblock {\em Computational Complexity}.
\newblock Addison-Wesley, Reading, MA, 1994.

\bibitem{logic:ph04}
Ian Pratt-Hartmann.
\newblock Fragments of language.
\newblock {\em Journal of Logic, Language and Information}, 13:207--223, 2004.

\bibitem{logic:smiley73}
T.J. Smiley.
\newblock What is a syllogism?
\newblock {\em Journal of Philosophical Logic}, 2:135--154, 1973.

\bibitem{logic:Sommers82}
Fred Sommers.
\newblock {\em The Logic of Natural Language}.
\newblock Clarendon Press, Oxford, 1982.

\bibitem{westerstahl:89}
Dag Westerst{\aa}hl.
\newblock Aristotelian syllogisms and generalized quantifiers.
\newblock {\em Studia Logica}, XLVIII(4):577--585, 1989.

\end{thebibliography}
\end{document}